\pgfplotsset{width = 5cm, compat = 1.17}
\title{Testing distributional equality for functional random variables}
\author{Bilol Banerjee}
\date{Theoretical Statistics and Mathematics Unit\\
      Indian Statistical Institute, Kolkata\\%
      Email : banerjeebilol@outlook.com\\
   \null }
\newtheorem{prop}{Proposition}[section]
\newtheorem{thm}{Theorem}[section]
\newtheorem{cor}{Corollary}
\newtheorem{rem}{Remark}
\newtheorem{lemmaA}{Lemma A.\ignorespaces}
\renewcommand{\P}{\mathbb{P}}
\newcommand{\R}{\mathbb{R}}
\newcommand{\E}{\mathbb{E}}
\renewcommand{\H}{\mathcal{H}}
\newcommand{\tikzcircle}[2][red,fill=red]{\tikz[baseline=-0.5ex]\draw[#1,radius=#2] (0,0) circle ;}
\definecolor{applegreen}{rgb}{0.55,0.71,0.0}
\begin{document}

\maketitle
\begin{abstract}

In this article, we present a nonparametric method for the general two-sample problem involving functional random variables modelled as elements of a separable Hilbert space ${\cal H}$. First, we present a general recipe based on linear projections to construct a measure of dissimilarity between two probability distributions on ${\cal H}$. In particular, we consider a measure based on the energy statistic and present some of its nice theoretical properties. A plug-in estimator of this measure is used as the test statistic to construct a general two-sample test. Large sample distribution of this statistic is derived both under null and alternative hypotheses. However, since the quantiles of the limiting null distribution are analytically intractable, the test is calibrated using the permutation method. We prove the large sample consistency of the resulting permutation test under fairly general assumptions. We also study the efficiency of the proposed test by establishing a new local asymptotic normality result for functional random variables. Using that result, we derive the asymptotic distribution of the permuted test statistic and the asymptotic power of the permutation test under local contiguous alternatives. This establishes that the permutation test is statistically efficient in the Pitman sense. Extensive simulation studies are carried out and a real data set is analyzed to compare the performance of our proposed test with some state-of-the-art methods.

    \noindent
    \textbf{Keywords:} Contiguity; Energy statistic;  Functional data; Permutation test; Pitmann efficiency. 

\end{abstract}

\section{Introduction}

In a two-sample problem, we test for the equality of two distributions $F$ and $G$ based on two sets of independent observations $\mathcal{X} = \{X_1,X_2,\ldots,X_n\}$ and $\mathcal{Y} = \{Y_1,Y_2\ldots,Y_m\}$ on $X\sim F$ and $Y\sim G$, respectively.
For multivariate data, several two-sample tests are available in the literature. Notable methods include the tests based on average inter-point distances
\citep{szekely2004testing,baringhaus2004new,baringhaus2010rigid,biswas2014nonparametric}, graph based tests \citep{friedman1979multivariate,rosenbaum2005exact,biswas2014distribution}, test based on nearest neighbour type coincidences \citep{schilling1986multivariate,henze1988multivariate,mondal2015high} and those based on kernels \citep{gretton2012kernel,gretton2009fast}. However, in the case of functional data, the literature on the general two-sample test is scarce.

In functional data analysis, the random variables are often modelled as elements of an Hilbert space such as $L_2([a,b])$ or $L_2(\mathcal{D})$, the space of all square-integrable functions defined on the domain $\mathcal{D}\subset \R^p$ ($p\geq 1$) endowed with the $L_2$ metric \citep[see][]{fdabookramsay,ferraty2006nonparametric,hsing2015theoretical}. For such random variables, there are many ANOVA-type tests \citep{zhang2010two,cuesta2010simple,qiu2021two} that deal with the location problem.  \cite{hall2007two} proposed a Cramer-von-Mises type test for the general two-sample problem involving functional data. \cite{pomann2016two} suggested applying the Anderson-Darling test on the first few functional principal components of the mixture distribution and aggregating the results using Bonferroni's correction. \cite{wynne2020kernel} developed a test based on kernel mean embedding of the distributions of functional random variables. \cite{pan2018ball} proposed a test based on ball divergence between the distributions of two Banach-valued random variables. Most of these tests are based on a consistent estimate of a measure of dissimilarity between the two underlying distributions, and they have large sample consistency. However, the exact or limiting null distributions of these test statistics are usually analytically intractable and the permutation method is used for calibration. But the existing literature is somewhat silent about the statistical efficiency of these tests, partly because of the difficulty in formulating a suitable notion of the density and the likelihood ratio statistic.

In this article, we assume $X\sim F$ and $Y\sim G$ to be independent functional random variables lying in an infinite dimensional separable Hilbert space ${\mathcal H}$ with inner product $\langle.,.\rangle$. We know that two $\mathcal{H}$-valued random variables $X$ and $Y$ have the same distribution (i.e., $F=G$) if and only if the random variables $\langle X,f\rangle$ and $\langle Y,f\rangle$ are identically distributed for all $f\in\mathcal{H}$. So, the information about the dissimilarity between $F$ and $G$ is supposed to be contained in the distributions of the linear projections $\langle X,f\rangle$ and $\langle Y,f\rangle$ (denoted by $F^f$ and $G^f$, respectively) for $f\in\H$. One can use a suitable measure of dissimilarity $T(F^f, G^f)$ between two univariate distributions $F^f$ and $G^f$ and aggregate them over $f \in {\cal H}$ to come up with a general measure of dissimilarity between $F$ and $G$. In Section 2, we discuss this recipe for constructing a measure of dissimilarity between $F$ and $G$ and study some of its theoretical properties. For a suitable choice of $T(\cdot,\cdot)$, the proposed measure turns out to be non-negative, and under very general assumptions, it takes the value zero if and only if $F=G$. In this article, we use the measure based on energy statistics proposed in \cite{baringhaus2010rigid} as $T$, and it ensures this characterization property. In Section 3, we propose a consistent estimator of our measure and use it as the test statistic to test for the equality of $F$ and $G$. Large sample distribution of the test statistic is derived both under fixed null and alternative hypotheses. These results establish the consistency of our test even when the sample sizes are extremely unbalanced. However, the limiting null distribution of the test statistic is analytically intractable. So, we use the conditional test based on the permutation principle and prove its large sample consistency. We also establish a local asymptotic normality result for functional random variables. To the best of our knowledge, such results are new in the functional data analysis literature. This result helps us to construct a locally asymptotically normal sequence of contiguous alternatives and study the behaviour of our test under such alternatives. Our results show that the proposed test is statistically efficient in the Pitman sense, i.e., under such contiguous alternatives, the power of the permutation test converges to a non-trivial limit as the sample sizes increase. 
Extensive simulation studies are carried out and a real data set is analyzed in Section 4 to compare our performance with some state-of-the-art methods. Finally, Section 5 contains some concluding remarks and a brief discussion on possible future directions. All proofs and mathematical details are deferred to the Appendix.

\section{Measure of dissimilarity for functional random variables}

Let $\mathcal{H}$ be a separable Hilbert space with inner product $\langle.,.\rangle$ and  $\mathcal{B}(\mathcal{H})$ be the Borel $\sigma$-field on $\mathcal{H}$. Consider a random variable $Z$  that takes values on $\mathcal{H}$. We know that (i) $Z$ is $\mathcal{B}(\mathcal{H})$-measurable if and only if $\langle Z,f\rangle$ is measurable for all $f\in\mathcal{H}$ and (ii) the distribution of $Z$ is uniquely determined by the distributions of $\langle Z,f\rangle$ over $f\in\mathcal{H}$
\citep[see, e.g., Theorem 7.1.2 in][]{hsing2015theoretical}.
So, two $\mathcal{H}$-valued random variables $X$ and $Y$ have the same distribution if and only if the random variables $\langle X,f\rangle$ and $\langle Y,f\rangle$ are identically distributed for all $f\in\mathcal{H}$. Now, consider any measure of difference $T(\cdot, \cdot)$ between two univariate distributions, which is non-negative and takes the value zero if and only if the two distributions are equal. One can use it to measure the difference between $F^{f}$ and $G^{f}$, the distributions corresponding to $\langle X,f\rangle$ and $\langle Y,f\rangle$, and aggregate them over $f \in \mathcal{H}$ to come up with a measure of dissimilarity between $F$ and $G$. This can be expressed as
$$\zeta^\nu(F,G) = \int_{\H} T(F^f,G^f)\,d\nu(f),$$
where $\nu$ is some probability measure on $\mathcal{H}$. It is easy to see that if $F$ and $G$ are identical, then $\zeta^\nu(F,G) = 0$. But $\zeta^\nu(F,G) = 0$ only implies that $F^f = G^f$ almost everywhere w.r.t. $\nu$, which does not necessarily imply $F=G$. Note that in the multivariate case, if $T$ is chosen as the squared $L_2$-distance between $F^f$ and $G^f$ and $\nu$ is chosen as the uniform distribution over the surface of the unit sphere in $\R^d$, $\zeta^\nu(F,G)$ turns out to be the energy distance between $F$ and $G$ \citep{baringhaus2004new}, and in that case, $\zeta^\nu(F,G)$ has the characterization property, i.e. $\zeta^\nu(F,G)=0$
implies $F=G$. If $T$ is the Cramer-von-Mises distance between $F^f$ and $G^f$, the same choice of $\nu$ leads to the two-sample test statistic proposed in \cite{kim2020robust}. It also has the characterization property. In these two cases, $\nu$ being the uniform distribution has support over the entire surface of the unit ball and hence considers all possible directions for projection. Keeping that in mind, we can consider a probability measure $\nu$, whose support contains the unit sphere centered at the origin of the Hilbert space. In that case, $\zeta^\nu(\cdot,\cdot)$ has the characterization property, as shown in the following theorem.

\begin{thm}
If $supp\{\nu\}$ contains the unit sphere in $\H$, then $\zeta^\nu(F,G) = 0$ if and only if $F=G$.
\label{thm1}
\end{thm}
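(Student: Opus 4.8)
The forward implication is immediate: if $F=G$ then $F^f=G^f$ for every $f\in\H$, so $T(F^f,G^f)=0$ identically and hence $\zeta^\nu(F,G)=0$. For the converse, the plan is to first pass from the vanishing of the integral to an almost-everywhere statement, then upgrade it to a statement holding on the whole unit sphere, and finally invoke the uniqueness theorem quoted at the start of this section. Since $T$ is non-negative and $\zeta^\nu(F,G)=\int_\H T(F^f,G^f)\,d\nu(f)=0$, the integrand must vanish $\nu$-almost everywhere; by the defining property of $T$ this yields $F^f=G^f$ for $\nu$-almost every $f$. The task is therefore to promote this ``$\nu$-a.e.'' conclusion to one valid for every $f\in\H$, at which point the cited characterization of distributions on $\H$ forces $F=G$.

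The bridge between ``almost everywhere'' and ``everywhere'' will come from a continuity argument. I would work with the characteristic functionals $\Phi_X(f)=\E[e^{i\langle X,f\rangle}]$ and $\Phi_Y(f)=\E[e^{i\langle Y,f\rangle}]$. Because $|e^{i\langle X,f\rangle}|=1$ and $\langle X,f_n\rangle\to\langle X,f\rangle$ whenever $f_n\to f$ in $\H$, dominated convergence shows that $\Phi_X$ and $\Phi_Y$ are continuous on $\H$. Moreover, $F^f=G^f$ is equivalent, via the uniqueness of univariate characteristic functions, to $\Phi_X(tf)=\Phi_Y(tf)$ for all $t\in\R$. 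Consequently the set $A=\{f\in\H:F^f=G^f\}=\bigcap_{t\in\R}\{f:\Phi_X(tf)=\Phi_Y(tf)\}$ is an intersection of closed sets and is therefore closed.

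It remains to combine closedness of $A$ with the support hypothesis. Since $\nu(A^c)=0$ and $A^c$ is open, any point $f_0$ lying in $A^c$ would possess an open neighbourhood contained in $A^c$, which would then have zero $\nu$-measure; this is impossible for points of $\mathrm{supp}\{\nu\}$. Hence $\mathrm{supp}\{\nu\}\subseteq A$, and in particular the unit sphere lies in $A$. Finally, for an arbitrary $g\neq 0$ I would write $g=\|g\|\,(g/\|g\|)$ with $g/\|g\|$ on the unit sphere, so that $\Phi_X\big(t\,g/\|g\|\big)=\Phi_Y\big(t\,g/\|g\|\big)$ for all $t$; taking $t=\|g\|$ gives $\Phi_X(g)=\Phi_Y(g)$, while the case $g=0$ is trivial. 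Thus $\Phi_X\equiv\Phi_Y$ on $\H$, and the uniqueness theorem yields $F=G$. The main obstacle is precisely the passage from $\nu$-a.e. equality to equality on the full unit sphere; everything hinges on first establishing that $A$ is closed, so that the topological support condition can be brought to bear.
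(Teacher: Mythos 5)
Your proof is correct and follows essentially the same route as the paper's: the vanishing integral gives $F^f=G^f$ for $\nu$-a.e.\ $f$, continuity of the characteristic functionals upgrades this to all of $\mathrm{supp}\{\nu\}$ and hence to the unit sphere, and the scaling $g=\|g\|\,(g/\|g\|)$ extends it to all of $\H$ before invoking the uniqueness theorem. If anything, your handling of the a.e.-to-everywhere step (showing $A=\{f:F^f=G^f\}$ is closed as an intersection over $t$ of closed sets) is more explicit than the paper's, which only remarks that $\phi(f)=\E\{e^{i\langle X,f\rangle}\}-\E\{e^{i\langle Y,f\rangle}\}$ is continuous and leaves the passage from the full-measure set to the whole sphere implicit.
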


However, note that the $f$'s, which are orthogonal to $supp\{F\}\cup supp\{G\}$, do not contribute to $\zeta^{\nu}(F,G)$ even when the two random variables $X$ and $Y$ are highly separated. Therefore, it seems reasonable to discard those directions and work with $\nu= (F+G)/2$, an equal mixture of $F$ and $G$. It turns out that the characterization property of $\zeta^{\nu}(F,G)$ holds for this choice of $\nu$ as well. This is formally stated in the following theorem.

\begin{thm}
If $\nu = (F+G)/2$, then, $\zeta^{\nu}(F,G) = 0$ if and only if $F=G$.
\label{thm2}
\end{thm}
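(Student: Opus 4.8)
The plan is to prove the nontrivial implication $\zeta^\nu(F,G)=0 \Rightarrow F=G$; the converse is immediate, since $F=G$ forces $F^f=G^f$, hence $T(F^f,G^f)=0$, for every $f$. First I would use that $T\ge 0$ together with the defining property of $T$ (it vanishes exactly when its two arguments are equal) to conclude from $\int_{\H} T(F^f,G^f)\,d\nu(f)=0$ that $F^f=G^f$ for $\nu$-almost every $f$, i.e. that $\langle X,f\rangle$ and $\langle Y,f\rangle$ are identically distributed for $\nu$-a.e.\ $f$.

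Next I would promote this ``almost everywhere'' statement to one on the support of $\nu$. Writing $\phi_X(f)=\E\, e^{i\langle X,f\rangle}$ and $\phi_Y$ analogously, each map $f\mapsto\phi_X(f)$ is continuous on $\H$ (if $f_n\to f$ then $\langle X,f_n\rangle\to\langle X,f\rangle$ pointwise, so dominated convergence applies), and $F^f=G^f$ is equivalent to $\phi_X(tf)=\phi_Y(tf)$ for all $t\in\R$. Hence $B:=\{f:F^f=G^f\}$ is a \emph{closed cone}. Since $\nu(B)=1$ and $B$ is closed, $supp\{\nu\}\subseteq B$, and because $B$ is a cone this upgrades to $\overline{cone}\,(supp\{\nu\})\subseteq B$. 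With $\nu=(F+G)/2$ we have $supp\{\nu\}=supp\{F\}\cup supp\{G\}=:K$, so $F^f=G^f$ holds for every $f$ in the closed cone generated by $K$. I would then reduce the ambient space to $M:=\overline{span}\,(K)$: since $X\in supp\{F\}$ and $Y\in supp\{G\}$ almost surely, both lie in $M$ a.s., so for $f=P_Mf+f^{\perp}$ one has $\langle X,f\rangle=\langle X,P_Mf\rangle$ and likewise for $Y$; thus $F^f=F^{P_Mf}$ and the directions in $M^{\perp}$ carry no information (both projections are $\delta_0$). By the projection characterization of equality in distribution (the cited Theorem 7.1.2), it then suffices to show $F^g=G^g$ for every $g\in M$.

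The main obstacle is this final step: upgrading equality of $\phi_X$ and $\phi_Y$ from the possibly lower-dimensional cone $\overline{cone}\,(K)$ to all of $M$, whose \emph{linear} span is dense. Equality on the cone cannot be pushed to the span by taking products, because $e^{i\langle\cdot,f_1\rangle}e^{i\langle\cdot,f_2\rangle}=e^{i\langle\cdot,f_1+f_2\rangle}$ forces sums $f_1+f_2$ that need not lie in the cone, and mere continuity of characteristic functionals does not rule out vanishing on a thin generating set (e.g.\ a union of lines). The resolution must exploit that the signed measure $\eta:=F-G$ is itself concentrated on $K$. Reformulating via push-forwards, $F^f=G^f$ for $f\in K$ says that for every $f\in K$ the image of $\eta$ under $x\mapsto\langle x,f\rangle$ is the zero measure on $\R$.

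To finish I would run an extremal ``peeling'' argument that couples this Fourier vanishing with the concentration of $\eta$ on $K$: a support point $v\in K$ of (near-)maximal norm doubles as an admissible projection direction, and by Cauchy--Schwarz $\langle x,v\rangle\le\langle v,v\rangle$ for $x\in K$ with equality only at $x=v$, so the half-space $\{x:\langle x,v\rangle\ge\langle v,v\rangle-\varepsilon\}$ isolates the mass of $\eta$ near $v$; vanishing of the push-forward of $\eta$ in the direction $v$ then forces that mass to cancel, and iterating (after reducing to finite-dimensional projections, on which the support constraint leaves no room for a nonzero measure with vanishing push-forwards along all of $K$) yields $\eta=0$, i.e.\ $F=G$. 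This interplay is exactly what distinguishes the present statement from Theorem~\ref{thm1}, where $supp\{\nu\}$ already contains the unit sphere and hence $\overline{cone}\,(supp\{\nu\})=\H$, making the upgrade trivial.
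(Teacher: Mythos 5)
Your reductions are sound and track the paper's own argument closely: the converse direction, the passage from ``$\nu$-a.e.'' to the support via the observation that $B=\{f:F^f=G^f\}$ is a closed cone, the identification $supp\{\nu\}=supp\{F\}\cup supp\{G\}=:K$, and the restriction to $M=\overline{span}(K)$ (which is exactly the paper's ``Claim'' about the projection $Q$ onto $\H_0=\overline{span\{supp\{F\}\cup supp\{G\}\}}$). You have also correctly put your finger on the one step that actually carries the theorem: upgrading $F^f=G^f$ from the closed cone generated by $K$ to all of $M$. The paper disposes of this by writing $g=\sum_i a_i f_i$ with $f_i\in K$ and asserting that equality in distribution of each $\langle X,f_i\rangle$ and $\langle Y,f_i\rangle$ passes ``trivially'' to the linear combination; your own remark that one-dimensional marginals do not determine the law of a sum is precisely the reason this cannot be taken at face value, so the step genuinely needs an argument that exploits the concentration of $\eta:=F-G$ on $K$.

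The problem is that your proposed resolution does not supply that argument. The ``extremal peeling'' step has three unpatched holes. (i) It needs a support point of maximal norm, but $K$ need not be bounded. (ii) Even granting one, the Cauchy--Schwarz isolation $\langle x,v\rangle\le\|v\|^2$ with equality only at $x=v$ presupposes $\|x\|\le\|v\|$ for all $x\in K$; and what the vanishing pushforward along $v$ actually yields is only that $\eta$ assigns measure zero to every slab $\{x:\langle x,v\rangle\in A\}$, which for a non-atomic $\eta$ does not give $|\eta|=0$ on the shell near $v$ --- a signed measure can vanish on every such slab while being nonzero there. Hence there is nothing to ``peel off,'' and the iteration is undefined: discarding a set of $\eta$-measure zero does not decouple the remaining pushforward constraints, and a continuum of extreme points admits no induction. (iii) The parenthetical finite-dimensional claim --- that a nonzero signed measure supported on $K$ cannot have vanishing pushforwards along every direction of $K$ --- is exactly the content of the theorem in finite dimensions and is asserted, not proved. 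In short, the proposal correctly diagnoses where the difficulty sits (and, to its credit, identifies the very point at which the paper's own justification is a one-line assertion), but it leaves that difficulty unresolved; as written it is not a proof.
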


Throughout this article, we use $\nu=(F+G)/2$ while $T$ is taken as the measure proposed in \cite{baringhaus2010rigid}, which is defined as

$$T_\phi(\mathcal{L}_1,\mathcal{L}_2) = 2\E_{\mathcal{L}_1,\mathcal{L}_2}~\phi\big(|U-V|^2\big)-\E_{\mathcal{L}_1,\mathcal{L}_1}~\phi\big(|U-U^\prime|^2\big) - \E_{\mathcal{L}_2,\mathcal{L}_2}~\phi\big(|V-V^\prime|^2\big),$$

where $U,U^{'} \stackrel{iid}{\sim} {\cal L}_1$ and $V,V^{'} \stackrel{iid}{\sim} {\cal L}_2$ are independent random variables, $\phi:[0,\infty)\to[0,\infty)$ is continuous, monotinically increasing function with $\phi(0) = 0$, and it has non-constant completely monotone derivative on $(0,\infty)$ with $\E_{\mathcal{L}_1}~\phi(|U|^2)$ and $\E_{\mathcal{L}_2}~\phi(|V|^2)$ being finite. For this choice of $T$, the measure of dissimilarity between $F$ and $G$ is given by
$$\zeta_\phi(F,G):= \frac{1}{2}\int_{\H} T_\phi(F^f,G^f)dF(f)+\frac{1}{2}\int_{\H} T_\phi(F^f,G^f)dG(f).$$
Since $\zeta_\phi(F,G)$ is obtained by aggregating the Baringhaus-Franz statistic $T_\phi$ computed along different projection directions, we call it the projected BF (pBF) criterion. It has a closed form expression given by
\vspace{-0.05in}
\begin{equation*}
    \begin{split}
        \zeta_\phi(F,G) 
        = & ~ \E_{X_3}\Big[\E_{{X_1},{Y_1}}~\phi\big(|\langle X_1,X_3\rangle-\langle Y_1,X_3\rangle|^2\big)\Big] -\frac{1}{2}\E_{X_3}\Big[\E_{X_1,X_2}~\phi\big(|\langle X_1,X_3\rangle-\langle X_2,X_3\rangle|^2\big)\Big]\\
        & - \frac{1}{2}\E_{X_3}\Big[\E_{Y_1,Y_2}~\phi\big(|\langle Y_1,X_3\rangle-\langle Y_2,X_3\rangle|^2\big)\Big] + \E_{Y_3}\Big[\E_{{X_1},{Y_1}}~\phi\big(|\langle X_1,Y_3\rangle-\langle Y_1,Y_3\rangle|^2\big)\Big]\\
        & - \frac{1}{2}\E_{Y_3}\Big[\E_{X_1,X_2}~\phi\big(|\langle X_1,Y_3\rangle-\langle X_2,Y_3\rangle|^2\big)\Big] - \frac{1}{2}\E_{Y_3}\Big[\E_{Y_1,Y_2}~\phi\big(|\langle Y_1,Y_3\rangle-\langle Y_2,Y_3\rangle|^2\big)\Big],
    \end{split}
\end{equation*}
where $X_i\stackrel{i.i.d.}{\sim}F\,(i=1,2,3)$, $Y_i\stackrel{i.i.d.}{\sim}G\,(i=1,2,3)$ are independent and $\E~\phi(|\langle X_1,X_2 \rangle|^2)$, $\E~\phi(|\langle X_1,Y_1 \rangle|^2)$ and $\E~\phi(|\langle Y_1,Y_2 \rangle|^2)$ are finite. The measure $\zeta_\phi(F,G)$ has some nice theoretical properties as mentioned in the following proposition. 

\begin{prop}
Suppose that $\phi:[0,\infty)\to[0,\infty)$ is continuous, monotinically increasing function with $\phi(0) = 0$, and it has non-constant completely monotone derivative on $(0,\infty)$. Also assume that $\E_{F^f}~\phi(|U|^2)$ and $\E_{G^f}~\phi(|V|^2)$ are finite for all $f \in{\cal H}$. Then $\zeta_\phi(F,G)$ has the following properties
\begin{itemize}
    \item[(a)] $\zeta_\phi(F,G) = \E\{g(X_1,X_2,X_3; Y_1,Y_2,Y_3)\}$, where
    \begin{equation*}
        \begin{split}
            g(X_1,X_2,X_3; Y_1,Y_2,Y_3)=\frac{1}{2}&\Big\{2\phi\big(|\langle X_1,X_3\rangle-\langle Y_1,X_3\rangle|^2\big) -\phi\big(|\langle X_1,X_3\rangle-\langle X_2,X_3\rangle|^2\big)\\
            & - \phi\big(|\langle Y_1,X_3\rangle-\langle Y_2,X_3\rangle|^2\big) + 2\phi\big(|\langle X_1,Y_3\rangle-\langle Y_1,Y_3\rangle|^2\big)\\
            & - \phi\big(|\langle X_1,Y_3\rangle-\langle X_2,Y_3\rangle|^2\big) -  \phi\big(|\langle Y_1,Y_3\rangle-\langle Y_2,Y_3\rangle|^2\big)\Big\}.
        \end{split}
    \end{equation*}
    \item[(b)] $\zeta_\phi(F,G)$ has the distribution characterization property, i.e., $\zeta(F,G) = 0$ if and only if $F = G$.
    
    \item[(c)] $\zeta_\phi(F,G)$ is invariant under unitary operations on $X$ and $Y$, i.e., if $U:\H\to \mathcal{X}$ is an unitary operator, then $\zeta(F\circ U^{-1},G\circ U^{-1}) = \zeta(F,G)$.
    
    \item[(d)] If $\{{X}_n:n\geq 1\}$ and $\{{Y}_n:n\geq 1\}$ are independent sequences of Hilbertian random variables such that ${X}_n\Rightarrow {X}$ and ${Y}_n\Rightarrow {Y}$, then $\lim\limits_{n\to\infty}\zeta_\phi(\mathcal{L}({X}_n),\mathcal{L}({Y}_n))=\zeta_\phi(\mathcal{L}({X}),\mathcal{L}({Y})).$
\end{itemize}
\label{depprop}
\end{prop}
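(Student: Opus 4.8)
The plan is to dispatch the four parts in turn, with (a)--(c) being essentially structural and (d) carrying all the analytic weight. For part (a), I would start from the definition $\zeta_\phi(F,G)=\tfrac12\int_{\H}T_\phi(F^f,G^f)\,dF(f)+\tfrac12\int_{\H}T_\phi(F^f,G^f)\,dG(f)$ and unwind the Baringhaus--Franz functional. Integrating $T_\phi(F^f,G^f)$ against $dF(f)$ amounts to taking expectation over an independent copy $X_3\sim F$ used as the projection direction: writing out the three expectations defining $T_\phi(F^{X_3},G^{X_3})$ in terms of $\langle X_1,X_3\rangle,\langle X_2,X_3\rangle,\langle Y_1,X_3\rangle,\langle Y_2,X_3\rangle$, and doing the same for the $dG(f)$ term with $Y_3\sim G$, reproduces the six groups of terms in the closed form, i.e. exactly $\E\{g(X_1,X_2,X_3;Y_1,Y_2,Y_3)\}$. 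The only thing to verify is that Fubini's theorem licenses exchanging the outer ($f$) and inner ($U,V$) integrations, which follows from the standing finiteness of $\E_{F^f}\phi(|U|^2)$ and $\E_{G^f}\phi(|V|^2)$.

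For part (b), I would observe that with $\nu=(F+G)/2$ we have literally $\zeta_\phi(F,G)=\zeta^\nu(F,G)$ for the choice $T=T_\phi$, so Theorem~\ref{thm2} applies once $T_\phi$ is known to be a valid dissimilarity, i.e. non-negative and vanishing iff the two univariate laws coincide. That is precisely the characterization property of the Baringhaus--Franz statistic, which holds because $\phi$ has a non-constant completely monotone derivative: by Schoenberg/Bernstein this makes $-\phi(|x-y|^2)$ a strictly conditionally negative definite kernel on $\R$, forcing $T_\phi(\mathcal L_1,\mathcal L_2)\ge 0$ with equality iff $\mathcal L_1=\mathcal L_2$; combining with Theorem~\ref{thm2} gives the claim. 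Part (c) is the cleanest: a unitary $U$ preserves inner products, $\langle Ux,Uy\rangle=\langle x,y\rangle$, and every argument of $g$ enters only through inner products of pairs of its arguments (e.g. $\langle X_1,X_3\rangle-\langle Y_1,X_3\rangle=\langle X_1-Y_1,X_3\rangle$). Hence, using part (a), $\zeta_\phi(F\circ U^{-1},G\circ U^{-1})=\E\{g(UX_1,\dots,UY_3)\}=\E\{g(X_1,\dots,Y_3)\}=\zeta_\phi(F,G)$.

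The substantive part is (d). The naive route---joint weak convergence of the six-tuple $(X_1^n,\dots,Y_3^n)\Rightarrow(X_1,\dots,Y_3)$ (immediate from independence together with the marginal convergences, since products of weakly convergent laws converge weakly), continuity of $g$ on $\H^6$ (the inner product is jointly continuous and $\phi$ is continuous), and hence $g(\cdot_n)\Rightarrow g(\cdot)$---reduces the statement to upgrading convergence in distribution to convergence of expectations, i.e. to uniform integrability of $\{g(\cdot_n)\}$. Since $\phi$ is unbounded and weak convergence alone gives no moment control, this is the main obstacle. My plan is to obtain the needed control through the Bernstein--Widder representation $\phi(t)=\int_0^\infty\frac{1-e^{-ts}}{s}\,d\mu(s)$, valid because $\phi'$ is completely monotone. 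Substituting into $g$, the constant terms cancel within each group (the coefficients $2,-1,-1$ sum to zero), yielding $\zeta_\phi(\mathcal L(X_n),\mathcal L(Y_n))=\int_0^\infty\frac1s\,G_n(s)\,d\mu(s)$, where $G_n(s)=\tfrac12\int \mathrm{MMD}_s^2(F_n^f,G_n^f)\,dF_n(f)+\tfrac12\int \mathrm{MMD}_s^2(F_n^f,G_n^f)\,dG_n(f)$ is a mixture of squared Gaussian-kernel maximum mean discrepancies. For each fixed $s$, $G_n(s)$ is the expectation of a bounded continuous function of the six-tuple, so $G_n(s)\to G(s)$ by the weak convergence, and $0\le G_n(s)\le 2$. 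It then remains to pass the limit through $\int_0^\infty\frac1s(\cdot)\,d\mu(s)$. Fatou already gives the lower bound $\liminf_n\zeta_\phi(\mathcal L(X_n),\mathcal L(Y_n))\ge\zeta_\phi(\mathcal L(X),\mathcal L(Y))$ for free; the work is the matching upper bound, for which I would dominate $\tfrac1sG_n(s)$ by $\tfrac2s$ (integrable at $s=\infty$, as $\int^\infty s^{-1}\,d\mu(s)<\infty$ follows from $\phi(1)<\infty$) and, near $s=0$, by the second-moment estimate $\mathrm{MMD}_s^2\le 2s\,\E[(U-V)^2]$. Making this near-zero domination uniform in $n$ is exactly where the finiteness hypotheses must be leveraged, and I expect this tail estimate---the uniform integrability at small $s$---to be the technical crux of the whole proposition.
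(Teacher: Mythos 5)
Parts (a)--(c) of your proposal coincide with the paper's own argument: (a) is the Fubini unwinding of the definition of $\zeta_\phi$, (b) defers to Theorem \ref{thm2} combined with the characterization property of the Baringhaus--Franz statistic $T_\phi$ (you spell out why $T_\phi$ qualifies; the paper takes this as known from \cite{baringhaus2010rigid}), and (c) is the observation that $g$ depends on its arguments only through pairwise inner products. These parts are correct and essentially identical to the paper's proof.

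The divergence is in (d), and there you have correctly located a genuine gap --- one the paper itself does not acknowledge. The paper's entire proof of (d) is the sentence that it ``follows by simply applying the Dominated Convergence Theorem,'' with no dominating function exhibited; your proposal instead uses the Bernstein--Widder representation $\phi(t)=\int_0^\infty s^{-1}(1-e^{-ts})\,d\mu(s)$, reduces the fixed-$s$ integrand to a bounded continuous functional of the six-tuple (a squared Gaussian-kernel discrepancy), and then seeks a domination of $s^{-1}G_n(s)$ that is uniform in $n$. Your treatment of the $s\to\infty$ end is fine, but the small-$s$ domination you flag as ``the technical crux'' cannot be completed from the stated hypotheses: the bound $\mathrm{MMD}_s^2\le 2s\,\E[(U-V)^2]$ requires second moments of the projections that are bounded uniformly in $n$, and weak convergence supplies no such control. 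In fact (d) fails as stated without an additional uniform integrability assumption: take $\H=\R$, $\phi(t)=\sqrt t$, $Y_n$ identically distributed as $\delta_1$, and let $X_n$ put mass $1-1/n$ at $0$ and mass $1/n$ at $n$. Then $X_n\Rightarrow \delta_0$ while $\E|X_n|=1$ for all $n$, and a direct computation gives $\zeta_\phi(\mathcal{L}(X_n),\delta_1)=2-2/n\to 2$, whereas $\zeta_\phi(\delta_0,\delta_1)=1$. So neither your unfinished domination argument nor the paper's bare appeal to dominated convergence closes (d); what your decomposition does deliver rigorously (via Fatou, as you note) is the lower semicontinuity $\liminf_n\zeta_\phi(\mathcal{L}(X_n),\mathcal{L}(Y_n))\ge\zeta_\phi(\mathcal{L}(X),\mathcal{L}(Y))$, and the full limit requires an extra hypothesis such as uniform integrability of the families $\phi(|\langle X_n-X_n',X_n''\rangle|^2)$ and their analogues, under which your small-$s$ estimate does go through.
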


\begin{rem}
 Proposition \ref{depprop}(c) implies that $\zeta_\phi$ only depends on the inner product defined on the Hilbert space, but not on the space used for modeling the random variables. For example, modeling the two samples as random variables in $L_2[0,1]$ and in $L_2[0,10]$ leads to the same value of $\zeta_\phi(F,G)$. 
\end{rem}

\section{Estimation of pBF and construction of the two-sample test}
Suppose $\hat F_n$ and $\hat G_m$ be the empirical probability distribution functions based on the random samples $\mathcal{X}$ and $\mathcal{Y}$, respectively. Replacing $F$ by $\hat{F}_n$ and $G$ by $\hat{G}_m$, we get an estimator of $\zeta_\phi(F,G)$. This estimator $\hat \zeta_{n,m}^\phi:=\zeta_\phi(\hat{F}_n,\hat{G}_m)$ can be expressed as
\begin{equation*}
    \begin{split}
        \hat \zeta_{n,m}^\phi & = \frac{1}{n^2m}\sum_{i=1}^{n}\sum_{j = 1}^n\sum_{k=1}^m \phi\big(|\langle X_j,X_i\rangle-\langle Y_k,X_i\rangle|^2\big) - \frac{1}{2n^3}\sum_{i=1}^{n}\sum_{j=1}^n\sum_{k=1}^n \phi\big(|\langle X_j,X_i\rangle-\langle X_k,X_i\rangle|^2\big)\\
        & -\frac{1}{2nm^{2}}\sum_{i=1}^{n}\sum_{1\leq j, k\leq m} \phi\big(|\langle Y_j,X_i\rangle-\langle Y_k,X_i\rangle|^2\big)+\frac{1}{nm^2}\sum_{i=1}^{m}\sum_{j = 1}^n\sum_{k=1}^m \phi\big(|\langle X_j,Y_i\rangle-\langle Y_k,Y_i\rangle|^2\big)\\
        &- \frac{1}{2n^2m}\sum_{i=1}^{m}\sum_{j=1}^n\sum_{k=1}^n \phi\big(|\langle X_j,Y_i\rangle-\langle X_k,Y_i\rangle|^2\big)-\frac{1}{2m^{3}}\sum_{i=1}^{m}\sum_{1\leq j, k\leq m} \phi\big(|\langle Y_j,Y_i\rangle-\langle Y_k,Y_i\rangle|^2\big).
    \end{split}
\end{equation*}
Clearly, $\hat\zeta_{n,m}^\phi$ can be viewed as a two-sample V-statistic with the core function,
\begin{equation*}
            g^*(X_1,X_2,X_3; Y_1,Y_2,Y_3) \\
            = \frac{1}{3!3!}\sum_{\pi_1,\pi_2\in\mathcal{S}_3} g(X_{\pi_1(1)},X_{\pi_1(2)},X_{\pi_1(3)};Y_{\pi_2(1)},Y_{\pi_2(2)},Y_{\pi_2(3)}).
\end{equation*}
The raw computational complexity of this statistic is of the order $O(n^3+m^3)$, but one can reduce the cost by vectorization method in R. The large sample distribution of $\zeta_{n,m}^\phi$ is given by the following theorem.
\begin{thm}
Let $X_1,\ldots,X_n\stackrel{i.i.d.}{\sim} F$ and $Y_1,\ldots,Y_m\stackrel{i.i.d.}{\sim} G$ be independent random functions and $\lim n/(n+m)=\lambda\in[0,1]$. 
Then for any $\phi$ satisfying the properties mentioned in Proposition 2.1, as $\min\{n,m\}\to\infty$, we have the following results.
\begin{itemize}
    \item[(a)] Under $H_1:F\not=G$, $\sqrt{nm/(n+m)}(\hat\zeta_{n,m}^\phi-\zeta_\phi(F,G))$ converges in distribution to a normal random variable with mean zero and variance $(1-\lambda)\delta_1+\lambda\delta_2$ for some positive $\delta_1,\delta_2>0$.

    \item[(b)] Under $H_0:F=G$, $nm/(n+m)\hat\zeta_{n,m}^\phi$ converges in distribution to $\sum_{k=1}^\infty \lambda_k Z_k^2$ for some square integrable sequence $\{\lambda_k\}$ and independent standard normal sequence of random variables $\{Z_k\}$.
\end{itemize}
\label{largesampleres}
\end{thm}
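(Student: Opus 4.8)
The plan is to treat $\hat\zeta_{n,m}^\phi$ as a two-sample $V$-statistic with symmetric kernel $g^*$ of degree $(3,3)$ and to obtain both limits from its Hoeffding decomposition, the regime being dictated by whether the first-order projections vanish. Writing $\theta=\zeta_\phi(F,G)$, I would first record the centred first-order projections
\[
g_{10}(x)=\E\!\left[g^*(x,X_2,X_3;Y_1,Y_2,Y_3)\right]-\theta,\qquad
g_{01}(y)=\E\!\left[g^*(X_1,X_2,X_3;y,Y_2,Y_3)\right]-\theta,
\]
together with the three second-order projections $g_{20},g_{11},g_{02}$. Square integrability of $g^*$, which I would establish from moment control on the projected inner products (using the concavity and subadditivity of $\phi$ forced by its completely monotone derivative, together with Cauchy–Schwarz bounds on $\langle\cdot,\cdot\rangle$), places all projections in the relevant $L_2$ spaces and makes the associated integral operators Hilbert–Schmidt. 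Under $H_1$ the $V$-statistic and its associated $U$-statistic differ by $o_P((nm/(n+m))^{-1/2})$, so I may work with the $U$-statistic there; under $H_0$, by contrast, I would retain the diagonal terms of the $V$-statistic, since these are exactly what make the limit a non-centred weighted sum of squares.

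For part (a), under $H_1$ the one-point projections are non-degenerate, and I would invoke the central limit theorem for non-degenerate two-sample $U$-statistics. As the kernel has degree $3$ in each sample, the H\'ajek projection equals $3n^{-1}\sum_i g_{10}(X_i)+3m^{-1}\sum_j g_{01}(Y_j)$ up to a negligible remainder, so $\sqrt{nm/(n+m)}\,(\hat\zeta_{n,m}^\phi-\theta)$ is asymptotically normal with variance $9(1-\lambda)\,\mathrm{Var}\,g_{10}(X_1)+9\lambda\,\mathrm{Var}\,g_{01}(Y_1)$; this is $(1-\lambda)\delta_1+\lambda\delta_2$ with $\delta_1=9\,\mathrm{Var}\,g_{10}(X_1)$ and $\delta_2=9\,\mathrm{Var}\,g_{01}(Y_1)$. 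The one genuinely new point here is strict positivity of $\delta_1,\delta_2$: I would argue that $g_{10}$ and $g_{01}$ cannot be almost surely constant when $F\neq G$, leveraging the characterization property of Proposition 2.1(b) and the injectivity built into the projection construction of Theorem 2.2.

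For part (b) the decisive structural fact is that under $H_0\colon F=G$ both first-order projections vanish identically, $g_{10}\equiv g_{01}\equiv0$, so the statistic is first-order degenerate. I would verify this straight from the definition of $g$: when $F=G$ the paired $\phi$-terms cancel in conditional expectation exactly as they do for the Baringhaus–Franz functional, whose linear part at $T_\phi(\mathcal L,\mathcal L)=0$ is zero. Degeneracy forces the normalisation to jump from $\sqrt{nm/(n+m)}$ to $nm/(n+m)\asymp n+m$. The leading contribution is then the second-order part, which I would assemble, incorporating the degree-based combinatorial weights and the limiting proportion $\lambda$, into a single symmetric degenerate Hilbert–Schmidt operator on $L_2(F)$; denoting its eigenvalues by $\{\lambda_k\}$ (so that $\sum_k\lambda_k^2<\infty$) and taking $\{Z_k\}$ i.i.d.\ standard normal, the Mercer/spectral representation together with the degenerate-$U$-statistic limit theorem gives $nm/(n+m)\,\hat\zeta_{n,m}^\phi\Rightarrow\sum_k\lambda_k Z_k^2$. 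The absence of centring is precisely the contribution of the $V$-statistic's diagonal terms, the corresponding $U$-statistic converging instead to $\sum_k\lambda_k(Z_k^2-1)$.

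The main obstacle I anticipate lies in part (b) and is twofold: first, proving the first-order degeneracy cleanly despite the projection directions $X_3,Y_3$ being themselves random observations, so that $g^*$ is a genuine degree-$(3,3)$ kernel rather than a fixed reproducing kernel; and second, organising the three families of second-order terms $g_{20},g_{11},g_{02}$ into one operator whose spectrum produces a single chi-square series, while correctly bookkeeping the diagonal $V$-statistic terms that supply the non-centred form. Securing the requisite second moments and the Hilbert–Schmidt property that underlies the square-summability of $\{\lambda_k\}$, hence the convergence of the limiting series, is the accompanying analytic check.
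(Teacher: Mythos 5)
Your proposal is correct and follows essentially the same route as the paper: the paper's expansion of the random bivariate polynomial $\varphi(t,s)=\zeta(F+t\sqrt{n}(F_n-F),G+s\sqrt{m}(G_m-G))$ is exactly the Hoeffding/von Mises decomposition you describe, with the non-degenerate first-order projections giving the CLT in (a) (the paper's $h_1^*,h_2^*$ are your $3g_{10},3g_{01}$, so the variances agree) and, under $H_0$, the vanishing of the first-order terms forcing the second-order chaos limit in (b), where the three second-order projections collapse to a single symmetric degenerate kernel $h$ whose eigenvalues are the $\lambda_k$ and whose V-statistic (diagonal-retaining) limit is the non-centred series $\sum_k\lambda_k Z_k^2$. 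The points you flag as obstacles are handled in the paper by explicit computation of the projections (showing $h_1^*=h_3^*=2h=-h_2^*$ under $H_0$) and by two auxiliary lemmas on convergence of degenerate empirical stochastic integrals, matching your plan.
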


As a consequence of Theorem \ref{largesampleres} we get the probability convergence of $\hat\zeta_{n,m}^\phi$ to its population counterpart $\zeta_\phi(F,G)$. This is formally stated as a corollary.
\begin{cor}
If ${X}_1,{X}_2,\ldots,{X}_n\stackrel{i.i.d.}{\sim}{F}$ and ${Y}_1,{Y}_2,\ldots,{Y}_m\stackrel{i.i.d.}{\sim}{G}$ are independent
$\hat\zeta_{n,m}^\phi$ converges in probability to $\zeta_\phi(F,G)$ as $\min\{n,m\}\to\infty$ (even when $n/(n+m)\to 0\text{ or }1$).
\label{consistency}
\end{cor}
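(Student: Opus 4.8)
The plan is to read the corollary off directly from Theorem \ref{largesampleres}, the only genuine observation being that the normalizing sequence $nm/(n+m)$ diverges whenever $\min\{n,m\}\to\infty$, irrespective of the balance between the two sample sizes. Indeed, taking $k:=\min\{n,m\}$ and assuming without loss of generality $n\le m$, one has $n+m\le 2m$ and hence the elementary bound $nm/(n+m)\ge n/2 = k/2$, so $nm/(n+m)\to\infty$ as soon as $\min\{n,m\}\to\infty$, even in the extreme regimes $n/(n+m)\to 0$ or $n/(n+m)\to 1$. This is exactly why the effective sample size $nm/(n+m)$, rather than $\min\{n,m\}$ or $\max\{n,m\}$, is the right quantity to track.

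I would then split the argument according to whether $F=G$. When $F\neq G$, Theorem \ref{largesampleres}(a) states that $\sqrt{nm/(n+m)}\,(\hat\zeta_{n,m}^\phi-\zeta_\phi(F,G))$ converges in distribution to a Gaussian law whose variance $(1-\lambda)\delta_1+\lambda\delta_2$ is strictly positive for every $\lambda\in[0,1]$ (the boundary values included); in particular this sequence is tight, hence $O_P(1)$. Dividing by the diverging factor $\sqrt{nm/(n+m)}$ gives $\hat\zeta_{n,m}^\phi-\zeta_\phi(F,G)=o_P(1)$. When $F=G$, the characterization property of Proposition \ref{depprop}(b) gives $\zeta_\phi(F,G)=0$, and Theorem \ref{largesampleres}(b) states that $\frac{nm}{n+m}\hat\zeta_{n,m}^\phi$ converges in distribution to $\sum_k\lambda_k Z_k^2$ with $\{\lambda_k\}$ square integrable, so this limit is finite almost surely and the prelimit is $O_P(1)$; dividing by $nm/(n+m)\to\infty$ yields $\hat\zeta_{n,m}^\phi=o_P(1)=\zeta_\phi(F,G)+o_P(1)$. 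In both cases $\hat\zeta_{n,m}^\phi\to\zeta_\phi(F,G)$ in probability.

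The one point requiring a little care is that Theorem \ref{largesampleres} is stated under the hypothesis that $\lim n/(n+m)=\lambda$ exists, whereas the corollary only assumes $\min\{n,m\}\to\infty$ and so permits $n/(n+m)$ to fail to converge. I would remove the existence assumption via the subsequence principle: along any subsequence of $(n,m)$, compactness of $[0,1]$ lets me extract a further subsequence on which $n/(n+m)\to\lambda$ for some $\lambda\in[0,1]$, and the two cases above then give $\hat\zeta_{n,m}^\phi-\zeta_\phi(F,G)\to 0$ in probability along that sub-subsequence. Since every subsequence admits a further subsequence converging in probability to the same limit $0$, the full sequence converges in probability, completing the proof. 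The main (and admittedly mild) obstacle is thus purely this bookkeeping at the boundary $\lambda\in\{0,1\}$ together with the verification that $nm/(n+m)$ is the diverging quantity; once these are in place the corollary is essentially a one-line consequence of the limit theorem.
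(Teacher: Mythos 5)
Your argument is correct and is exactly the route the paper takes: the paper's proof of Corollary \ref{consistency} is the single line ``this follows easily from Theorem \ref{largesampleres},'' and your write-up simply fills in the details --- the bound $nm/(n+m)\ge \min\{n,m\}/2$, the case split on $F=G$ versus $F\neq G$ using parts (b) and (a) respectively, and the subsequence argument to dispense with the assumption that $n/(n+m)$ converges. The last point is a genuine (if mild) gap in the paper's one-line justification that you are right to close explicitly.
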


Hence even in the extremely unbalanced scenario (i.e. when $n/(n+m)\to 0\text{ or }1$) our estimator can detect the distributional difference between the two samples $\mathcal{X}$ and $\mathcal{Y}$ of random functions.

\subsection{Two-sample test based on ${\hat \zeta}_{n,m}^\phi$ }

We have seen that for a suitable choice of $\phi$, we have $\zeta_\phi(F,G)\ge 0$, where the equality holds if and only if $F$ and $G$ are equal. Since $\hat\zeta_{n,m}^\phi$ is a consistent estimator of $\zeta(F,G)$, we can reject $H_0:F=G$ if $\hat\zeta_{n,m}^\phi$ is large. 
Theorem \ref{largesampleres} (b) gives us the limiting null distribution of the test statistics ${\hat \zeta}_{n,m}^{\phi}$, but it involves some unknown quantities which are quite difficult to estimate. Hence, for a level $\alpha$ ($0<\alpha<1$), the cut-off is computed using the permutation method as described below.

\begin{itemize}
    \item Let $\mathcal{U}^\pi = \{U_{\pi(1)},U_{\pi(2)},\ldots, U_{\pi(N)}\}$ denote a permutation of the pooled sample $\mathcal{U}$ based on the permutation $\pi$ of $\{1,2,\ldots, N\}$.
    
    \item Partition $\mathcal{U}^\pi$ into $\mathcal{X}_{n}^\pi = \{U_{\pi(1)},U_{\pi(2)},\ldots, U_{\pi(n)}\}$ and $\mathcal{Y}_{m}^\pi = \{U_{\pi(n+1)},U_{\pi(n+2)},\ldots, U_{\pi(n+m)}\}$ and compute the statistic $\hat\zeta_{n,m}^{\phi,\pi}$ (permutation analog of ${\hat \zeta}_{n,m}^\phi$).
    
    \item Return the critical value $c_{\alpha}^\phi$ defined by,
    $$c_{\alpha}^\phi = \inf\{t\in\R: \frac{1}{N!}\sum_{\pi\in \mathcal{S}_N}\mathbbm{1}[\hat\zeta_{n,m}^{\phi,\pi}\leq t]\geq 1-\alpha\},$$
    where $\mathcal{S}_N$ is the set of all permutations of $\{1,2,\ldots,N\}$.
\end{itemize}


 The proposed test rejects $H_0$ if $\hat\zeta_{n,m}^\phi$ is larger than $c_{\alpha}^\phi$ or equivalently the corresponding $p$-value $p_{n,m} = \frac{1}{N!}\big\{\sum_{\pi\in \mathcal{S}_N}\mathbbm{1}[\hat\zeta_{n,m}^{\phi,\pi}\geq \hat\zeta_{n,m}^\phi]\big\}$ is smaller than $\alpha$. Here the cut-off $c_{\alpha}^\phi$ is a random quantity, but using the following theorem one can prove that it converges to zero as $\min\{n,m\}$ diverges to infinity. 
\begin{thm}
 If $\phi$ satisfies the conditions mentioned in Proposition 2.1, 
 as $\min\{n,m\}$ grows to infinity, $nm/(n+m)\hat\zeta_{n,m}^{\phi,\pi}$ converges in distribution to $\sum_{k=1}^\infty \lambda_k Z_k^2$ where $\{Z_k\}$ and $\{\lambda_k\}$ are as in Theorem \ref{largesampleres}.
 \label{local-limit-per}
\end{thm}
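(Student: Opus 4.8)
The plan is to show that, to leading order, the permutation distribution of the scaled statistic behaves like a degenerate two-sample V-statistic computed from the pooled sample, whose conditional limit is the same chi-square mixture appearing in Theorem \ref{largesampleres}(b). Throughout I would condition on the pooled sample $\mathcal{U} = \{U_1,\dots,U_N\}$ (with $N=n+m$) and treat the permutation $\pi$ as the only source of randomness; it then suffices to establish the stated convergence of the conditional law, in probability with respect to the data.

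First I would fix the correct centering. Writing $\hat{H}_N = (n\hat{F}_n + m\hat{G}_m)/N$ for the pooled empirical measure, I would expand $\hat\zeta_{n,m}^{\phi,\pi}$ through the Hoeffding/H-decomposition of the symmetric kernel $g^*$ taken with respect to $\hat{H}_N$. The crucial point is that, conditionally on $\mathcal{U}$, the two permuted subsamples $\mathcal{X}_n^\pi$ and $\mathcal{Y}_m^\pi$ are each exchangeable samples drawn without replacement from $\hat{H}_N$, so the first-order (H\'ajek) projection terms have the same conditional mean for both groups and cancel in the difference structure that defines $g^*$. Hence the first-order part of $nm/(n+m)\,\hat\zeta_{n,m}^{\phi,\pi}$ vanishes in conditional probability, and the statistic is asymptotically equivalent to the degenerate second-order part, a quadratic form $\sum_{i\ne j} w_i^\pi w_j^\pi\,\tilde{h}_N(U_i,U_j)$ in the conditionally degenerate kernel $\tilde{h}_N$ obtained by double-centering the two-argument projection of $g^*$ against $\hat{H}_N$, where the weights $w_i^\pi$ encode group membership after permutation.

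Next I would invoke a conditional central limit theorem for this degenerate quadratic form under sampling without replacement. Two steps are needed: (i) approximate the without-replacement permutation scheme by the corresponding with-replacement (i.i.d.\ from $\hat{H}_N$) scheme, whose error is negligible when $\min\{n,m\}\to\infty$ with $n/N\to\lambda$; and (ii) apply the spectral representation of degenerate V-statistics, so that the conditional limit is $\sum_k \lambda_k^{(N)} Z_k^2$, where $\{\lambda_k^{(N)}\}$ are the eigenvalues of the integral operator associated with $\tilde{h}_N$ on $L_2(\hat{H}_N)$ and $\{Z_k\}$ are i.i.d.\ standard normals. Finally I would show $\lambda_k^{(N)}\to\lambda_k$, the eigenvalues of the analogous operator on $L_2(H)$ with $H=\lambda F+(1-\lambda)G$, using the almost-sure weak convergence $\hat{H}_N\Rightarrow H$ together with continuity and uniform-integrability properties of $\phi(|\langle\cdot,\cdot\rangle-\langle\cdot,\cdot\rangle|^2)$ to obtain Hilbert--Schmidt convergence of the operators and hence convergence of their spectra; under $H_0$ this $H$ coincides with the common distribution, so $\{\lambda_k\}$ matches the sequence of Theorem \ref{largesampleres}(b).

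The main obstacle is the spectral step: controlling the degenerate kernel on the infinite-dimensional space $\mathcal{H}$ and proving that the empirical eigenvalues $\lambda_k^{(N)}$ converge to the population eigenvalues $\lambda_k$ while preserving square-summability. This requires showing that the kernel operator is Hilbert--Schmidt, so that the eigenvalue sequence lies in $\ell_2$ and the mixture $\sum_k \lambda_k Z_k^2$ is well defined, and that the convergence of $\hat{H}_N$ to $H$ lifts to norm convergence of the associated operators. The cancellation of the first-order terms under pooled centering, which is precisely what makes the leading term genuinely degenerate, must also be verified with care, since it is what rules out a nondegenerate Gaussian contribution and yields the chi-square mixture rather than a normal limit.
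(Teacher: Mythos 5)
Your overall strategy is sound but it is a genuinely different route from the paper's. The paper does not condition on the data or analyze the permutation distribution directly: it treats the permuted statistic unconditionally, introduces the permutation empirical measures $\Tilde{P}_{n,N}$ and $\Tilde{Q}_{m,N}$ together with the pooled measure $H_N$, invokes the permutation empirical-process CLT (Theorem 3.7.1 of van der Vaart and Wellner) to get $\sqrt{n}(\Tilde{P}_{n,N}-H_N)\Rightarrow\sqrt{1-\lambda}\,\mathbb{G}_H$ with $H=\lambda F+(1-\lambda)G$ and the identity $\sqrt{m}(\Tilde{Q}_{m,N}-H_N)=-\sqrt{n/m}\,\sqrt{n}(\Tilde{P}_{n,N}-H_N)$, and then simply reruns the second-order expansion from the proof of Theorem \ref{largesampleres}(b) to land on $\int h\,d\mathbb{G}_H\,d\mathbb{G}_H$, whose spectral decomposition gives $\sum_k\lambda_k Z_k^2$ directly. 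This buys two things your plan has to pay for separately: the limit is expressed immediately in terms of the \emph{population} kernel $h$ on $L_2(H)$, so no empirical-to-population eigenvalue convergence is ever needed; and the delicate combinatorics of sampling without replacement are absorbed into the cited empirical-process theorem. Your conditional approach is the more classical one and, if completed, yields the stronger conditional-in-probability statement; but the two steps you flag as obstacles are real and are precisely what the paper avoids. In particular, be careful with your step (i): replacing without-replacement sampling by i.i.d.\ draws from $\hat H_N$ is \emph{not} automatically negligible at the $1/N$ scale of a degenerate quadratic form (this is exactly the scale at which finite-population corrections live), and a naive substitution can produce wrong constants; the fact that the correct answer here has total weight $(1-\lambda)^2+2\lambda(1-\lambda)+\lambda^2=1$ is a structural feature of the two-sample contrast that your argument must actually exhibit rather than assume. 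The Hilbert--Schmidt convergence of the empirical operators and the $\ell_2$-stability of the spectra would also need to be proved, not just asserted, so as written your proposal is a viable plan with its hardest steps still open rather than a complete alternative proof.
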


In particular, under $H_0$, the permuted test statistic $nm/(n+m)\hat\zeta_{n,m}^{\phi,\pi}$ and the estimator $nm/(n+m)\hat\zeta_{n,m}^{\phi}$ attains the same limiting distribution as $\min\{n,m\}$ diverges to infinity. Hence the permutation test asymptotically attains the level of significance $\alpha$ and it turns out to be consistent for any fixed alternative. 




\begin{rem}
The proposed permutation test is consistent even when $n/(n+m)\to0\text{ or }1$. Hence, even for the extremely unbalanced scenario, it performs well if $\min\{n,m\}$ is sufficiently large. 
\end{rem}


However, in practice, it is not computationally feasible to consider all permutations even when $N$ is moderately large. In such scenario, we generate random permutations $\pi_1,\pi_2\ldots,\pi_B$ of the set $\{1,2\ldots, N\}$ and obtain a randomized p-value
$$p_{n,m,B} = \frac{1}{B+1}\big\{\sum_{i=1}^B\mathbbm{1}[\hat\zeta_{n,m}^{\phi,\pi_i}\geq \hat\zeta_{n,m}^{\phi}]+1\big\}.$$
We have seen that the use all the $N!$ permutations leads to the p-value
$$p_{n,m} = \frac{1}{N!}\Big\{\sum_{\pi\in \mathcal{S}_N}\mathbbm{1}[\hat\zeta_{n,m}^{\phi,\pi}\geq \hat\zeta_{n,m}^{\phi}]\Big\},$$
Naturally, one would expect $p_{n,m,B}$ and $p_{n,m}$ to be close as the number of random permutations $B$ grows to infinity. This is asserted by the following proposition.

\begin{prop}
For any given $\mathcal{U}$, $p_{n,m,B}$ converges almost surely to $p_{n,m}$ as $B$ grows to infinity.
\label{MCpval}
\end{prop}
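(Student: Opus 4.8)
The plan is to recognize that, once the pooled sample $\mathcal{U}$ is fixed, the only source of randomness in $p_{n,m,B}$ is the Monte Carlo draw of the permutations $\pi_1,\ldots,\pi_B$, so the statement reduces to a direct application of the strong law of large numbers. First I would condition on $\mathcal{U}$, so that both the observed statistic $\hat\zeta_{n,m}^{\phi}$ and each of the $N!$ permuted values $\{\hat\zeta_{n,m}^{\phi,\pi}:\pi\in\mathcal{S}_N\}$ become fixed real numbers. The target $p_{n,m}$ is then a fixed constant, namely the fraction of permutations $\pi\in\mathcal{S}_N$ for which $\hat\zeta_{n,m}^{\phi,\pi}\ge\hat\zeta_{n,m}^{\phi}$.

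Next I would set $W_i = \mathbbm{1}[\hat\zeta_{n,m}^{\phi,\pi_i}\ge\hat\zeta_{n,m}^{\phi}]$. Since the $\pi_i$ are drawn independently and uniformly from $\mathcal{S}_N$, the $W_i$ form an i.i.d.\ sequence of Bernoulli random variables with
$$\E[W_1] = \P\big(\hat\zeta_{n,m}^{\phi,\pi_1}\ge\hat\zeta_{n,m}^{\phi}\big) = \frac{1}{N!}\sum_{\pi\in\mathcal{S}_N}\mathbbm{1}[\hat\zeta_{n,m}^{\phi,\pi}\ge\hat\zeta_{n,m}^{\phi}] = p_{n,m}.$$
These bounded i.i.d.\ variables trivially have finite mean, so the strong law of large numbers yields $B^{-1}\sum_{i=1}^B W_i \to p_{n,m}$ almost surely as $B\to\infty$.

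Finally I would absorb the $+1$ corrections by writing
$$p_{n,m,B} = \frac{B}{B+1}\cdot\frac{1}{B}\sum_{i=1}^B W_i + \frac{1}{B+1},$$
and let $B\to\infty$: the prefactor $B/(B+1)$ tends to $1$, the averaged sum tends to $p_{n,m}$ almost surely, and the additive term $1/(B+1)$ vanishes. Hence $p_{n,m,B}\to p_{n,m}$ almost surely. Since the argument holds for every fixed realization of $\mathcal{U}$, this is exactly the almost-sure convergence asserted in the proposition.

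There is essentially no serious obstacle here; the only points requiring care are bookkeeping ones --- confirming that the randomness is isolated to the permutation draws once $\mathcal{U}$ is conditioned upon, verifying that uniform sampling of each $\pi_i$ makes $\E[W_1]$ coincide \emph{exactly} with $p_{n,m}$ rather than a continuity-corrected variant, and checking that the deterministic Ces\`aro-type correction coming from the $+1$ terms does not shift the limit.
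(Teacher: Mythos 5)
Your proof is correct, but it follows a genuinely different route from the paper. You condition on $\mathcal{U}$, observe that the indicators $W_i=\mathbbm{1}[\hat\zeta_{n,m}^{\phi,\pi_i}\ge\hat\zeta_{n,m}^{\phi}]$ are i.i.d.\ Bernoulli with mean exactly $p_{n,m}$ (since each $\pi_i$ is uniform on $\mathcal{S}_N$), and invoke the strong law of large numbers, absorbing the $+1$ corrections at the end. The paper instead introduces the conditional distribution functions $F(t)=\frac{1}{N!}\sum_{\pi}\mathbbm{1}[\hat\zeta^{\phi,\pi}_{n,m}\le t]$ and $F_B(t)=\frac{1}{B}\sum_{i}\mathbbm{1}[\hat\zeta^{\phi,\pi_i}_{n,m}\le t]$, bounds $|p_{n,m}-p_{n,m,B}|\le\sup_{t}|F(t)-F_B(t)|+\frac{1}{B+1}$, and applies the Dvoretzky--Kiefer--Wolfowitz inequality to get the exponential tail bound $\P\{\sup_t|F(t)-F_B(t)|>\epsilon\}\le 2e^{-2B\epsilon^2}$, from which almost sure convergence follows (via Borel--Cantelli, a step the paper leaves implicit). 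Your argument is more elementary and arguably cleaner for the stated qualitative claim, since it avoids the uniform empirical-process machinery entirely; the paper's approach buys a non-asymptotic, quantitative concentration bound that is uniform over the evaluation point and could in principle guide the choice of $B$ in practice. Both arguments rest on the same implicit assumption that the $\pi_i$ are sampled i.i.d.\ uniformly from $\mathcal{S}_N$, so neither is at a disadvantage there.
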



So, when $B$ and $\min\{n,m\}$ are sufficiently large, the randomized permutation test (which is used in practice) approximates the oracle test, i.e., the test that assumes the knowledge of the underlying data generating distributions. Such knowledge is generally never available. While the randomized permutation test can be applied without such knowledge. This strongly advocates the use of the randomized permutation test in practice.

\subsection{Local asymptotic behaviour of the test}
\label{sec-loc}
In this section, we construct a locally asymptotically normal sequence of contiguous alternatives and study the behaviour of our test under such alternatives. 
Suppose that $Z_1,Z_2,\ldots,Z_N$ are independent and identically distributed functional random variables with distribution $F$. 
Define $F^{(N)} = (1-\delta_N) F+\delta_N L$, where $F$ and $L$ are two probability distributions on $\H$, and $\{\delta_N\}$ is a sequence in $(0,1)$ that converges to zero as $N$ grows to infinity. 
Clearly, the total variation distance between $F^{(N)}$ and $F$ converges to zero as $N$ diverges to infinity. Hence, $F^{(N)}$ and $F$ are mutually contiguous for any probability distribution $L$ and a sequence $\{\delta_N\}$ in $(0,1)$ that converges to zero as $N$ increases.
Now, for studying the local behavior of our test, we assume that
\begin{itemize}
    \item[(A1)] $L$ is absolutely continuous with respect to $F$ with square integrable density $\ell(.)$. 
\end{itemize} 

Under assumption (A1), $F^{(N)}$ is absolutely continuous with respect to $F$ and for $\delta_N = \alpha/\sqrt{N}$, we have the following result on the local asymptotic normality for functional random variables.
\begin{thm}
    Under assumption (A1) and $\delta_N = \alpha/\sqrt{N}$, the Radon-Nikodym derivative of $F^{(N)}$ with respect to $F$ is 
    $\Big(1+\frac{\alpha}{\sqrt{N}}\big(\ell(z)-1\big)\Big)$, and as $N$ goes to infinity, we have
    $$\left|\log\Big\{\prod_{i=1}^N\frac{dF^{(N)}}{dF}(Z_i)\Big\}-\frac{\alpha}{\sqrt{N}}\sum_{i=1}^N \Big(\ell(Z_i)-1\Big)+\frac{\alpha^2}{2}\E\Big\{\ell(Z_1)-1\Big\}^2\right|\stackrel{P}{\rightarrow} 0,$$
    \label{lanfd}
\vspace{-0.1in}
\end{thm}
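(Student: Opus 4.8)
The plan is to read this as a standard local asymptotic normality (LAN) expansion along a contamination path, proved by a second-order Taylor expansion of the log-likelihood ratio combined with a Lindeberg-type truncation that uses assumption (A1) precisely where it matters. First I would record the Radon--Nikodym derivative: since $L\ll F$ with $dL=\ell\,dF$, we have $dF^{(N)}=(1-\delta_N)\,dF+\delta_N\ell\,dF=\big(1+\delta_N(\ell-1)\big)\,dF$, so $\tfrac{dF^{(N)}}{dF}(z)=1+\tfrac{\alpha}{\sqrt N}(\ell(z)-1)$, which is the first assertion. Writing $W_i=\ell(Z_i)-1$ and $x_{N,i}=\tfrac{\alpha}{\sqrt N}W_i$, the log-likelihood ratio is $\Lambda_N=\sum_{i=1}^N\log(1+x_{N,i})$. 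Two elementary facts drive everything: $\E[W_1]=\int\ell\,dF-1=0$ because $L$ is a probability measure, and $\E[W_1^2]=\E\{\ell(Z_1)-1\}^2<\infty$ by (A1); moreover $\ell\ge 0$ forces $x_{N,i}\ge -\alpha/\sqrt N$, so $1+x_{N,i}$ stays bounded away from $0$ for all large $N$ and every logarithm is well defined.

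Next I would isolate the quadratic term. Setting $r(x)=\log(1+x)-x+\tfrac12 x^2$ and using $\log(1+x)-x=r(x)-\tfrac12 x^2$, a direct rearrangement gives
$$\Lambda_N-\frac{\alpha}{\sqrt N}\sum_{i=1}^N W_i+\frac{\alpha^2}{2}\E[W_1^2]=\sum_{i=1}^N r(x_{N,i})+\Big(\frac{\alpha^2}{2}\E[W_1^2]-\frac{\alpha^2}{2N}\sum_{i=1}^N W_i^2\Big).$$
The bracketed term tends to $0$ in probability by the weak law of large numbers applied to the i.i.d.\ summands $W_i^2$, which have finite mean by (A1). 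It therefore remains to show $R_N:=\sum_{i=1}^N r(x_{N,i})\stackrel{P}{\rightarrow}0$.

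The main obstacle, and the only point where square-integrability rather than boundedness of $\ell$ is essential, is the uniform control of the remainder $R_N$ over a triangular array whose entries $W_i$ are unbounded. I would handle it by truncation. From $r'(x)=x^2/(1+x)\ge 0$ one checks that $0\le r(x)\le \tfrac12 x^2$ for $x\ge 0$ and $|r(x)|\le \tfrac23|x|^3$ for $|x|\le\tfrac12$. Fix $\eta\in(0,\tfrac12]$ and split each summand according to $\{|x_{N,i}|\le\eta\}$ or $\{|x_{N,i}|>\eta\}$. On the first event the cubic bound gives $\sum_i|r(x_{N,i})|\mathbbm{1}[|x_{N,i}|\le\eta]\le\tfrac23\eta\sum_i x_{N,i}^2$, whose expectation is at most $\tfrac23\eta\,\alpha^2\E[W_1^2]$, hence uniformly small in $N$ once $\eta$ is small. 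On the second event, for $N$ large enough that $\alpha/\sqrt N<\eta$ (so that $x_{N,i}>0$ there), the quadratic bound gives $\sum_i|r(x_{N,i})|\mathbbm{1}[|x_{N,i}|>\eta]\le\tfrac12\sum_i x_{N,i}^2\,\mathbbm{1}[x_{N,i}>\eta]$, and
$$\E\Big[\sum_{i=1}^N x_{N,i}^2\,\mathbbm{1}[x_{N,i}>\eta]\Big]=\alpha^2\,\E\big[W_1^2\,\mathbbm{1}[W_1>\eta\sqrt N/\alpha]\big]\to 0$$
as $N\to\infty$ by dominated convergence, since $W_1^2$ is integrable and the truncation level diverges.

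Combining the two pieces through Markov's inequality, sending $N\to\infty$ for fixed $\eta$ and then letting $\eta\to 0$, yields $R_N\stackrel{P}{\rightarrow}0$ and completes the argument. Equivalently, one could verify that the path $z\mapsto\sqrt{1+\tfrac{\alpha}{\sqrt N}(\ell(z)-1)}$ is differentiable in quadratic mean with score $\tfrac12(\ell-1)$ and invoke the general LAN theorem; I would nonetheless favour the direct truncation argument above, since it is self-contained and makes transparent that (A1) is exactly the integrability needed to force the remainder to vanish.
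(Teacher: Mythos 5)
Your proposal is correct and follows the same overall route as the paper: derive the Radon--Nikodym derivative from $dF^{(N)}=(1+\delta_N(\ell-1))\,dF$, Taylor-expand $\log(1+y)$ to second order, apply the law of large numbers to the quadratic term $\frac{\alpha^2}{2N}\sum_i(\ell(Z_i)-1)^2$, and show the remainder is $o_P(1)$. The only genuine difference is in how the remainder is killed. The paper writes the remainder in Peano form $\tfrac12 y^2\beta(y)$ and bounds $\sum_i\tfrac{\alpha^2}{2N}W_i^2\,\beta(\alpha W_i/\sqrt N)$ by $\max_i|\beta(\alpha W_i/\sqrt N)|$ times an $O_P(1)$ factor, asserting that $\max_i|\ell(Z_i)-1|$ is tight; as stated that assertion is too strong (the maximum of $N$ i.i.d.\ unbounded variables is not a tight sequence), and what is actually needed is $\max_{i\le N}|W_i|/\sqrt N\stackrel{P}{\to}0$, which follows from square-integrability via $N\,\P(|W_1|>\epsilon\sqrt N)\le \epsilon^{-2}\E[W_1^2\mathbbm{1}(|W_1|>\epsilon\sqrt N)]\to0$ --- a small argument the paper omits. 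Your truncation argument, splitting on $\{|x_{N,i}|\le\eta\}$ with the cubic bound and on $\{x_{N,i}>\eta\}$ with the quadratic bound plus dominated convergence, supplies exactly this missing control in a self-contained way and makes explicit that (A1) is the precise integrability needed. So your write-up is not just correct but slightly more rigorous than the paper's at the one delicate step.
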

Since for functional random variables, there is no universally accepted dominating measure such as the Lebesgue measure, the quadratic mean differentiability assumption is quite difficult to formulate. But contiguity through contamination alternatives is naturally extendable in such cases.  
Let $F^{(n)} =  F$ and $G^{(m)} = (1-\delta/\sqrt{m})F+\delta/\sqrt{m} L$ for some probability distribution $L$ satisfying (A1) and a positive number $\delta$. Clearly, the alternative $(F^{(n)},G^{(m)})$ is contiguous with the null $(F,F)$. The next theorem shows that under $(F^{(n)},G^{(m)})$, $nm/(n+m)\hat\zeta_{n,m}^\phi$ converges in distribution to a tight random variable.

\begin{thm}
Under $(F^{(n)},G^{(m)})$, as $\min\{n,m\}$ grows to infinity, $nm/(n+m)\hat\zeta_{n,m}^{\phi}$ converges in distribution to $\sum_{k=1}^\infty \lambda_k \big(\xi_k+\sqrt{\lambda}\delta (\int \varphi_kdL-\int \varphi_kdF)\big)^2$, where $\lim n/(n+m)=\lambda\in (0,1)$, $\{\xi_k\}$ is a sequence of i.i.d. standard normal random variables and $\{\lambda_k\}$ and $\{\varphi_k\}$ are the eigenvalues and eigenfunctions of the integral equation $\int h(u,v)\gamma(v)dF(v) = \lambda \gamma(u)$, where for $U_1,U_2,U_3\stackrel{i.i.d.}{\sim}H$, $h(u,v) = \E\Big\{\phi\big(|\langle u,U_1\rangle, \langle U_2, U_1\rangle|^2\big)\Big\} + \E\Big\{\phi\big(|\langle v,U_1\rangle, \langle U_2, U_1\rangle|^2\big)\Big\} - 2\E\Big\{\phi\big(|\langle u,U_1\rangle - \langle v, U_1\rangle|^2\big)\Big\}.$
    \label{local-limit}
\end{thm}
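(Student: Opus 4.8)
The plan is to combine the null limiting behavior of the V-statistic (Theorem \ref{largesampleres}(b)) with the local asymptotic normality result (Theorem \ref{lanfd}) via Le Cam's third lemma. Under the null $(F,F)$, Theorem \ref{largesampleres}(b) gives us the spectral decomposition $nm/(n+m)\hat\zeta_{n,m}^\phi \Rightarrow \sum_k \lambda_k Z_k^2$, where the $\lambda_k,\varphi_k$ are the eigenpairs of the integral operator associated with the degenerate kernel $h(u,v)$ obtained by taking the conditional expectation of the core function $g^*$ (with $H=F$ under the null). The first step is therefore to make this representation explicit: I would write $\sqrt{nm/(n+m)}$ times the first-order Hoeffding projection of $\hat\zeta_{n,m}^\phi$ as a linear combination of the eigenfunction-indexed empirical processes $\xi_k^{(N)} := N^{-1/2}\sum \varphi_k$-type sums, and identify the joint asymptotic distribution of the vector $(\xi_1^{(N)},\xi_2^{(N)},\ldots)$ under $(F,F)$ as a sequence of i.i.d.\ standard normals, so that the quadratic form reproduces $\sum_k\lambda_k Z_k^2$.

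Next I would pass from the null to the contiguous alternative $(F^{(n)},G^{(m)})$. Since $F^{(n)}=F$ is unchanged, the shift comes entirely from the $Y$-sample, which is drawn from $G^{(m)}=(1-\delta/\sqrt m)F+(\delta/\sqrt m)L$. By Theorem \ref{lanfd}, the log-likelihood ratio of the $Y$-sample under $G^{(m)}$ versus $F$ is asymptotically $\frac{\delta}{\sqrt m}\sum_{j=1}^m(\ell(Y_j)-1)-\frac{\delta^2}{2}\E(\ell-1)^2$, a locally asymptotically normal sequence. The key computation is to determine the joint limit, under the null, of the coordinate vector $(\xi_k^{(N)})_k$ together with this log-likelihood ratio statistic. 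I would compute the asymptotic covariance between each $\xi_k^{(N)}$ and the score $\frac{\delta}{\sqrt m}\sum(\ell(Y_j)-1)$; because $\xi_k^{(N)}$ separates into an $X$-part and a $Y$-part and only the $Y$-part correlates with the score, and using $\lim n/(n+m)=\lambda$, this covariance should come out to $\sqrt{\lambda}\,\delta\big(\int\varphi_k\,dL-\int\varphi_k\,dF\big)$, exactly the mean shift appearing in the statement. Le Cam's third lemma then says that under the contiguous alternative the vector $(\xi_k^{(N)})_k$ converges to independent normals with these shifted means and unit variances, i.e.\ $\xi_k+\sqrt\lambda\,\delta(\int\varphi_k\,dL-\int\varphi_k\,dF)$.

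Finally I would apply the continuous mapping theorem to the quadratic form. Having established that $(\xi_1^{(N)},\ldots,\xi_K^{(N)})$ converges under the alternative to the shifted Gaussian vector for each fixed $K$, and that the tail $\sum_{k>K}\lambda_k(\cdot)^2$ is uniformly negligible (using the square-summability of $\{\lambda_k\}$ together with contiguity, which preserves tightness and uniform integrability bounds from the null to the alternative), the sum $\sum_k\lambda_k(\xi_k^{(N)})^2$ converges to $\sum_k\lambda_k\big(\xi_k+\sqrt\lambda\,\delta(\int\varphi_k\,dL-\int\varphi_k\,dF)\big)^2$. I expect the main obstacle to be the tail-truncation argument: one must show the negligibility of the higher-order eigenvalue contributions holds \emph{under the alternative}, not just the null. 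The cleanest route is to invoke contiguity of $(F^{(n)},G^{(m)})$ with $(F,F)$ so that any sequence of events that is $o_P(1)$ under the null remains $o_P(1)$ under the alternative, thereby transferring the null tail bound directly; care is needed because the quadratic form is not a bounded functional, so I would first establish the tail bound as a statement about convergence in probability of the truncated remainder and then apply contiguity to that remainder.
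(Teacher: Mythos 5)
Your proposal is correct and follows essentially the same route as the paper: the paper also applies Le Cam's third lemma together with the LAN expansion of Theorem \ref{lanfd} to show that the relevant empirical process acquires the mean shift $\delta(L-F)$ under $G^{(m)}$ (this is isolated as Lemma A.4), and then reruns the spectral/quadratic-form argument of Theorem \ref{largesampleres}(b) to obtain $\sum_k\lambda_k\big(\xi_k+\sqrt{\lambda}\,\delta(\int\varphi_k\,dL-\int\varphi_k\,dF)\big)^2$. The only (minor) difference is in the tail control: you transfer the negligibility of the truncated remainder from the null to the alternative via contiguity, whereas the paper instead bounds $\max_i\E\big|\int f_i\,d\mathbb{B}'_F\big|$ directly using $\big|\int f_i\,dL\big|\le(\int\ell^2\,dF)^{1/2}$ so that the $L_1$-Cauchy argument for the shifted limiting series goes through unchanged; both devices work.
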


To study the asymptotic behavior of the test, we also need to study the convergence of the permuted statistic $\hat\zeta_{n,m}^{\phi,\pi}$ under the sequence of alternatives $(F^{(n)},G^{(m)})$. This is given by the following theorem.

\begin{thm}
 If $\phi$ satisfies the conditions mentioned in Proposition 2.1, 
 under the contiguous alternative $(F^{(n)},G^{(m)})$, as $\min\{n,m\}$ grows to infinity, 
 $nm/(n+m)\hat\zeta_{n,m}^{\phi,\pi}$ converges in distribution to $\sum_{k=1}^\infty \lambda_k Z_k^2$ where $\{Z_k\}$ and $\{\lambda_k\}$ are as in Theorem \ref{largesampleres}.
 \label{local-limit-per-2}
\end{thm}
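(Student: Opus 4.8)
The plan is to exploit contiguity together with the conditional nature of the permutation distribution, thereby reducing the claim to the null-case result already obtained in Theorem \ref{local-limit-per}. The central observation is that the permuted statistic $\hat\zeta_{n,m}^{\phi,\pi}$ is computed from the \emph{pooled} sample $\mathcal{U}$ by relabelling via a uniform random permutation; hence its conditional law given $\mathcal{U}$ is a measurable functional of $\mathcal{U}$ alone and carries no information about which group an observation originally belonged to. Denote by $\Psi_N(\mathcal{U})$ this conditional law of $nm/(n+m)\hat\zeta_{n,m}^{\phi,\pi}$, viewed as an element of the space of probability measures on $\R$ equipped with, say, the bounded--Lipschitz metric $d_{BL}$. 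I would first verify that the argument behind Theorem \ref{local-limit-per} in fact yields the \emph{conditional} statement
$$d_{BL}\Big(\Psi_N(\mathcal{U}),\ \mathcal{L}\big(\textstyle\sum_{k=1}^\infty \lambda_k Z_k^2\big)\Big)\stackrel{P}{\longrightarrow}0 \quad \text{under } (F,F),$$
the driving force being the weak convergence of the pooled empirical measure to $F$ and the continuity of the eigenvalues $\{\lambda_k\}$ of the kernel operator in the pooled measure.

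Writing $T_N := d_{BL}\big(\Psi_N(\mathcal{U}),\ \mathcal{L}(\sum_{k}\lambda_k Z_k^2)\big)$, the key point is that $T_N$ is a deterministic, label-free function of the unordered pooled sample, so the \emph{same} random variable $T_N$ is defined under the null $(F,F)$ and under the alternative $(F^{(n)},G^{(m)})$. Next I would invoke contiguity. The text already notes that $(F^{(n)},G^{(m)})$ is contiguous to $(F,F)$ (via the local asymptotic normality of Theorem \ref{lanfd}); since convergence in probability to a constant is preserved along contiguous sequences of measures---an immediate consequence of the definition of contiguity---we obtain $T_N \stackrel{P}{\to} 0$ under $(F^{(n)},G^{(m)})$ as well. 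Thus the conditional permutation law $\Psi_N(\mathcal{U})$ converges in probability, under the alternative, to the \emph{non-random} limit $\mathcal{L}(\sum_{k=1}^\infty \lambda_k Z_k^2)$.

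It then remains to pass from this conditional convergence to the stated unconditional one: conditional convergence in probability of $\Psi_N(\mathcal{U})$ to a fixed limiting law entails unconditional convergence in distribution of $nm/(n+m)\hat\zeta_{n,m}^{\phi,\pi}$ to $\sum_{k=1}^\infty \lambda_k Z_k^2$, which is the desired conclusion. This last implication is routine once the conditional statement is in hand, e.g.\ by integrating bounded Lipschitz test functions and applying dominated convergence.

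The main obstacle I anticipate is the very first step: extracting from Theorem \ref{local-limit-per} a genuinely \emph{conditional} convergence-in-probability statement for $\Psi_N(\mathcal{U})$, rather than merely the unconditional weak convergence as stated. This distinction is essential, because ordinary convergence in distribution does \emph{not} transfer along contiguous sequences---Le Cam's third lemma would generically introduce a shift in the limit, and indeed Theorem \ref{local-limit} shows such a shift does arise for the unpermuted statistic. It is precisely the conditional, label-free formulation that renders the target a constant in the space of probability measures, so that contiguity applies with no shift at all. Making this rigorous amounts to checking that the pooled empirical measure still converges to $F$ under the alternative---which holds because $G^{(m)}\Rightarrow F$, indeed in total variation since $\delta/\sqrt{m}\to 0$---and that the eigenvalues $\{\lambda_k\}$ defining the limit depend continuously on the pooled measure; these are the technical points at the heart of the argument.
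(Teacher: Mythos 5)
Your strategy is sound and genuinely different from the paper's. The paper attacks the problem head-on with Le Cam's third lemma: it computes the joint null limit of the centered permutation linear statistics $\frac{1}{n}\sum_i f(U_{\pi(i)})-\frac{1}{N}\sum_i f(U_i)$ and the log-likelihood ratio of $(F^{(n)},G^{(m)})$ against $(F,F)$, decomposing both into sums over the two subsamples, and finds that the asymptotic cross-covariance vanishes exactly (the two subsample contributions, each proportional to $\lambda(1-\lambda)\,\alpha\int \tilde f\,dL$ with $\tilde f = f-\E_F f$, cancel). Le Cam's third lemma then gives the \emph{unshifted} limit $\sqrt{1-\lambda}\,\mathbb{G}_F$ for the permutation empirical process under the alternative, and the spectral argument of Theorem \ref{largesampleres}(b) finishes the proof. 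Your route makes the absence of a shift structurally transparent instead of computationally verified: the conditional permutation law is a label-free functional of the pooled sample, so once its null convergence is phrased as convergence in probability of a random measure to a \emph{constant} target, contiguity transfers it with no covariance calculation at all. That is a legitimate and well-established alternative (essentially the Hoeffding/Chung--Romano paradigm), and your diagnosis of why naive contiguity fails for the unpermuted statistic --- and why it does not fail here --- is exactly right.

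The caveat, and it is a real one, is that your linchpin is not available from the paper as it stands. Theorem \ref{local-limit-per} and its proof are phrased \emph{unconditionally}: the permutation empirical process result is invoked as ordinary convergence in distribution, and the passage to the degenerate V-statistic is an unconditional $L_1$/characteristic-function truncation argument. Upgrading this to $d_{BL}\big(\Psi_N(\mathcal{U}),\mathcal{L}(\sum_k\lambda_k Z_k^2)\big)\to 0$ in probability requires a conditional CLT for the two-sample permutation empirical process (the van der Vaart--Wellner permutation results can supply this) \emph{together with} a conditional version of the spectral truncation step, i.e., control of the empirical kernel operator and the tail of its eigenexpansion given the data. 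You gesture at this via "continuity of the eigenvalues in the pooled measure," but that is where essentially all of the work of the theorem would sit on your route; it is doable, yet it is not a corollary of anything already proven, whereas the paper's covariance computation only reuses machinery already set up for Theorem \ref{largesampleres}. If you pursue your version, the conditional null statement should be isolated and proved as a separate lemma before the contiguity step is invoked.
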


Theorems \ref{local-limit} and \ref{local-limit-per-2} together show that for a suitable choice of $\phi$, under $(F^{(n)},G^{(m)})$ the power of our test converges to a non-trivial limit which is a function of $\delta$. This shows that the proposed test is statistically efficient in the Pitman sense. It can be easily verified that as $\delta$ diverges to infinity the asymptotic power will be unity and it will be equal to the level $\alpha$ when $\delta$ shrinks to zero. The exact expression of the limit is not analytically tractable. So, in Section \ref{emp-efficiency} we compare the efficiency different tests through simulations.  

\section{Empirical performance of the proposed test}

In this section, we evaluate the empirical performance of our test by carrying out some simulated experiments and analyzing a real dataset. Note that in practice, one needs to choose a suitable function $\phi$ to implement the test. There are several choice of $\phi$ available in the literature, but here we take the functions $\phi_1(z) = \sqrt{z}/2$, $\phi_2(z) = 1-\exp(-z/2)$ and $\phi_3(z) = \log(1+z)$ to construct our test, which we further refer to as pBF-L2, pBF-exp and pBF-log tests.

First, we consider some examples for studying the level property of our tests and then we compare their powers with the powers of the tests proposed in \cite{pomann2016two}, \cite{wynne2020kernel} and \cite{pan2018ball}, which are referred to as the FAD (Functional Anderson-Darling) test, WD (Wynne-Duncan) test, and BD (Ball Divergence) test respectively. In our simulated experiments, 
the randomized p-values of the permutation tests are computed based on 500 random permutations and each experiment is repeated 1000 times to estimate the power of a test by the proportion of times it rejected $H_0$. Throughout this section, all tests are considered to have 5\% nominal level.

\subsection{Analysis of simulated data sets}
First, we study the level properties of our tests. For this purpose, we generate the samples $\mathcal{X}$ and $\mathcal{Y}$ from the same distribution. Here we consider three examples (Example 1-3) and compute the power (which is the same as the level when $F=G$) for different sample sizes ($n=m=20,30,40$ and $50$). 

\begin{figure}[!h]
\centering
\begin{tikzpicture}
\begin{axis}[xmin = 20, xmax = 50, ymin = 0, ymax = 0.25, xlabel = {$n=m$}, ylabel = {Estimates}, title = {\bf Example 1}]
\addplot[color = red,   mark = *, step = 1cm,very thin]coordinates{(20,0.05)(30,0.054)(40,0.043)(50,0.042)};

\addplot[color = purple,   mark = *, step = 1cm,very thin]coordinates{(20,0.05)(30,0.06)(40,0.046)(50,0.041)};

\addplot[color = violet,   mark = *, step = 1cm,very thin]coordinates{(20,0.046)(30,0.058)(40,0.045)(50,0.042)};

\end{axis}
\end{tikzpicture}
\begin{tikzpicture}
\begin{axis}[xmin = 20, xmax = 50, ymin = 0, ymax = 0.25, xlabel = {$n$}, ylabel = {Estimates}, title = {\bf Example 2}]
\addplot[color = red,   mark = *, step = 1cm,very thin]coordinates{(20,0.051)(30,0.054)(40,0.042)(50,0.042)};

\addplot[color = purple,   mark = *, step = 1cm,very thin]coordinates{(20,0.052)(30,0.059)(40,0.042)(50,0.04)};

\addplot[color = violet,   mark = *, step = 1cm,very thin]coordinates{(20,0.049)(30,0.052)(40,0.041)(50,0.042)};

\end{axis}
\end{tikzpicture}
\begin{tikzpicture}
\begin{axis}[xmin = 20, xmax = 50, ymin = 0, ymax = 0.25, xlabel = {$n$}, ylabel = {Estimates}, title = {\bf Example 3}]
\addplot[color = red,   mark = *, step = 1cm,very thin]coordinates{(20,0.037)(30,0.034)(40,0.042)(50,0.042)};

\addplot[color = purple,   mark = *, step = 1cm,very thin]coordinates{(20,0.045)(30,0.045)(40,0.044)(50,0.053)};

\addplot[color = violet,   mark = *, step = 1cm,very thin]coordinates{(20,0.037)(30,0.034)(40,0.042)(50,0.042)};

\end{axis}
\end{tikzpicture}
    
    \caption{Results of pBF-L2 (\textcolor{red}{$\tikzcircle{2pt}$}), pBF-exp (\textcolor{violet}{$\tikzcircle{2pt}$}) and pBF-log (\textcolor{purple}{$\tikzcircle{2pt}$}) tests for $n=m=20,30,40$ and $50$ in Examples 1-3.}
    \label{fig:level}
\end{figure}

\vspace{0.1in}
\noindent
\textbf{Example 1} $X$ and $Y$ are independent Wiener process $W$ on $[0,1]$.
\vspace{0.1in}

\noindent
\textbf{Example 2} $X$ and $Y$ are independently distributed as $\mu+W$ on $[0,1]$ where $\mu(t) = t$ and $W$ is the Wiener process. 
\vspace{0.1in}

\noindent
\textbf{Example 3} $X$ and $Y$ are independent random functions defined as $\sum_{i=1}^9 \frac{1}{i^{2.5}}\xi_i \psi_i(t)$, where $\xi_i$ are i.i.d. $N(0,1)$ random variables and $\{\psi_i\}$ is the trigonometric basis on $L_2([0,1])$.
\vspace{0.1in}

Figure \ref{fig:level} shows that the observed level of our tests is approximately $0.05$ in all three examples, which we expect in view of the theoretical results stated in the previous sections. 

Next, we consider some location and scale alternatives (Examples 4 and 5) to compare the power of pBF-L2, pBF-exp, and pBF-log tests with that of FAD, WD, and BD tests. 

\vspace{0.1in}
\noindent
\textbf{Example 4} $X$ is the Wiener process $W$ on $[0,1]$ while $Y$ is distributed as $\mu+W$ and is independent of $X$. We consider two choices of $\mu$, (i) $\mu(t) = r t^2$ and (ii) $\mu(t) = r e^t$, and carry out our experiment for different choices of $r\in \mathbb [0,1]$ as shown in Figure~\ref{fig:5.4} .

\begin{figure}[h!]
\centering








\begin{tikzpicture}
\begin{axis}[xmin = 0, xmax = 1, ymin = 0, ymax = 1, xlabel = {$r$}, ylabel = {Power Estimates}, title = {\bf Example 4.(i)}]
\addplot[color = red, mark = *, step = 1cm,very thin]coordinates{(0,0.042)(0.3,0.149)(0.5,0.345)(0.7,0.616)(1,0.913)};

\addplot[color = violet, mark = *, step = 1cm,very thin]coordinates{(0,0.042)(0.3,0.15)(0.5,0.342)(0.7,0.609)(1,0.908)};

\addplot[color = purple, mark = *, step = 1cm,very thin]coordinates{(0,0.041)(0.3,0.151)(0.5,0.346)(0.7,0.622)(1,0.912)};

\addplot[color = blue, mark = square*, step = 1cm,very thin]coordinates{(0,0.035)(0.3,0.15)(0.5,0.395)(0.7,0.721)(1,0.971)};

\addplot[color = applegreen, mark = diamond*, step = 1cm,very thin]coordinates{(0,0.038)(0.3,0.112)(0.5,0.257)(0.7,0.495)(1,0.826)};

\addplot[color = teal, mark = triangle*, step = 1cm,very thin]coordinates{(0,0.045)(0.3,0.153)(0.5,0.375)(0.7,0.683)(1,0.952)};
\end{axis}
\end{tikzpicture}
\begin{tikzpicture}
\begin{axis}[xmin = 0, xmax = 1, ymin = 0, ymax = 1, xlabel = {$r$}, ylabel = {Power Estimates}, title = {\bf Example 4.(ii)}]
\addplot[color = red, mark = *, step = 1cm,very thin]coordinates{(0,0.042)(0.3,0.975)(0.5,1)(0.7,1)(1,1)};

\addplot[color = violet, mark = *, step = 1cm,very thin]coordinates{(0,0.042)(0.3,0.973)(0.5,1)(0.7,1)(1,1)};

\addplot[color = purple, mark = *, step = 1cm,very thin]coordinates{(0,0.041)(0.3,0.974)(0.5,1)(0.7,1)(1,1)};

\addplot[color = blue, mark = square*, step = 1cm,very thin]coordinates{(0,0.035)(0.3,0.978)(0.5,1)(0.7,1)(1,1)};

\addplot[color = applegreen, mark = diamond*, step = 1cm,very thin]coordinates{(0,0.038)(0.3,0.928)(0.5,1)(0.7,1)(1,1)};

\addplot[color = teal, mark = triangle*, step = 1cm,very thin]coordinates{(0,0.045)(0.3,0.981)(0.5,1)(0.7,1)(1,1)};
\end{axis}
\end{tikzpicture}
    
    \caption{Results of pBF-L2 test (\textcolor{red}{$\bullet$}), pBF-exp test (\textcolor{violet}{$\bullet$}), pBF-log test (\textcolor{purple}{$\bullet$}), FAD test (\textcolor{blue}{$\blacksquare$}), BD test (\textcolor{applegreen}{$\blacklozenge$}) and WD test (\textcolor{teal}{$\blacktriangle$}) for Example 4.(i) and (ii).}
    \label{fig:5.4}
\end{figure}
\vspace{0.1in}
In this example, we generate 50 observations on each of the random variables $X$ and $Y$. Here $X$ is pure noise, whereas $Y$ has a non-zero signal $\mu$. The location difference between the two distributions is an increasing function of $r$. So, as expected, the powers of all tests are increasing with $r$ (see Figure \ref{fig:5.4}). In both scenarios, the pBF tests are competitive with the FAD and WD tests whereas the BD test had a relatively poor performance.

\vspace{0.1in}
\noindent
\textbf{Example 5} $X$ and $Y$ are independent random functions as in Example 3 with respective coefficients denoted as $\xi_i^X$ and $\xi_i^Y$ for each $i=1,2,\ldots,9$. Here we consider two scale problems:
(i) $\xi_i^X$s are i.i.d $N(0,1)$ random variable, while $\xi_i^Y$s are i.i.d. $N(0,\sigma^2)$ random variables; (ii) $\xi_i^X$s are independent standard Cauchy variables and $\xi_i^Y$s are the centered Cauchy random variables with scale parameters $\sigma>0$.

\begin{figure}[h!]
\centering
\begin{tikzpicture}
\begin{axis}[xmin = 0.45, xmax = 3.6, ymin = 0, ymax = 1, xlabel = {$\sigma$}, ylabel = {Estimates}, title = {\bf Example 5.(i)}]
\addplot[color = red, mark = *, step = 1cm,very thin]coordinates{(0.5,0.779)(0.7,0.181)(1,0.05)(1.3,0.098)(1.5,0.235)(2,0.798)(2.5,0.975)(3,0.999)(3.5,1)};

\addplot[color = violet, mark = *, step = 1cm,very thin]coordinates{(0.5,0.87)(0.7,0.274)(1,0.042)(1.3,0.16)(1.5,0.395)(2,0.909)(2.5,0.994)(3,1)(3.5,1)};

\addplot[color = purple, mark = *, step = 1cm,very thin]coordinates{(0.5,0.793)(0.7,0.211)(1,0.053)(1.3,0.131)(1.5,0.334)(2,0.899)(2.5,0.995)(3,1)(3.5,1)};

\addplot[color = blue, mark = square*, step = 1cm,very thin]coordinates{(0.5,0.778)(0.7,0.181)(1,0.046)(1.3,0.101)(1.5,0.231)(2,0.781)(2.5,0.974)(3,1)(3.5,1)};

\addplot[color = applegreen, mark = diamond*, step = 1cm,very thin]coordinates{(0.5,0.976)(0.7,0.408)(1,0.039)(1.3,0.211)(1.5,0.552)(2,0.974)(2.5,1)(3,1)(3.5,1)};

\addplot[color = teal, mark = triangle*, step = 1cm,very thin]coordinates{(0.5,0.962)(0.7,0.349)(1,0.043)(1.3,0.186)(1.5,0.477)(2,0.95)(2.5,0.99)(3,1)(3.5,1)};

\end{axis}
\end{tikzpicture}
\begin{tikzpicture}
\begin{axis}[xmin = 0.45, xmax = 3.6, ymin = 0, ymax = 1, xlabel = {$\sigma$}, ylabel = {Power Estimates}, title = {\bf Example 5.(ii)}]
\addplot[color = red, mark = *, step = 1cm,very thin]coordinates{(0.5,0.278)(0.7,0.098)(1,0.05)(1.3,0.09)(1.5,0.129)(2,0.286)(2.5,0.486)(3,0.652)(3.5,0.763)};

\addplot[color = violet, mark = *, step = 1cm,very thin]coordinates{(0.5,0.6)(0.7,0.188)(1,0.056)(1.3,0.135)(1.5,0.266)(2,0.636)(2.5,0.874)(3,0.962)(3.5,0.989)};

\addplot[color = purple, mark = *, step = 1cm,very thin]coordinates{(0.5,0.541)(0.7,0.167)(1,0.057)(1.3,0.123)(1.5,0.24)(2,0.599)(2.5,0.851)(3,0.956)(3.5,0.983)};

\addplot[color = blue, mark = square*, step = 1cm,very thin]coordinates{(0.5,0.219)(0.7,0.091)(1,0.057)(1.3,0.078)(1.5,0.101)(2,0.255)(2.5,0.464)(3,0.649)(3.5,0.805)};

\addplot[color = applegreen, mark = diamond*, step = 1cm,very thin]coordinates{(0.5,0.735)(0.7,0.259)(1,0.061)(1.3,0.173)(1.5,0.335)(2,0.75)(2.5,0.943)(3,0.978)(3.5,0.995)};

\addplot[color = teal, mark = triangle*, step = 1cm,very thin]coordinates{(0.5,0.553)(0.7,0.185)(1,0.061)(1.3,0.129)(1.5,0.243)(2,0.535)(2.5,0.792)(3,0.919)(3.5,0.965)};
\end{axis}
\end{tikzpicture}
  
\caption{Results of pBF-L2 test (\textcolor{red}{$\bullet$}), pBF-exp test (\textcolor{violet}{$\bullet$}), pBF-log test (\textcolor{purple}{$\bullet$}), FAD test (\textcolor{blue}{$\blacksquare$}), BD test (\textcolor{applegreen}{$\blacklozenge$}) and WD test (\textcolor{teal}{$\blacktriangle$}) for Examples 5 (i) and (ii).}
\label{fig:5.5}
\end{figure}

\vspace{0.1in}
Here also we generate 50 observations on each of the random variables $X$ and $Y$. Note, as $\sigma$ deviates from 1, since the scale difference between the two distributions increases, the power of a test is also expected to increase. Figure \ref{fig:5.5} shows the powers of different tests. Here the BD test has the best performance. In 5.(i) the power of the WD, pBF-exp, and pBF-log tests had similar performance, but in 5.(ii) pBF-exp and pBF-log tests outperforms the WD test. The pBF-L2 and FAD tests had a relatively poor performance in this example.

Next, we consider an examples (Example 6 and 7) where the projection-based tests have superior performance over the distance-based tests.


\vspace{0.1in}
\textbf{Example 6} Define $X(t) = \sum_{i=1}^d \frac{1}{\sqrt{d}}\xi_i \sqrt{2}\sin(2\pi it)$ and $Y(t) = \sum_{i=1}^d \frac{1}{\sqrt{d}}\eta_i \cos(2\pi it)$, where the $\xi_i$s and the $\eta_i$s are independent mean zero random variables with the same variance. Here we consider two cases: (i) $\xi_i$s and $\eta_i$s are i.i.d. $N(0,2)$ random variables with mean $0$ variance $2$, (ii) $\xi_i$s are i.i.d $N(0,2)$ but $\eta_i$s follow the standard t-distribution with $4$ degrees of freedom.

\begin{figure}[!h]
\centering
\begin{tikzpicture}
\begin{axis}[xmin = 1, xmax = 4, ymin = 0, ymax = 1.1, xlabel = {$\log_3(d)$}, ylabel = {Power Estimates}, title = {\bf Example 6.(i)}]
\addplot[color = red, mark = *, step = 1cm,very thin]coordinates{(1,1)(2,1)(3,1)(4,1)};

\addplot[color = violet, mark = *, step = 1cm,very thin]coordinates{(1,1)(2,1)(3,1)(4,1)};

\addplot[color = purple, mark = *, step = 1cm,very thin]coordinates{(1,1)(2,1)(3,1)(4,1)};

\addplot[color = blue, mark = square*, step = 1cm,very thin]coordinates{(1,1)(2,1)(3,1)(4,0.894)};

\addplot[color = applegreen, mark = diamond*, step = 1cm,very thin]coordinates{(1,0.85)(2,0.559)(3,0.455)(4,0.437)};

\addplot[color = teal, mark = triangle*, step = 1cm,very thin]coordinates{(1,1)(2,0.713)(3,0.384)(4,0.27)};
\end{axis}
\end{tikzpicture}
\begin{tikzpicture}
\begin{axis}[xmin = 1, xmax = 4, ymin = 0, ymax = 1.1, xlabel = {$\log_3(d)$}, ylabel = {Power Estimates}, title = {\bf Example 6.(ii)}]
\addplot[color = red, mark = *, step = 1cm,very thin]coordinates{(1,1)(2,1)(3,1)(4,1)};

\addplot[color = violet, mark = *, step = 1cm,very thin]coordinates{(1,1)(2,1)(3,1)(4,1)};

\addplot[color = purple, mark = *, step = 1cm,very thin]coordinates{(1,1)(2,1)(3,1)(4,1)};

\addplot[color = blue, mark = square*, step = 1cm,very thin]coordinates{(1,1)(2,1)(3,1)(4,0.92)};

\addplot[color = applegreen, mark = diamond*, step = 1cm,very thin]coordinates{(1,0.823)(2,0.556)(3,0.46)(4,0.273)};

\addplot[color = teal, mark = triangle*, step = 1cm,very thin]coordinates{(1,1)(2,0.746)(3,0.368)(4,0.281)};
\end{axis}
\end{tikzpicture}
    \caption{Results of pBF-L2 test (\textcolor{red}{$\bullet$}), pBF-exp test (\textcolor{violet}{$\bullet$}), pBF-log test (\textcolor{purple}{$\bullet$}), FAD test (\textcolor{blue}{$\blacksquare$}), BD test (\textcolor{applegreen}{$\blacklozenge$}) and WD test (\textcolor{teal}{$\blacktriangle$}) for Examples 6.(i) and 6.(ii)}
    \label{fig:6}
\end{figure}

\vspace{0.1in}

We generate $30$ observations on both $X$ and $Y$ and compute the power of the tests for $d=3^i$ with $i=1,2,\ldots, 4$. Figure \ref{fig:6} shows that the power of the pBF-L2, pBF-exp, and pBF-log tests remained unity for all choices of $d$. The FAD test also performed well, but our tests had an edge for large $d$. Here the BD and WD tests had a decaying performance with increasing $d$.

Note that in this example, the squared pairwise distances are $\|X_1-Y_1\|^2 = \sum_{i=1}^d \xi_i^2/d + \sum_{i=1}^d \eta_i^2/d$, $\|X_1-X_2\|^2 = \sum_{i=1}^d (\xi_{1i}-\xi_{2i})^2/d$ and $\|Y_1-Y_2\|^2 = \sum_{i=1}^d (\eta_{1i}-\eta_{2i})^2/d$. 
One can show that with increasing $d$ the pairwise distances converges to the same limit. Therefore if the distributions of $\xi_i$ and $\eta_i$ are close, it is expected that the distributions of $\|X_1-Y_1\|, \|X_1-X_2\|$ and $\|Y_1-Y_2\|$ will also be close for large $d$. On the other hand, we have $\langle X_1,X_2 \rangle = \sum_{i=1}^d \xi_{1i}\xi_{2i}/d$, $\langle Y_1,Y_2 \rangle = \sum_{i=1}^d \eta_{1i}\eta_{2i}/d$ but $\langle X_1,Y_1 \rangle = 0$. 
The pBF tests successfully discriminate between non-degenerate and degenerate distributions (i.e. the distributions of $\langle X_1, X_2 \rangle$ and $\langle Y_1, X_2\rangle$ or 
the distributions of $\langle Y_1, Y_2\rangle$ and  $\langle X_1, Y_2 \rangle$) and aggregate them in a suitable way.
As a result, they had superior performance compared to the distance-based methods. From the above discussion, one would expect these tests to exhibit a similar behaviour if $\phi_{0,i}(t) = \sqrt{2}\sin(2\pi it)$ and $\psi_{0,i}(t) = \sqrt{2}\cos(2\pi it)$ are replaced by some other orthonormal sequences $\phi_i(t)$ and $\psi_i(t)$ which are orthogonal among themselves. 

\textbf{Example 7} Define $U(t) = \sum_{i=1}^{81} \frac{1}{\sqrt{81}}\xi_i \sqrt{2}\sin(2\pi it)$ and $V(t) = \sum_{i=1}^{81} \frac{1}{\sqrt{81}}\eta_i \cos(2\pi it)$. Sample $X_1,X_2,\ldots, X_n$ from the distribution of $U$ and $Y_1,Y_2,\ldots, Y_n$ from the mixture distribution of $U$ and $V$ with mixing proportion $\delta/\sqrt{n}$ for some $\delta>0$.

Here the alternate distribution does not satisfy assumption (A1) from Section \ref{sec-loc}. Hence, we can not claim that for large $n$ the power serves as an approximation of the efficiency of the tests. But this alternative is a contiguous alternative and we compare the power of the tests for different values of $\delta$ and different sample sizes. In Figure \ref{fig:ex7} we see that the pBF-L2 test has an overwhelming performance over the other tests for every choice of $\delta$. The pBF-log and pBF-exp also have a good performance followed by FAD test. The BD and WD test have a very poor performance of every choice of $\delta$. 

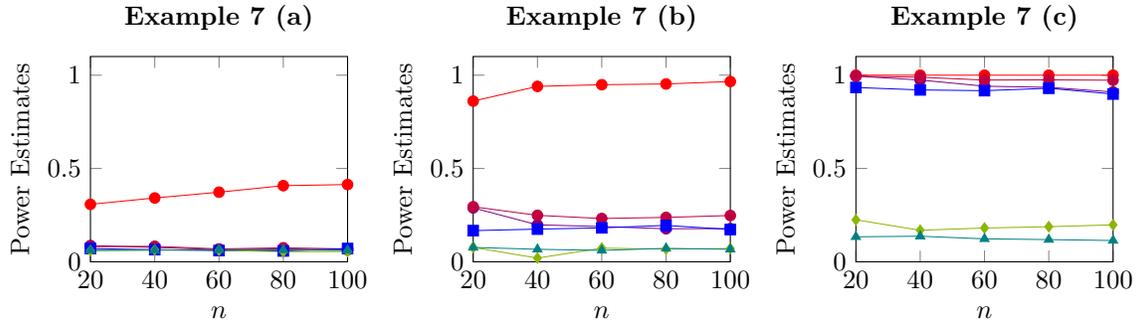
\begin{figure}[!h]
\centering
\begin{tikzpicture}
\begin{axis}[xmin = 20, xmax = 100, ymin = 0, ymax = 1.1, xlabel = {$n$}, ylabel = {Power Estimates}, title = {\bf Example 7 (a)}]
\addplot[color = red, mark = *, step = 1cm,very thin]coordinates{(20,0.308)(40,0.342)(60,0.373)(80,0.408)(100,0.414)};

\addplot[color = violet, mark = *, step = 1cm,very thin]coordinates{(20,0.087)(40,0.077)(60,0.065)(80,0.069)(100,0.064)};

\addplot[color = purple, mark = *, step = 1cm,very thin]coordinates{(20,0.081)(40,0.083)(60,0.069)(80,0.075)(100,0.069)};

\addplot[color = blue, mark = square*, step = 1cm,very thin]coordinates{(20,0.071)(40,0.065)(60,0.06)(80,0.058)(100,0.071)};

\addplot[color = applegreen, mark = diamond*, step = 1cm,very thin]coordinates{(20,0.06)(40,0.064)(60,0.062)(80,0.054)(100,0.055)};

\addplot[color = teal, mark = triangle*, step = 1cm,very thin]coordinates{(20,0.061)(40,0.062)(60,0.064)(80,0.064)(100,0.061)};
\end{axis}
\end{tikzpicture}
\begin{tikzpicture}
\begin{axis}[xmin = 20, xmax = 100, ymin = 0, ymax = 1.1, xlabel = {$n$}, ylabel = {Power Estimates}, title = {\bf  Example 7 (b)}]
\addplot[color = red, mark = *, step = 1cm,very thin]coordinates{(20,0.861)(40,0.94)(60,0.949)(80,0.953)(100,0.966)};

\addplot[color = violet, mark = *, step = 1cm,very thin]coordinates{(20,0.289)(40,0.198)(60,0.189)(80,0.177)(100,0.178)};

\addplot[color = purple, mark = *, step = 1cm,very thin]coordinates{(20,0.295)(40,0.249)(60,0.232)(80,0.238)(100,0.248)};

\addplot[color = blue, mark = square*, step = 1cm,very thin]coordinates{(20,0.167)(40,0.175)(60,0.182)(80,0.195)(100,0.173)};

\addplot[color = applegreen, mark = diamond*, step = 1cm,very thin]coordinates{(20,0.077)(40,0.02)(60,0.073)(80,0.068)(100,0.071)};

\addplot[color = teal, mark = triangle*, step = 1cm,very thin]coordinates{(20,0.077)(40,0.067)(60,0.061)(80,0.072)(100,0.067)};
\end{axis}
\end{tikzpicture}
\begin{tikzpicture}
\begin{axis}[xmin = 20, xmax = 100, ymin = 0, ymax = 1.1, xlabel = {$n$}, ylabel = {Power Estimates}, title = {\bf  Example 7 (c)}]
\addplot[color = red, mark = *, step = 1cm,very thin]coordinates{(20,1)(40,1)(60,1)(80,1)(100,1)};

\addplot[color = violet, mark = *, step = 1cm,very thin]coordinates{(20,0.995)(40,0.974)(60,0.941)(80,0.935)(100,0.91)};

\addplot[color = purple, mark = *, step = 1cm,very thin]coordinates{(20,0.995)(40,0.989)(60,0.974)(80,0.975)(100,0.973)};

\addplot[color = blue, mark = square*, step = 1cm,very thin]coordinates{(20,0.934)(40,0.921)(60,0.917)(80,0.93)(100,0.899)};

\addplot[color = applegreen, mark = diamond*, step = 1cm,very thin]coordinates{(20,0.225)(40,0.169)(60,0.181)(80,0.188)(100,0.198)};

\addplot[color = teal, mark = triangle*, step = 1cm,very thin]coordinates{(20,0.134)(40,0.137)(60,0.123)(80,0.119)(100,0.114)};
\end{axis}
\end{tikzpicture}

    \caption{Results of pBF-L2 test (\textcolor{red}{$\bullet$}), pBF-exp test (\textcolor{violet}{$\bullet$}), pBF-log test (\textcolor{purple}{$\bullet$}), FAD test (\textcolor{blue}{$\blacksquare$}), BD test (\textcolor{applegreen}{$\blacklozenge$}) and WD test (\textcolor{teal}{$\blacktriangle$}) for Examples 7 when (a) $\delta =1$, (b) $\delta = 2$ and (c) $\delta = 4$.}
    \label{fig:ex7}
\end{figure}

\subsection{Empirical efficiency of the tests}
\label{emp-efficiency}
In this section, we compare the efficiency of the above tests by generating $X_1,X_2,\ldots, X_n$ independently from a distribution $F$ and $Y_1,Y_2,\ldots, Y_n$ independently from a contiguous locally asymptotically normal alternative $(1- \delta/\sqrt{n})F+\delta/\sqrt{n} G~ (G\not=F)$ for different values of $\delta$. We take $F$ and $G$ as follows.

\textbf{Example 8} $F$ denote the distribution of $\sum_{i=1}^9 \frac{1}{i^{2.5}}\xi_{i} \psi_i(t)$ and $G$ denote the distribution of $\sum_{i=1}^9 \frac{1}{i^{2.5}}\eta_{i} \psi_i(t)$ where $\xi_i$s are i.i.d. $N(0,1)$ random variables and $\eta_{i}$s are i.i.d. $N(1,1)$ random variables.

\textbf{Example 9} $F$ denote the distribution of $\sum_{i=1}^9 \frac{1}{i^{2.5}}\xi_{i} \psi_i(t)$ and $G$ denote the distribution of $\sum_{i=1}^9 \frac{1}{i^{2.5}}\eta_{i} \psi_i(t)$ $\xi_i$s are i.i.d. $N(0,1)$ random variables and $\eta_{i}$s are i.i.d. $N(0,2)$ random variables.



Here $\{\psi_i\}$ is the Trigonometric basis of $L_2([0,1])$ as before. Figure \ref{fig:efficiency} displays the power of the tests for different values of $\delta$ and sample size $n$. Note that when $n$ is large, the power of a test serves as an approximation of the efficiency of that test. Hence the results can be interpreted as follows.

In Example 8, the FAD test is the most efficient test, closely followed by the pBF-L2 test. The rest of the tests can be arranged as pBF-log, WD, pBF-exp, and BD tests in decreasing order of efficiency. In Example 9, the BD test turns out to be the most efficient followed by WD, pBF-exp, pBF-log, pBF-L2, and FAD tests arranged in the same way. 

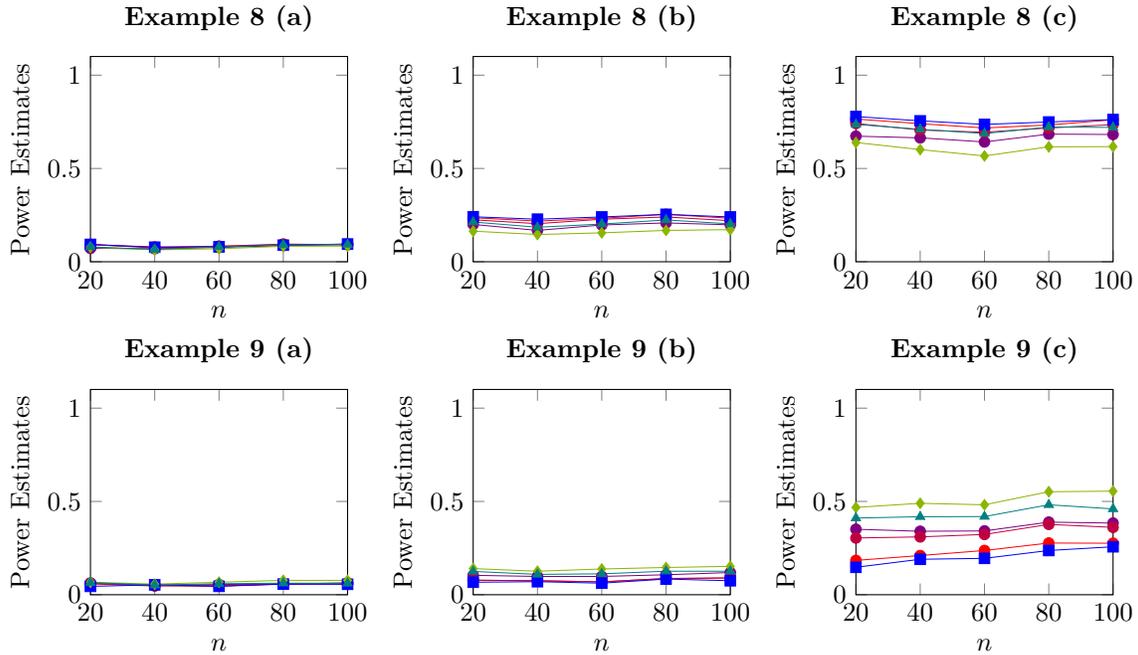
\begin{figure}[!h]
\centering
\begin{tikzpicture}
\begin{axis}[xmin = 20, xmax = 100, ymin = 0, ymax = 1.1, xlabel = {$n$}, ylabel = {Power Estimates}, title = {\bf Example 8 (a)}]
\addplot[color = red, mark = *, step = 1cm,very thin]coordinates{(20,0.091)(40,0.079)(60,0.083)(80,0.095)(100,0.091)};

\addplot[color = violet, mark = *, step = 1cm,very thin]coordinates{(20,0.072)(40,0.07)(60,0.077)(80,0.087)(100,0.092)};

\addplot[color = purple, mark = *, step = 1cm,very thin]coordinates{(20,0.093)(40,0.074)(60,0.083)(80,0.091)(100,0.094)};

\addplot[color = blue, mark = square*, step = 1cm,very thin]coordinates{(20,0.093)(40,0.078)(60,0.08)(80,0.089)(100,0.095)};

\addplot[color = applegreen, mark = diamond*, step = 1cm,very thin]coordinates{(20,0.078)(40,0.064)(60,0.07)(80,0.083)(100,0.084)};

\addplot[color = teal, mark = triangle*, step = 1cm,very thin]coordinates{(20,0.08)(40,0.066)(60,0.08)(80,0.09)(100,0.093)};
\end{axis}
\end{tikzpicture}
\begin{tikzpicture}
\begin{axis}[xmin = 20, xmax = 100, ymin = 0, ymax = 1.1, xlabel = {$n$}, ylabel = {Power Estimates}, title = {\bf Example 8 (b)}]
\addplot[color = red, mark = *, step = 1cm,very thin]coordinates{(20,0.236)(40,0.218)(60,0.233)(80,0.254)(100,0.233)};

\addplot[color = violet, mark = *, step = 1cm,very thin]coordinates{(20,0.2)(40,0.168)(60,0.197)(80,0.208)(100,0.198)};

\addplot[color = purple, mark = *, step = 1cm,very thin]coordinates{(20,0.225)(40,0.204)(60,0.228)(80,0.24)(100,0.22)};

\addplot[color = blue, mark = square*, step = 1cm,very thin]coordinates{(20,0.241)(40,0.228)(60,0.24)(80,0.254)(100,0.24)};

\addplot[color = applegreen, mark = diamond*, step = 1cm,very thin]coordinates{(20,0.164)(40,0.146)(60,0.155)(80,0.168)(100,0.172)};

\addplot[color = teal, mark = triangle*, step = 1cm,very thin]coordinates{(20,0.213)(40,0.185)(60,0.202)(80,0.224)(100,0.203)};
\end{axis}
\end{tikzpicture}
\begin{tikzpicture}
\begin{axis}[xmin = 20, xmax = 100, ymin = 0, ymax = 1.1, xlabel = {$n$}, ylabel = {Power Estimates}, title = {\bf Example 8 (c)}]
\addplot[color = red, mark = *, step = 1cm,very thin]coordinates{(20,0.765)(40,0.74)(60,0.717)(80,0.733)(100,0.761)};

\addplot[color = violet, mark = *, step = 1cm,very thin]coordinates{(20,0.673)(40,0.664)(60,0.642)(80,0.684)(100,0.682)};

\addplot[color = purple, mark = *, step = 1cm,very thin]coordinates{(20,0.741)(40,0.706)(60,0.694)(80,0.717)(100,0.735)};

\addplot[color = blue, mark = square*, step = 1cm,very thin]coordinates{(20,0.778)(40,0.755)(60,0.736)(80,0.749)(100,0.762)};

\addplot[color = applegreen, mark = diamond*, step = 1cm,very thin]coordinates{(20,0.639)(40,0.601)(60,0.567)(80,0.616)(100,0.617)};

\addplot[color = teal, mark = triangle*, step = 1cm,very thin]coordinates{(20,0.736)(40,0.71)(60,0.687)(80,0.722)(100,0.72)};
\end{axis}
\end{tikzpicture}
\begin{tikzpicture}
\begin{axis}[xmin = 20, xmax = 100, ymin = 0, ymax = 1.1, xlabel = {$n$}, ylabel = {Power Estimates}, title = {\bf Example 9 (a)}]
\addplot[color = red, mark = *, step = 1cm,very thin]coordinates{(20,0.058)(40,0.047)(60,0.046)(80,0.059)(100,0.058)};

\addplot[color = violet, mark = *, step = 1cm,very thin]coordinates{(20,0.063)(40,0.049)(60,0.055)(80,0.057)(100,0.064)};

\addplot[color = purple, mark = *, step = 1cm,very thin]coordinates{(20,0.06)(40,0.053)(60,0.048)(80,0.057)(100,0.061)};

\addplot[color = blue, mark = square*, step = 1cm,very thin]coordinates{(20,0.045)(40,0.053)(60,0.045)(80,0.057)(100,0.056)};

\addplot[color = applegreen, mark = diamond*, step = 1cm,very thin]coordinates{(20,0.067)(40,0.057)(60,0.067)(80,0.077)(100,0.077)};

\addplot[color = teal, mark = triangle*, step = 1cm,very thin]coordinates{(20,0.065)(40,0.052)(60,0.059)(80,0.062)(100,0.061)};
\end{axis}
\end{tikzpicture}
\begin{tikzpicture}
\begin{axis}[xmin = 20, xmax = 100, ymin = 0, ymax = 1.1, xlabel = {$n$}, ylabel = {Power Estimates}, title = {\bf Example 9 (b)}]
\addplot[color = red, mark = *, step = 1cm,very thin]coordinates{(20,0.078)(40,0.075)(60,0.069)(80,0.087)(100,0.09)};

\addplot[color = violet, mark = *, step = 1cm,very thin]coordinates{(20,0.104)(40,0.098)(60,0.098)(80,0.107)(100,0.119)};

\addplot[color = purple, mark = *, step = 1cm,very thin]coordinates{(20,0.078)(40,0.075)(60,0.069)(80,0.087)(100,0.09)};

\addplot[color = blue, mark = square*, step = 1cm,very thin]coordinates{(20,0.067)(40,0.07)(60,0.062)(80,0.084)(100,0.075)};

\addplot[color = applegreen, mark = diamond*, step = 1cm,very thin]coordinates{(20,0.14)(40,0.126)(60,0.138)(80,0.146)(100,0.152)};

\addplot[color = teal, mark = triangle*, step = 1cm,very thin]coordinates{(20,0.125)(40,0.109)(60,0.112)(80,0.126)(100,0.126)};
\end{axis}
\end{tikzpicture}
\begin{tikzpicture}
\begin{axis}[xmin = 20, xmax = 100, ymin = 0, ymax = 1.1, xlabel = {$n$}, ylabel = {Power Estimates}, title = {\bf Example 9 (c)}]
\addplot[color = red, mark = *, step = 1cm,very thin]coordinates{(20,0.184)(40,0.21)(60,0.237)(80,0.277)(100,0.276)};

\addplot[color = violet, mark = *, step = 1cm,very thin]coordinates{(20,0.351)(40,0.34)(60,0.342)(80,0.389)(100,0.384)};

\addplot[color = purple, mark = *, step = 1cm,very thin]coordinates{(20,0.304)(40,0.31)(60,0.323)(80,0.377)(100,0.361)};

\addplot[color = blue, mark = square*, step = 1cm,very thin]coordinates{(20,0.147)(40,0.19)(60,0.195)(80,0.237)(100,0.257)};

\addplot[color = applegreen, mark = diamond*, step = 1cm,very thin]coordinates{(20,0.468)(40,0.49)(60,0.482)(80,0.552)(100,0.555)};

\addplot[color = teal, mark = triangle*, step = 1cm,very thin]coordinates{(20,0.411)(40,0.418)(60,0.419)(80,0.482)(100,0.46)};
\end{axis}
\end{tikzpicture}

    \caption{Results of pBF-L2 test (\textcolor{red}{$\bullet$}), pBF-exp test (\textcolor{violet}{$\bullet$}), pBF-log test (\textcolor{purple}{$\bullet$}), FAD test (\textcolor{blue}{$\blacksquare$}), BD test (\textcolor{applegreen}{$\blacklozenge$}) and WD test (\textcolor{teal}{$\blacktriangle$}) for Examples 8 and 9 when (a) $\delta =1$, (b) $\delta = 2$ and (c) $\delta = 4$.}
    \label{fig:efficiency}
\end{figure}

\subsection{Analysis of DTI data}
\label{real data}

\begin{figure}[!h]
    \centering
    \includegraphics[scale = 0.30]{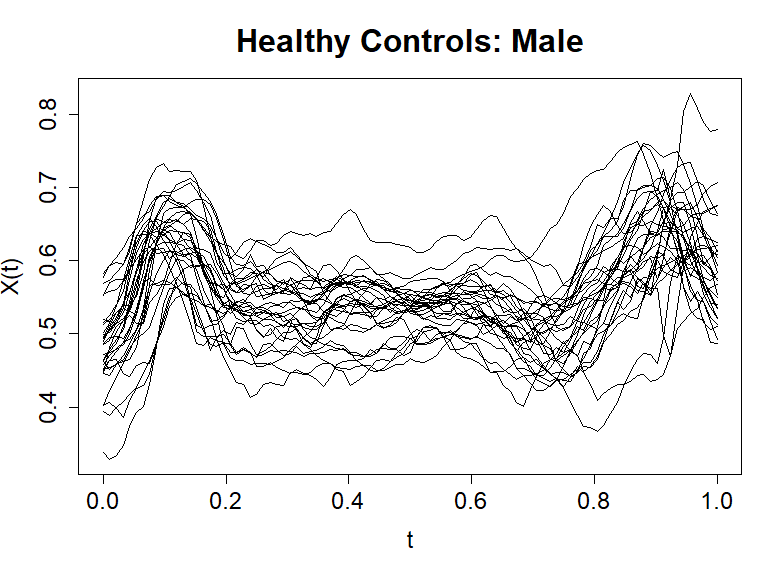}
    \includegraphics[scale = 0.30]{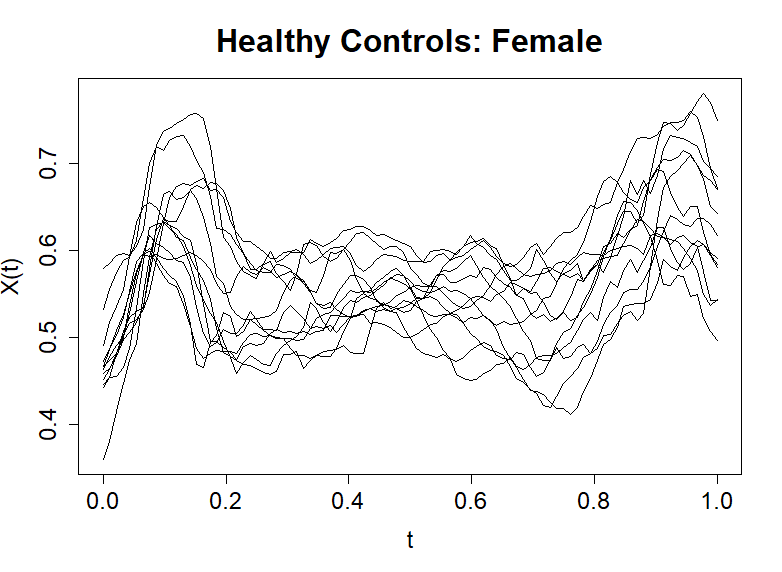}
    \includegraphics[scale = 0.30]{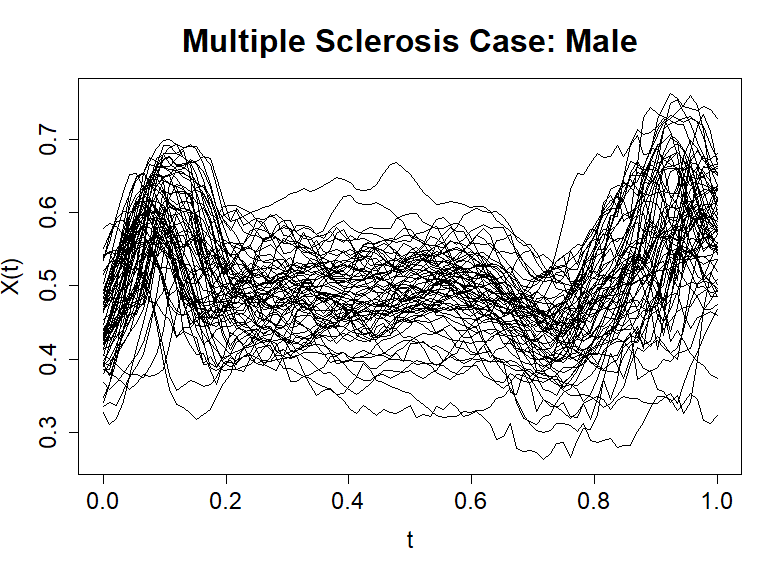}
    \includegraphics[scale = 0.30]{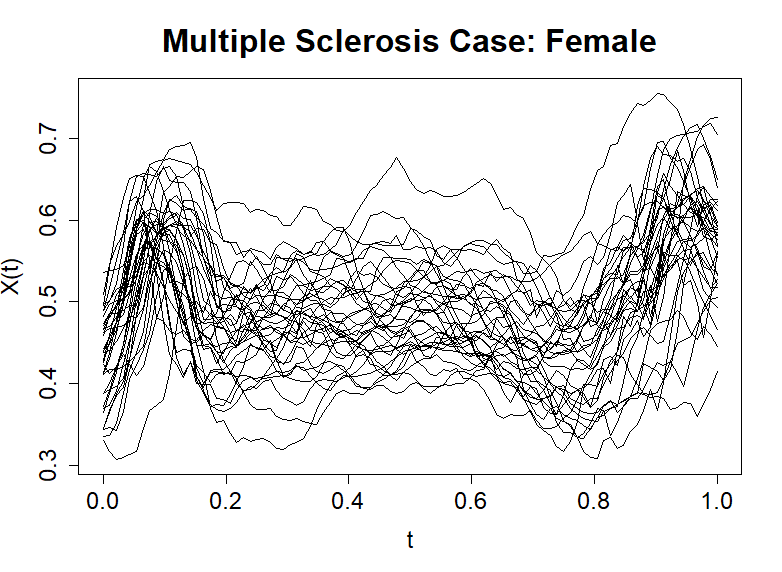}
    \caption{The FA tract profiles on the first visit, divided according to health status and gender.}
    \label{fig:dti}
\end{figure}

For further evaluation of the performance of our test, we  analyzed the DTI dataset available in the R package `refund'. The MRI/DTI data were collected at Johns Hopkins University and the Kennedy-Krieger Institute. Diffusion tensor imaging (DTI) is a magnetic resonance imaging technology that traces water diffusivity in the brain and helps to create an image of the white matter tract. This dataset has been studied in \cite{dti2011,dti2012} in the context of penalized functional regression. Several measurements of water diffusion are provided by DTI, but here we work with the fractional anisotropy (FA) tract profiles recorded at $93$ different locations of the corpus callosum. The dataset contains measurements on 100 `Multiple Sclerosis patients (MS)' and 42 `healthy controls (HC)'. While the number of visits for the MS patients ranges between 2 and 8, each healthy person visits just once. So, we consider only the FA tract profiles upon the first visit of the subjects. The subject with ID `2017' have some missing values. We delete that observation and work with the remaining 99 MS patients and 42 healthy controls. Among the subjects, there were both males and females. To test whether the `health status' or the `gender' of the subject affects the FA tract profile, we divide the dataset into four groups each corresponding to a particular combination of gender and health status. Figure \ref{fig:dti} displays the FA tracts divided into these four groups. 

Taking one pair of groups at a time, we test for the distributional difference. So, here we consider ${4\choose2} = 6$ cases: (C1) HC males vs. HC females, (C2) HC males vs. MS males, (C3) HC males vs. MS females, (C4) HC females vs. MS males, (C5) HC females vs. MS females and (C6) MS males vs. MS females.  The p-values of pBF-L2, pBF-exp, pBF-log, BD, WD tests, and Bonferonni corrected p-value of FAD test are reported in Table \ref{tab:pval-table}. Here the randomized p-values are computed based on 10,000 random permutations. Note that in many cases, the WD test fails to detect the distributional difference when the others reject the null hypothesis at $5\%$ level of significance. Our analysis suggests that the distributional difference among the males and females was statistically insignificant, but the distribution of the FA tract profile differs significantly depending on the health status.

\begin{table}[h]
    \centering
    \begin{tabular}{ccccccc} \hline
   Case & pBF-L2 & pBF-exp & pBF-log & BD & WD & FAD\\ \hline
        (C1) & $0.577$ & $0.366$ & $0.377$ & $0.803$ & $0.369$ & $0.449$\\
        (C2) & $1\times 10^{-4}$ & $1\times 10^{-4}$ & $1\times 10^{-4}$ & $1\times 10^{-4}$ & $0.096$ & $7\times 10^{-6}$\\
        (C3) & $1\times 10^{-4}$ & $1\times 10^{-4}$ & $1\times 10^{-4}$ & $1\times 10^{-4}$ & $1\times 10^{-4}$ & $8\times10^{-6}$\\
        (C4) & $4\times 10^{-4}$ & $2\times 10^{-4}$ & $2\times 10^{-4}$ & $0.001$ & $0.221$ & $0.002$\\
        (C5) & $2\times 10^{-4}$ & $2\times 10^{-4}$ & $2\times 10^{-4}$ & $4\times 10^{-4}$ & $0.542$ & $5\times 10^{-4}$\\
        (C6) & $0.574$ & $0.476$ & $0.475$ & $0.638$ & $0.639$ & $0.316$\\ \hline
    \end{tabular}
    \caption{p-values of pBF-L2, pB-log, pBF-exp, BD, WD tests, and Bonferroni corrected p-value of FAD test.}
    \label{tab:pval-table}
\end{table}

Hence, we can merge the data sets corresponding to males and females and look into the DTI data divided based on health status only. Using this we compare the performance the tests by generating random sub-samples, keeping the sample proportions from the two distributions approximately same as they were in the original data. The sub-sampling procedure was repeated 1000 times to estimate the power of the tests by the proportion of times they rejected $H_0$. Figure \ref{fig:dti_power} shows that for this data set the pBF tests had a comparable performance among themselves, and they had comparatively better performance than all the other tests. The BD and FAD tests also had a relatively good performance, whereas the WD test performed poorly in this data set.

\begin{figure}[!h]
\centering
\begin{tikzpicture}
\begin{axis}[xmin = 15, xmax = 85, ymin = 0, ymax = 1.1, xlabel = {Pooled Sample Size}, ylabel = {Power Estimates}, title = {\bf DTI data}]
\addplot[color = red, mark = *, step = 1cm,very thin]coordinates{(16,0.559)(33,0.909)(50,0.992)(67,1)(83,1)};

\addplot[color = violet, mark = *, step = 1cm,very thin]coordinates{(16,0.593)(33,0.928)(50,0.996)(67,1)(83,1)};

\addplot[color = purple, mark = *, step = 1cm,very thin]coordinates{(16,0.601)(33,0.918)(50,0.996)(67,1)(83,1)};

\addplot[color = blue, mark = square*, step = 1cm,very thin]coordinates{(16,0.261)(33,0.698)(50,0.923)(67,0.998)(83,1)};

\addplot[color = applegreen, mark = diamond*, step = 1cm,very thin]coordinates{(16,0.465)(33,0.84)(50,0.972)(67,1)(83,1)};

\addplot[color = teal, mark = triangle*, step = 1cm,very thin]coordinates{(16,0.084)(33,0.085)(50,0.11)(67,0.109)(83,0.124)};
\end{axis}
\end{tikzpicture}
    \caption{Results of pBF-L2 test (\textcolor{red}{$\bullet$}), pBF-exp test (\textcolor{violet}{$\bullet$}), pBF-log test (\textcolor{purple}{$\bullet$}), FAD test (\textcolor{blue}{$\blacksquare$}), BD test (\textcolor{applegreen}{$\blacklozenge$}) and WD test (\textcolor{teal}{$\blacktriangle$}) for the DTI data.}
    \label{fig:dti_power}
\end{figure}

\section{Discussion and conclusion}

In this article, we have proposed a two-sample test for functional data, where observations are modeled as elements of separable Hilbert spaces. We have derived the limiting distribution of our proposed test statistic and proved the large sample consistency of the practically implemented version of the test based on random permutations. We have also proposed a new local asymptotic normality result for functional data and proved that our test is statistically efficient in the Pitman sense. We have amply demonstrated the superior performance of our test over the state-of-the-art distance-based tests through simulations and real data analysis. Based on our empirical experience, we would highly recommend using $\phi(z) = \sqrt{z}/2$ when the data are nearly orthogonal, i.e., when $\sum_{i=1}^n\sum_{j=1}^m|\langle X_i,Y_j\rangle|/nm$ is small, otherwise we recommend using $\phi(z)= 1-\exp(-z/2)$ or $\phi(z)= \log(1+z)$ for the practical implementation.

The proposed test can be generalized to $k$-sample problems as well, where one needs to use a suitable $k$-sample criterion on the one-dimensional projections corresponding to different distributions $F_1, F_2,\ldots, F_k$. One can also construct a consistent estimate of this measure and develop a $k$-sample test based on it. The large sample behavior of the resulting test can also be investigated using the theory presented in this article. 
\vspace{0.1in}

\textbf{Acknowledgements:} The author would like to thank Anil K. Ghosh and Bhaswar B. Bhattacharya for their helpful comments. The author also thanks Gina-Maria Pomann and Sujit Ghosh for sharing their R codes implementing the FAD test.


\bibliographystyle{apalike} 
\bibliography{refs.bib}

\begin{appendix}
\section*{Appendix}

\begin{lemmaA}
    Let $\{X_n\}$ be a sequence of i.i.d. random variables on the measurable space $(\mathcal{X},\mathcal{A})$ from the distribution $P$. Let $h:\mathcal{X}^k\to \R$ be a measurable function such that $\E_Ph^2(X_1,X_2,\ldots, X_k)$, $\E_Ph^2(X_1,X_1,\ldots, X_1)$ are finite and $\E_P h(x_1, x_2,\ldots, x_{k-1},X_K) = 0$ almost surely. 
    Then the random variable $\int h(x_1,x_2,\ldots, x_k)\prod_{i=1}^k d\hat G_P(x_i)$ converges in distribution to $\int h(x_1,x_2,\ldots, x_k)\prod_{i=1}^kd\mathbb{G}_P(x_i)$ 
    as $n$ goes to infinity, where $\hat{G}_P = \sqrt{n}(P_n-P)$ (the empirical process based on $X_1,X_2,\ldots,X_n$) and $\mathbb{G}_P$ is the $P-$Brownian Bridge process. 
    \label{limit-vstat-1}
\end{lemmaA}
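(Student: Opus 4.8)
The plan is to establish this via the tensor-basis expansion of $h$, reducing to finite-rank kernels where the statement follows from the multivariate central limit theorem, and then controlling the tail of the expansion uniformly in $n$; this is the Rubin--Vitale/Dynkin--Mandelbaum route for representing degenerate symmetric statistics through stochastic integrals. First I would fix an orthonormal basis $\{\phi_j\}_{j\ge 0}$ of $L^2(P)$ with $\phi_0\equiv 1$ and expand
\[
h=\sum_{\mathbf{j}\in\mathbb{Z}_{\ge 0}^k}a_{\mathbf{j}}\,\phi_{j_1}\otimes\cdots\otimes\phi_{j_k},\qquad \sum_{\mathbf{j}}a_{\mathbf{j}}^2=\|h\|_{L^2(P^k)}^2<\infty .
\]
Since $\hat{G}_P(\phi_0)=\sqrt{n}(P_n-P)(1)=0$ and likewise $\mathbb{G}_P(\phi_0)=0$ (the bridge is tied down at the constant), only multi-indices with every $j_l\ge 1$ can contribute to either integral; the degeneracy hypothesis $\E_P h(x_1,\dots,x_{k-1},X_k)=0$ makes this reduction consistent by forcing $a_{\mathbf{j}}=0$ whenever $j_k=0$.

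The crucial simplification is that for a product integrand, integration against the product (signed) measure factorizes. Hence for the finite-rank truncation $h_M=\sum_{\max_l j_l\le M}a_{\mathbf{j}}\,\phi_{j_1}\otimes\cdots\otimes\phi_{j_k}$,
\[
\int h_M\, d\hat{G}_P^{\otimes k}=\sum_{\mathbf{j}}a_{\mathbf{j}}\prod_{l=1}^k \hat{G}_P(\phi_{j_l}),
\]
and the identical formula holds with $\mathbb{G}_P$ in place of $\hat{G}_P$. Thus the randomness enters only through the finite vector $(\hat{G}_P(\phi_1),\dots,\hat{G}_P(\phi_m))$, whose coordinates are the centered normalized sums $n^{-1/2}\sum_i(\phi_j(X_i)-P\phi_j)$. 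By the multivariate central limit theorem this vector converges to the Gaussian vector $(\mathbb{G}_P(\phi_1),\dots,\mathbb{G}_P(\phi_m))$, whose coordinates (for $j\ge 1$) are i.i.d.\ standard normal because $P(\phi_i\phi_j)-P\phi_i\,P\phi_j=\delta_{ij}$. Since the displayed integral is a fixed polynomial in these coordinates, the continuous mapping theorem yields $\int h_M\, d\hat{G}_P^{\otimes k}\stackrel{d}{\to}\int h_M\, d\mathbb{G}_P^{\otimes k}$ for each fixed $M$.

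It then remains to send $M\to\infty$. On the Gaussian side, a direct second-moment computation using the i.i.d.\ structure of $\{\mathbb{G}_P(\phi_j)\}_{j\ge 1}$ shows $\int h_M\, d\mathbb{G}_P^{\otimes k}\to\int h\, d\mathbb{G}_P^{\otimes k}$ in $L^2$, where the repeated-index (diagonal) terms are absorbed using the diagonal integrability. The genuine difficulty lies on the empirical side: I must bound $\E\big[(\int (h-h_M)\, d\hat{G}_P^{\otimes k})^2\big]$ uniformly in $n$. Writing this as $n^{k}\,\E\big[((P_n-P)^{\otimes k}(h-h_M))^2\big]$ and expanding the tensor power produces, beyond the top-order ``all-distinct-indices'' piece that is controlled by $\|h-h_M\|_{L^2(P^k)}^2$, a collection of lower-order pieces in which two or more of the indices $i_1,\dots,i_k$ coincide. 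These diagonal pieces are \emph{not} dominated by the $L^2(P^k)$ norm, and this is precisely where the hypotheses $\E_P h^2(X_1,\dots,X_k)<\infty$ and $\E_P h^2(X_1,\dots,X_1)<\infty$ become indispensable: they let me bound each coincidence pattern by the $L^2(P)$ norm of the associated restricted kernel times a factor $O(n^{-1})$ or smaller, so that every diagonal contribution vanishes while the top-order term contributes at most $C\|h-h_M\|_{L^2(P^k)}^2\to 0$. I expect this uniform-in-$n$ variance estimate for the remainder, with its bookkeeping over all partition/coincidence patterns of the indices and the verification that degeneracy suppresses every pattern that would otherwise fail to vanish, to be the main obstacle.

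Finally I would assemble the three facts just established---$\int h_M\, d\hat{G}_P^{\otimes k}\stackrel{d}{\to}\int h_M\, d\mathbb{G}_P^{\otimes k}$ for each $M$, the $L^2$ convergence $\int h_M\, d\mathbb{G}_P^{\otimes k}\to\int h\, d\mathbb{G}_P^{\otimes k}$, and $\limsup_n \P\big(|\int (h-h_M)\, d\hat{G}_P^{\otimes k}|>\varepsilon\big)\to 0$ as $M\to\infty$---and invoke the standard approximation theorem for weak convergence (Billingsley, Theorem 3.2) to conclude $\int h\, d\hat{G}_P^{\otimes k}\stackrel{d}{\to}\int h\, d\mathbb{G}_P^{\otimes k}$, which is the assertion of the lemma.
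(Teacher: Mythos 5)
Your proposal is correct and follows essentially the same route as the paper's own proof: orthonormal-basis tensor expansion of $h$, reduction to indices $j_l\ge 1$ (the paper invokes degeneracy, you additionally note $\hat G_P(\phi_0)=\mathbb{G}_P(\phi_0)=0$), finite-rank truncation handled by the multivariate CLT and continuous mapping, a uniform-in-$n$ second-moment bound on the empirical remainder in which the diagonal hypothesis $\E_P h^2(X_1,\dots,X_1)<\infty$ controls the coincidence terms (this is exactly the paper's covariance identity for $V_n$), an $L^1$/$L^2$ bound on the Gaussian-side tail, and a final three-step approximation argument (you cite Billingsley's Theorem 3.2 where the paper runs the equivalent characteristic-function estimate). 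No substantive difference in strategy.
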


\begin{proof}[\bf Proof]  
 Let $1=f_0,f_1,f_2,\ldots$ be an orthonormal basis of $L_2(\mathcal{X},\mathcal{A},P)$. Since, $h\in L_2(\mathcal{X}^k,\mathcal{A}^k,P^k)$ and is degenerate, we can write $h = \sum_{i_1,i_2,\ldots, i_k}\langle h,f_{i_1}\times \cdots\times f_{i_k}\rangle f_{k_1}\times\ldots\times f_{i_k}$ where $i_1,i_2,\ldots, i_k\geq 1$. Define,  $V_n(f):=\int h(x_1,x_2,\ldots, x_k)\prod_{i=1}^k d\hat G_P(x_i)$. Then for any $g,h\in L_2(\mathcal{X}^k,\mathcal{A}^k,P^k)$,

  \begin{equation*}
      \begin{split}
          \text{cov}(V_n(g),V_n(h)) = & \frac{1}{n^{k-1}} \text{cov}(\Tilde{h}(X_1,X_1,\ldots, X_1),\Tilde{g}(X_1,X_1,\ldots, X_1))\\
          & +k!\frac{ n(n-1)\cdots(n-k+1)}{n^k}\int \Tilde{h}(x_1,x_2,\ldots,x_k)\Tilde{g}(x_1,x_2,\ldots,x_k)\prod_{i=1}^k dP(x_i),
      \end{split}
  \end{equation*}
  $V_n(h)$ is linear in $h$ and $\E\big\{V_n(h)\big\} = \E\big\{\Tilde{h}(X_1,X_1,\ldots,X_1)\big\},$
  where 
  $$\Tilde{h}(u_1,u_2\ldots,u_k) = \int h(x_1,x_2\ldots,x_k)\prod_{i=1}^k d(\delta_{u_i}-P)(x_i).$$
  
  $\Tilde{g}$ is defined analogously. By the degeneracy of $h$, we have $\Tilde{h}=h$. Since $\E_Ph^2(X_1,X_2,\ldots, X_k)$ and $\E_Ph^2(X_1,X_1,\ldots, X_1)$ are finite, the partial sum of the series $h_l = \sum_{i_1=1}^l\sum_{i_2=1}^l\cdots\sum_{i_k=1}^l\langle h,f_{i_1}\times\cdots\times f_{i_k}\rangle f_{i_1}\times\cdots\times f_{i_k}$ converges to $h$ in $L_2(P^k)$. Hence, we get that the series $\sum_{i_1,i_2,\ldots, i_k \geq 1}\langle h,f_{i_1}\times\cdots\times f_{i_k}\rangle V_n(f_{i_1}\times\cdots\times f_{i_k})$ converges in $L_2(P^n)$ (since $\text{Var}\big(V_n(h-h_l)\big)$ and $|\E V_n(h-h_l)|$ both converges to zero as $l$ diverges to infinity). 
  
   Also note that the finite-dimensional distributions of $\{V_n(f_{k_1}\times f_{k_2})\mid (k_1,k_2)\in \mathbb{N}^2\}$ converges to the corresponding finite-dimensional distributions of $\{\int f_{k_1}(u)f_{k_2}(v)d\mathbb{G}_P(u)d\mathbb{G}_P(v)\mid (k_1,k_2)\in \mathbb{N}^2\}$. Then by continuous mapping theorem, $\sum_{i_1,\ldots,i_k\geq 1}\langle h_l,f_{i_1}\times\cdots\times f_{i_k}\rangle V_n(f_{i_1}\times\cdots\times f_{i_k})$ converges in distribution to $\sum_{i_1,\ldots,i_k\geq 1}\langle h_l,f_{i_1}\times\cdots\times f_{i_k}\rangle \prod_{j=1}^k\int f_{i_j}(u)d\mathbb{G}_P(u) =  \int h_l(x_1,x_2,\ldots,x_k)\prod_{i=1}^kd\mathbb{G}_P(x_i)$ as $n$ grows to infinity. 
  
  Now note that, $\sup_{i\in \mathbb{N}}\E \big(\int f_i(u) d\mathbb{G}_P(u)\big)^k <\infty$. Let $[l] := \{1,2,\ldots, l\}$, then for any $l_1<l_2$ we have,
  \begin{equation*}
      \begin{split}
          & \E\big|\int h_{l_2}(x_1,x_2,\ldots,x_k)\prod_{i=1}^kd\mathbb{G}_P(x_i)-\int h_{l_2}(x_1,x_2,\ldots,x_k)\prod_{i=1}^kd\mathbb{G}_P(x_i)\big|\\ 
       = ~& \E\big|\sum_{(i_1,\ldots, i_k)\in [l_2]^k\setminus [l_1]^k} \langle h, f_{i_1}\times\cdots\times f_{i_k}\rangle \prod_{j=1}^k\int f_{i_j}(u) d\mathbb{G}_P(u)\big|\\
    \leq ~& \sum_{(i_1,\ldots, i_k)\in [l_2]^k\setminus [l_1]^k} \big|\langle h, f_{i_1}\times\cdots\times f_{i_k}\rangle \big| \E\big|\prod_{j=1}^k\int f_{i_j}(u) d\mathbb{G}_P(u)\big|\\
    \leq ~& \sum_{(i_1,\ldots, i_k)\in [l_2]^k\setminus [l_1]^k} \big|\langle h, f_{i_1}\times\cdots\times f_{i_k}\rangle \big| \Big\{\sup_{i\in\mathbb{N}}\E\big(\int f_i(u) d\mathbb{G}_P(u)\big)^{k}\Big\}^{1/k}\\
    \leq ~& C \sum_{(i_1,\ldots, i_k)\in [l_2]^k\setminus [l_1]^k} \big|\langle h, f_{i_1}\times\cdots\times f_{i_k}\rangle \big| \leq \big(\sum_{(i_1,\ldots, i_k)\in [l_2]^k\setminus [l_1]^k} \langle h, f_{i_1}\times\cdots\times f_{i_k}\rangle^2\big)^{1/2}
      \end{split}
  \end{equation*}
  Since $h_l$ converges to $h$ in $L_2(P^k)$, it is also a Cauchy sequence. Hence, for any $\epsilon>0$, we can get an $N\in\mathbb{N}$ such that, $\sum_{(i_1,\ldots,i_k)\in [l_2]^k\setminus [l_1]^k} \langle h, f_{i_1}\times\cdots\times f_{i_k}\rangle^2<\epsilon^2$ for any $l_1,l_2\geq N$ and then we also have,
  $$\E\big|\int h_{l_2}(x_1,x_2,\ldots,x_k)\prod_{i=1}^kd\mathbb{G}_P(x_i)-\int h_{l_2}(x_1,x_2,\ldots,x_k)\prod_{i=1}^kd\mathbb{G}_P(x_i)\big|<\epsilon.$$
  
  Hence, the sequence $\int h_{l}(x_1,x_2,\ldots,x_k)\prod_{i=1}^kd\mathbb{G}_P(x_i)$ is Cauchy in $L_1$. It is easy to see that the limit of this sequence of random variables is $\int h(x_1,x_2,\ldots,x_k)\prod_{i=1}^kd\mathbb{G}_P(x_i)$.

  Let the characteristic function of $\int h_{l}(x_1,x_2,\ldots,x_k)\prod_{i=1}^kd\hat{G}(x_i)$ and $\int h(x_1,x_2,\ldots,x_k)\prod_{i=1}^kd\hat{G}(x_i)$ be denoted by $\phi_{nl}(t)$ and $\phi_n(t)$, respectively. Also, let $\phi_l(t)$ and $\phi(t)$ be the characteristic function of $\int h_{l}(x_1,x_2,\ldots,x_k)\prod_{i=1}^kd\mathbb{G}_P(x_i)$ and $\int h(x_1,x_2,\ldots,x_k)\prod_{i=1}^kd\mathbb{G}_P(x_i)$, respectively. Then by the previous arguments we can say that for any $\epsilon>0$ and any $t\in\R$, we can find $K\in\mathbb{N}$ such that,
  $|\phi_{l}(t)-\phi(t)|<\epsilon$ for all $l\geq K$. Therefore, it is easy to see that for all $t\in\R$, and for all $l\geq K$, 
  \begin{equation*}
      \begin{split}
          \lim_{n\to\infty}|\phi_{n}(t)-\phi(t)| & \leq \lim_{n\to\infty}|\phi_{nl}(t)-\phi_n(t)|+ \lim_{n\to\infty}|\phi_{nK}(t)-\phi_{K}(t)|+|\phi_l(t)-\phi(t)|\\
          & = \lim_{n\to\infty}|\phi_{nl}(t)-\phi_n(t)|+\epsilon\\
          & \leq |t| \lim_{n\to\infty}\Big\{\E\big\{|\int h_{l}(x_1,x_2,\ldots,x_k)\prod_{i=1}^kd\hat{G}(x_i)-\int h(x_1,x_2,\ldots,x_k)\prod_{i=1}^kd\hat{G}(x_i)|^2\big\}\Big\}^{1/2}+\epsilon\\
          & = |t| \left\{2\E\big\{(h-h_l)^2(X_1,X_2,\ldots, X_k)\big\}+\big\{\E\{(h-h_l)(X_1,X_1,\ldots, X_1)\}\big\}^2\right\}^{1/2}+\epsilon.
      \end{split}
  \end{equation*}
  In the second last inequality we have used the fact that $\E\{|e^{itX}-1|\}\leq |t|\E|X|\leq |t|\{\E|X|^2\}^{1/2}$.  Now as $l$ grows to infinity we have $\lim\limits_{n\to\infty}|\phi_{n}(t)-\phi(t)|<\epsilon$ and as $\epsilon$ is arbitrary we can say $\lim\limits_{n\to\infty}|\phi_{n}(t)-\phi(t)|=0$. Hence, $\int h_{l}(x_1,x_2,\ldots,x_k)\prod_{i=1}^kd\hat{G}(x_i)$ converges in distribution to $\int h_{l}(x_1,x_2,\ldots,x_k)\prod_{i=1}^kdd\mathbb{G}_P(x_i)$.  
\end{proof}

\begin{lemmaA}
    Let $\{X_n\}$ and $\{Y_m\}$ be two independent sequences of i.i.d. random variables on the measurable space $(\mathcal{X},\mathcal{A})$ from distributions $P$ and $Q$ respectively. Let $h:\mathcal{X}^2\to \R$ be a measurable function such that $\E_Ph^2(X_1,Y_1)<\infty$, which is not necessarily symmetric. Assume $\E_P h(x_1, Y_1) = 0$ and $\E_P h(X_1, y_1) = 0$ almost surely. 
    Then,
    $\sqrt{nm}\int h(x_1,y_1)d\hat{G}_P(x_1)d\hat{G}_Q(y_1)$ converges in distribution to $\int h(x_1,y_1)d\mathbb{G}_P(x_1)d\mathbb{G}_Q(y_1),$ as $\min\{n,m\}$ goes to infinity, where $\hat{G}_P = \sqrt{n}(P_n-P)$ and $\hat{G}_Q=\sqrt{m}(Q_m-Q)$ are the empirical processes based on the observations $X_1,X_2,\ldots,X_n$ and $Y_1,Y_2,\ldots,Y_m$ respectively, and $\mathbb{G}_P$ and $\mathbb{G}_Q$ are independent Brownian Bridge processes.
    \label{limit-vstat-2}
\end{lemmaA}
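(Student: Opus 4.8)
The plan is to mirror the argument of Lemma A.\ref{limit-vstat-1}, using the double degeneracy to collapse the signed-measure integral into a clean bilinear form and then passing to the limit through a tensor-product basis expansion. First I would note that, by $\E_Q h(x_1,Y_1)=0$ and $\E_P h(X_1,y_1)=0$, integrating either coordinate against $P$ or $Q$ annihilates the kernel; expanding $d\hat G_P$ and $d\hat G_Q$ into their empirical and population parts therefore kills every mixed term, so the product integral reduces to $\int h\, dP_n\, dQ_m = \tfrac{1}{nm}\sum_{i,j} h(X_i,Y_j)$. Hence, up to the scaling displayed in the statement, the object of interest is the normalized bilinear form
$$W_{n,m}(h) := \frac{1}{\sqrt{nm}}\sum_{i=1}^n\sum_{j=1}^m h(X_i,Y_j),$$
and all subsequent steps operate on this representation, whose target limit is $\int h\, d\mathbb{G}_P\, d\mathbb{G}_Q$.

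Next I would fix orthonormal bases $1=\phi_0,\phi_1,\phi_2,\ldots$ of $L_2(P)$ and $1=\psi_0,\psi_1,\psi_2,\ldots$ of $L_2(Q)$, so that $\{\phi_a\otimes\psi_b\}$ is an orthonormal basis of $L_2(P\times Q)$. Writing $h=\sum_{a,b}\langle h,\phi_a\otimes\psi_b\rangle\,\phi_a\otimes\psi_b$, the double degeneracy forces all coefficients with $a=0$ or $b=0$ to vanish, so the sum runs over $a,b\ge 1$. Since $h\mapsto W_{n,m}(h)$ is linear and factorizes on product kernels as $W_{n,m}(\phi_a\otimes\psi_b)=\big(\int\phi_a\,d\hat G_P\big)\big(\int\psi_b\,d\hat G_Q\big)$, I would treat the finite truncation $h_l=\sum_{a,b=1}^l\langle h,\phi_a\otimes\psi_b\rangle\,\phi_a\otimes\psi_b$ by applying the multivariate central limit theorem separately to the two independent samples. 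This shows the finite family $\{(\int\phi_a\,d\hat G_P,\int\psi_b\,d\hat G_Q)\}_{1\le a,b\le l}$ converges jointly to independent mean-zero Gaussian families $\{\xi_a\}$ and $\{\eta_b\}$ with $\xi_a=\int\phi_a\,d\mathbb{G}_P$ and $\eta_b=\int\psi_b\,d\mathbb{G}_Q$ (each standard normal since $\phi_a,\psi_b\perp 1$, and mutually independent since $\mathbb{G}_P\perp\mathbb{G}_Q$); by the continuous mapping theorem, $W_{n,m}(h_l)\Rightarrow\sum_{a,b=1}^l\langle h,\phi_a\otimes\psi_b\rangle\,\xi_a\eta_b=\int h_l\,d\mathbb{G}_P\,d\mathbb{G}_Q$.

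The uniform tail control is where the two-sample structure pays off, and it is the step I would present most carefully. A direct second-moment computation, using independence of the two samples together with the double degeneracy, shows that all cross-covariances vanish and only the fully-diagonal terms $i=i',\,j=j'$ survive, giving the exact identities $\E\{W_{n,m}(h)\}=0$ and $\mathrm{Var}\{W_{n,m}(h)\}=\|h\|_{L_2(P\times Q)}^2$ for every degenerate $h$, uniformly in $n$ and $m$. Consequently $\mathrm{Var}\{W_{n,m}(h-h_l)\}=\|h-h_l\|_{L_2(P\times Q)}^2\to 0$ as $l\to\infty$, uniformly in $(n,m)$, and likewise $\int(h-h_l)\,d\mathbb{G}_P\,d\mathbb{G}_Q\to 0$ in $L_2$. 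I would then close the argument exactly as in Lemma A.\ref{limit-vstat-1}, bounding the characteristic-function discrepancy $|\phi_{n,m}(t)-\phi(t)|$ by the truncation error $|\phi_{n,m,l}(t)-\phi_{n,m}(t)|$, the fixed-$l$ convergence, and the limit-side error $|\phi_l(t)-\phi(t)|$, and using $\E|e^{itX}-1|\le|t|(\E|X|^2)^{1/2}$ with the uniform variance bound to send first $\min\{n,m\}\to\infty$ and then $l\to\infty$.

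The hard part will be justifying this interchange of the truncation limit with the sample-size limit, which is precisely what the uniform-in-$(n,m)$ variance identity secures. It is worth emphasizing that, unlike Lemma A.\ref{limit-vstat-1}, no diagonal contributions of the form $h(X_i,X_i)$ arise here because the two arguments of $h$ are drawn from distinct independent samples; as a result the moment bookkeeping is in fact cleaner, and the finiteness of $\E_{P\times Q}h^2(X_1,Y_1)$ alone suffices to drive every estimate.
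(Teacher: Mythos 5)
Your proposal is correct and follows essentially the same route as the paper's proof: a tensor-product orthonormal expansion of the doubly degenerate kernel, convergence of each finite truncation via the finite-dimensional CLT and the continuous mapping theorem, the uniform-in-$(n,m)$ variance identity $\mathrm{Var}\{W_{n,m}(h)\}=\|h\|_{L_2(P\times Q)}^2$ to control the tail, and a characteristic-function triangle inequality to interchange the truncation and sample-size limits. The only cosmetic differences are that you make the reduction to the bilinear form $\tfrac{1}{\sqrt{nm}}\sum_{i,j}h(X_i,Y_j)$ explicit at the outset and control the limit-side tail in $L_2$ rather than $L_1$; neither changes the substance.
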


\begin{proof}[\bf Proof]
    Let $1=f_0,f_1,f_2,\ldots$ be an orthonormal basis of $L_2(\mathcal{X},\mathcal{A},P)$ and $1=g_0,g_1,g_2,\ldots$ be an orthonormal basis of $L_2(\mathcal{X},\mathcal{A},Q)$. Define, $\mathcal{F}=\{f_0,f_1,f_2,\ldots\}\cup\{g_0,g_1,g_2,\ldots\}$. Using similar argument as in Lemma A.\ref{limit-vstat-1} we can argue that $\mathcal{F}$ is a Donsker's class of functions.
    
    Let us define, $T_n(f) := \sqrt{nm}\int h(x_1,y_1)d\hat{G}_P(x_1)d\hat{G}_Q(y_1)$. Then for $g,h\in L_2(\mathcal{X}^2,\mathcal{A}^2,P\otimes Q)$, $\text{cov}(T_n(g),T_n(h)) = \int \Tilde{g}(u,v)\Tilde{h}(u,v)dP(u)dQ(v),$ and $\E\big\{T_n(h)\big\} = \E\big\{\Tilde{h}(X_1,Y_1)\big\} = 0,$
    where $\Tilde{h},\Tilde{g}$ are defined as in Lemma A.\ref{limit-vstat-1}. Now write $h$ as a series $\sum_{k_1,k_2}\langle h, f_{k_1}\times g_{k_2}\rangle f_{k_1}\times g_{k_2}$ where $k_1,k_2\geq 1$ (by the degeneracy assumption). Since, $h_l = \sum_{(k_1,k_2)\in [l]^2}\langle h, f_{k_1}\times g_{k_2}\rangle f_{k_1}\times g_{k_2}$ converges to $h$ in $L_2(P\otimes Q)$, we have $T_n(h_l)$ converges to $T_n(h)$ in $L_2(P^n\otimes Q^m)$(as $\text{Var}\big(h-h_l\big)$ converges to zero as $l$ goes to infinity). 
    
    Since $\mathcal{F}$ is a class of Donsker's function, the finite-dimensional distributions of the process $\{T_n(f_{k_1}\times g_{k_2})\mid (k_1,k_2)\in \mathbb{N}^2\}$ converges to the corresponding finite-dimensional distributions of the process $\{\int f_{k_1}(u)g_{k_2}(v)d\mathbb{G}_P(u)d\mathbb{G}_Q(v)\mid (k_1,k_2)\in \mathbb{N}^2\}$. Then $\sum_k\langle h_l,f_{k_1}\times g_{k_2}\rangle T_n(f_{k_1}\times g_{k_2})$ converges in distribution to $\sum_k\langle h_l,f_{k_1}\times g_{k_2}\rangle \int f_{k_1}(u)g_{k_2}(v)d\mathbb{G}_P(u)d\mathbb{G}_Q(v) = \int h_l(x_1,x_2)d\mathbb{G}_P(x_1)d\mathbb{G}_Q(x_2)$ as $\min\{n,m\}$ grows to infinity, by continuous mapping theorem. 
    
    Now, $\E \big|\int f_i(u) d\mathbb{G}_P(u)\big| = \sqrt{2/\pi}\big(\int f_i^2(u)dP(u)\big)^{1/2} = \sqrt{2/\pi}~\forall i$, by orthonormality of the basis elements. Similarly, $\E \big|\int f_i(u) d\mathbb{G}_P(u)\big| = \sqrt{2/\pi}~\forall i$. Then for any $l_1<l_2$ we have, 
    \begin{equation*}
      \begin{split}
          & \E\big|\int h_{l_1}(x_1,x_2)d\mathbb{G}_P(x_1)d\mathbb{G}_Q(x_2)-\int h_{l_2}(x_1,x_2)d\mathbb{G}_P(x_1)d\mathbb{G}_Q(x_2)\big|\\ 
       = ~& \E\big|\sum_{(i,j)\in [l_2]^2\setminus [l_1]^2} \langle h, f_{i}\times g_j\rangle \int f_i(u) d\mathbb{G}_P(u)\int g_j(u) d\mathbb{G}_Q(u)\big|\\
    \leq ~& \sum_{(i,j)\in [l_2]^2\setminus [l_1]^2} \big|\langle h, f_{i}\times g_j\rangle\big| \E\big|\int f_i(u) d\mathbb{G}_P(u)\int g_j(u) d\mathbb{G}_Q(u)\big|\\
    = ~& \sum_{(i,j)\in [l_2]^2\setminus [l_1]^2} \big|\langle h, f_{i}\times g_j\rangle\big|~\E\big|\int f_i(u) d\mathbb{G}_P(u)\big|~\E\big|\int g_j(u) d\mathbb{G}_Q(u)\big| \\
    \leq ~& \sum_{(i,j)\in [l_2]^2\setminus [l_1]^2} \big|\langle h, f_{i}\times f_j\rangle\big|\frac{2}{\pi} \leq \frac{2}{\pi}\Big(\sum_{(i,j)\in [l_2]^2\setminus [l_1]^2} \big|\langle h, f_{i}\times f_j\rangle\big|^2\Big)^{1/2}.
      \end{split}
  \end{equation*}
    Now using similar arguments as in Lemma A.\ref{limit-vstat-1} gives us our result. 
\end{proof}

\begin{rem}
    Lemma A.\ref{limit-vstat-2} also holds for any $h:\mathcal{X}^k\to\R$ that is completely degenerate and the integrations with respect to $\hat G_P$ and $\hat G_Q$ are taken on any of the coordinates. The proof would be similar to the above, so we omit it here.
\end{rem}

\begin{proof}[\bf Proof of Theorem \ref{thm1}]
Let $X\sim F$ and $Y\sim G$ and define $\phi:\H\to {\mathbb C}$ as
$\phi(f) = \E\{e^{i\langle X,f \rangle}\}-\E\{e^{i\langle Y,f \rangle}\}.$
Notice that the function $\phi$ is continuous and $\phi(f)  = 0~ \forall f\in\H$ implies the two distributions $F$ and $G$ are equal. It is easy to see that if $F=G$, $T(F^f,G^f)=0$ for any $f\in\H$. Hence, when $F=G$, $\zeta^\nu(F,G)=0$ for any probability distribution $\nu$ on $\H$ follows trivially. Now  $\zeta^\nu(F,G) = 0$ implies there exists a Borel measurable set $E\in\mathcal{B}(\H)$ such that $\nu(E) = 1$ and $T(F^f,G^f) = 0~ \forall f\in E$. Hence, by the assumption on $T$, we have
$\phi(f) = 0 ~\forall f\in E$. Thus when $supp\{\nu\}$ is contains the surface of the unit sphere, it follows that for any $f\in\H$, $\langle X,f\rangle = \|f\|\langle X,f/\|f\|\rangle$ and $\langle Y,f\rangle = \|f\|\langle Y,f/\|f\|\rangle$ have the same distribution. Hence, we have $\phi(f) = 0~\forall f\in\H$. 
\end{proof}

\begin{proof}[\bf Proof of Theorem \ref{thm2}]
Let us first consider the following claim.

\textbf{Claim:} Suppose that $X$ is a random variable of a Hilbert space $\H$ with distribution $F$. If $supp\{F\}\subset \H_0$ where $\H_0$ is a closed subspace of $\H$, then $X$ and $QX$ have the same distribution
, where $Q:\H \rightarrow \H_0$ is the projection operator onto $\H_0$.

The claim can be proven by showing that the the characteristic function of $X$ and $QX$ are identical. Now take any arbitrary $\theta \in\H$, and note that
\begin{equation*}
    \begin{split}
        \E\{e^{i\langle X,\theta \rangle}\}  =& \int_\H e^{i\langle x,\theta\rangle} \,dF(x)
         = \int_{\H_0} e^{i\langle x,\theta\rangle} \,dF(x)
         = \int_{\H_0} e^{i\langle Qx,Q\theta\rangle+i\langle (I-Q)x,(I-Q)\theta\rangle}\,dF(x)\\
         & = \int_{\H_0} e^{i\langle Qx,Q\theta\rangle+i\langle 0,(I-Q)\theta\rangle}\,dF(x)
         = \int_{\H_0} e^{i\langle Qx,\theta\rangle}\,dF(x) = \E\{e^{i\langle QX,\theta \rangle}\}.
    \end{split}
\end{equation*}

Hence the claim holds. Denote $\H_0 = \overline{span\{supp\{F\}\cup supp\{G\}\}}$, clearly $\H_0$ is a closed subspace of $\H$. Let $Q$ denote the projection operator from $\H$ to $\H_0$. Then 
\begin{equation*}
    \begin{split}
     \phi(f)  &= \E\{e^{i\langle X,f \rangle}\}-\E\{e^{i\langle Y,f \rangle}\}
         = \int_{\H}e^{i\langle x,f \rangle}dF(x) - \int_{\H}e^{i\langle y,f \rangle}dG(y)\\
        & = \int_{\H_0}e^{i \langle x,f \rangle}dF(x) - \int_{\H_0}e^{i \langle y,f \rangle}dG(y)\\
        & = \int_{\H_0}e^{i \langle Qx,Qf \rangle+i \langle (I-Q)x,(I-Q)f \rangle}dF(x) - \int_{\H_0}e^{i\langle Qy,Qf \rangle+i \langle (I-Q)y,(I-Q)f \rangle}dG(y)\\
        & = \int_{\H_0}e^{i \langle Qx,Qf \rangle}dF(x) - \int_{\H_0}e^{i\langle Qy,Qf \rangle}dG(y)\\
        & {=} \int_{\H_0}e^{i \langle x,Qf \rangle}dF(x) - \int_{\H_0}e^{i \langle y,Qf \rangle}dG(y)~~~~~~~~~\mbox{ (Using Claim with $\theta = Qf$) }\\
        & = \phi(Qf)
    \end{split}
\end{equation*}

This shows that it is enough to consider the value of $\phi(.)$ on $\H_0$. Now if $\zeta^{(F+G)/2}(F,G) = 0$, we have $\phi(f) = 0$ for all $f\in supp\{F\}\cup supp\{G\}$. So, $\langle X, f\rangle$ and $\langle Y, f\rangle$ are independent and identically distributed for any fixed $f\in supp\{F\}\cup supp\{G\}$. Now for any $g\in\H_0$ we can write $g = \sum_{i=1}^\infty a_{i} f_{i}$ for $\{f_{i}\}\subset supp\{F\}\cup supp\{G\}$ and $\{a_{i}\}\subset \R$. It follows trivially that $\langle X,g\rangle = \sum_{i=1}^\infty a_{i}\langle X, f_{i}\rangle$ and $\langle Y,g\rangle = \sum_{i=1}^\infty a_{i}\langle Y, f_{i}\rangle$ are also independent and identically distributed. This implies $\phi(g)= 0 ~\forall g\in\H_0$ and by Claim 1 and 2 $\phi(g) = 0~ \forall g\in\H$. So, if $\zeta^{(F+G)/2}(F,G) = 0$, $X$ and $Y$ are identically distributed. 
\end{proof}

\noindent
\begin{proof}[\bf Proof of Proposition \ref{depprop}]
\text{  }
\begin{itemize}
    \item[(a)] It is easy to see that,
    \begin{equation*}
        \begin{split}
            \E\{g(X_1,X_2,X_3;Y_1,Y_2,Y_3)\} & = \frac{1}{2}\int T_\phi(F^f,G^f)dF(f)+\frac{1}{2}\int T_\phi(F^f,G^f)dG(f)\\
            & = \frac{1}{2}\int T_\phi(F^f,G^f)d(F+G)(f)\\
            & = \zeta_\phi(F,G).
        \end{split}
    \end{equation*}
    \item[(b)] The proof of this statement follows from Theorem \ref{thm2}.
    \item[(c)] Note, if $U:\mathcal{H}\to\mathcal{G}$ is a unitary operator ($U$ is a linear bijective map with $\langle Ux,Uy\rangle = \langle x,y\rangle$), then $g(.)$ remains invariant of this operation, i.e., 
    $$g(UX_1,UX_2,UX_3;UY_1,UY_2,UY_3) = g(X_1,X_2,X_3;Y_1,Y_2,Y_3).$$
    
    Thus $\zeta(F,G) = \E\{g(X_1,X_2,X_3;Y_1,Y_2,Y_3)\} = \E\{g(UX_1,UX_2,UX_3;UY_1,UY_2,UY_3)\} = \zeta(F\circ U^{-1}, G\circ U^{-1})$. Thus $\zeta$ is invariant under unitary operations.
    
    \item[(d)] This follows by simply applying the Dominated Convergence Theorem.
\end{itemize}
\end{proof}

\begin{proof}[\bf Proof of Theorem \ref{largesampleres}]

First let us note that $\zeta(F,G) = \E\{g^*(X_1,X_2,X_3;Y_1,Y_2,Y_3)\}$ and define the function, $\varphi:\R^2\to\R$ as $\varphi(t,s) = \zeta(F+t\sqrt{n}(F_n-F),G+s\sqrt{m}(G_m-G))$, where $F_n$ and $G_m$ are the empirical probability distribution based on the observed data $\mathcal{X}$ and $\mathcal{Y}$. Clearly, $\varphi$ is a bivariate polynomial with random coefficients. It is easy to see that the coefficients are tight (by Lemma A.\ref{limit-vstat-1} and A.\ref{limit-vstat-2}). Also note that $\hat\zeta_{n,m} = \varphi(1/\sqrt{n},1/\sqrt{m})$ and $\varphi(0,0) = \zeta(F,G)$. Hence the limiting distribution of $\hat\zeta_{n,m}$ will be determined by the leading non-zero coefficient of $\varphi(1/\sqrt{n},1/\sqrt{m})$. Now, for any $n,m\in\mathbb{N}$,
$$\varphi(1/\sqrt{n},1/\sqrt{m}) = \varphi(0,0)+\frac{1}{\sqrt{n}}\frac{\partial}{\partial t}\varphi(t,s)\Big|_{(0,0)}+\frac{1}{\sqrt{m}}\frac{\partial}{\partial s}\varphi(t,s)\Big|_{(0,0)}+R_{n,m},$$

where $R_{n,m}$ is of stochastic order $O_P\big(\{1/\sqrt{n}+1/\sqrt{m}\}^2\big)$. Note that, 
$$\frac{\partial}{\partial t}\varphi(t,s)\Big|_{(0,0)} = \sqrt{n}\int h_1^*(x)d(F_n-F)(x)\hspace{15pt}\text{ and }\hspace{15pt}\frac{\partial}{\partial s}\varphi(t,s)\Big|_{(0,0)} = \sqrt{m}\int h_2^*(y)d(G_m-G)(y),$$
where $h_1^*(x) = 3~\E\{g^*(x,X_2,X_3;Y_1,Y_2,Y_3)\}$ and $h_2^*(y) = 3~\E\{g^*(X_1,X_2,X_3;y,Y_2,Y_3)\}$, the first order projection of the symmetrized core function $g^*$ upto a constant term. Hence, we have,
$$\hat \zeta_{n,m}-\zeta(F,G) = \frac{1}{\sqrt{n}}\int h_1^*(x)\sqrt{n}d(F_n-F)(x)+\frac{1}{\sqrt{m}}\int h_2^*(y)\sqrt{m}d(G_m-G)(y)+O_{P}\Big(\big\{\frac{1}{\sqrt{n}}+\frac{1}{\sqrt{m}}\big\}^2\Big).$$

Since, $\mathcal{X}$ and $\mathcal{Y}$ are independently generated, the corresponding empirical processes $\sqrt{n}(F_n-F)$ and $\sqrt{m}(G_m-G)$ are also independent of each other. Now if atleast one $h_1^*$ and $h_2^*$ is a non-zero function, applying CLT we have, $\sqrt{nm/n+m}\big\{\hat \zeta_{n,m}-\zeta(F,G)\big\}$ converges in distribution to a normal limiting distribution with mean zero and variance $\lambda \text{Var}\big(h_1^*(X_1)\big)+(1-\lambda)\text{Var}\big(h_2^*(Y_1)\big)$.

If $h_1^*$ and $h_2^*$ are both zero functions, the limiting distribution of $\hat\zeta_{n,m}$ will be determined by coefficients of the higher order terms in the random bivariate polynomial $\varphi(t,s)$. Generally, in such situations, one may have,
$$\zeta_{n,m}-\zeta(F,G) = \frac{1}{2n}\frac{\partial^2}{\partial t^2}\varphi(t,s)\Big|_{(0,0)}+\frac{1}{2m}\frac{\partial^2}{\partial s^2}\varphi(t,s)\Big|_{(0,0)}+\frac{1}{\sqrt{nm}}\frac{\partial^2}{\partial t\partial s}\varphi(t,s)\Big|_{(0,0)} + R_{n,m}.$$

Where $$\frac{\partial^2}{\partial t^2}\varphi(t,s)\Big|_{(0,0)} = n\int h_1^*(u,v)d(F_n-F)(u)d(F_n-F)(v),$$ 
$$\frac{\partial^2}{\partial t\partial s}\varphi(t,s)\Big|_{(0,0)} = \sqrt{nm}\int h_2^*(u,v)d(F_n-F)(u)d(G_m-F)(v),$$
and $$\frac{\partial^2}{\partial s^2}\varphi(t,s)\Big|_{(0,0)} = m\int h_3^*(u,v)d(G_m-F)(u)d(G_m-F)(v),$$
where $h_1^*(u,v), h_2^*(u,v)$ and $h_3^*(u,v)$ are the second order projections of the symmetrized core function $g^*$ upto a constant factor and $R_{n,m} = O_P\big(\{1/\sqrt{n}+1/\sqrt{m}\}^3\big)$. Here the limiting distribution of $nm/(n+m)\big\{\hat \zeta_{n,m}-\zeta(F,G)\big\}$ will depend on the limiting distribution of the random vector $(\frac{\partial^2}{\partial t^2}\varphi(t,s)\Big|_{(0,0)},\frac{\partial^2}{\partial s^2}\varphi(t,s)\Big|_{(0,0)},\frac{\partial^2}{\partial t\partial s}\varphi(t,s)\Big|_{(0,0)})$. Now we prove our result below.

\begin{enumerate}
    \item[(a)] First we will find the functions $h_1^*(x)$ and $h_2^*(y)$ upto an additive constant term. Since adding a constant to the functions does not change the limiting distribution of the empirical stochastic integrals. Note that,
    \begin{equation*}
        \begin{split}
                h_1^*(x) & =3~\E\big\{g^*(x,X_2,X_3;Y_1,Y_2,Y_3)\big\}\\
                & = \Big\{\E\big(g(x,X_2,X_3;Y_1,Y_2,Y_3)+g(X_1,X_2,x;Y_1,Y_2,Y_3)+g(X_1,X_2,x;Y_1,Y_2,Y_3)\big)\Big\},
        \end{split}
    \end{equation*}
    where $g(x_1,x_2,x_3;y_1,y_2,y_3)$ is the function defined in Proposition \ref{depprop} (a). Now,
    \begin{equation*}
    \begin{split}
        & \E\big(g(x,X_2,X_3;Y_1,Y_2,Y_3)\big)\\
        & =  \frac{1}{2}\E\Big\{-\phi\big(|\langle X_2,x\rangle- \langle X_3,x\rangle|^2\big)-\phi\big(|\langle Y_2,x\rangle- \langle Y_3,x\rangle|^2\big)
         + 2~\phi\big(|\langle X_2,x\rangle - \langle Y_1,x\rangle|^2\big)\\
         & \hspace{40pt}+2~\phi\big(|\langle x,Y_1\rangle- \langle Y_2, Y_1\rangle|^2\big)\Big\}+c_1,\\
        & \E\big(g(X_1,x,X_3;Y_1,Y_2,Y_3)\big)\\
        & =\frac{1}{2}\E\Big\{-\phi\big(|\langle x,X_1\rangle- \langle X_3,X_1\rangle|^2\big)+2~\phi\big(|\langle x,X_1\rangle, \langle Y_1,X_1\rangle|^2\big)
         -\phi\big(|\langle x,Y_1\rangle- \langle X_3,Y_1\rangle|^2\big)\Big\}+c_2,\\
        & \text{and}\\ 
        & \E\big(g(x,X_2,X_3;Y_1,Y_2,Y_3)\big) =  \frac{1}{2}\E\Big\{-\phi\big(|\langle X_2,X_1\rangle- \langle x,X_1\rangle|^2\big)-\phi\big(|\langle X_2,Y_1\rangle- \langle x,Y_1\rangle|^2\big)\Big\}+c_3,
    \end{split}
    \end{equation*}
     where $c_1,c_2$ and $c_3$ are constants that depend on he kernel $k(.,.)$ and the distributions $F$ and $G$. Clearly,
     \begin{equation*}
        \begin{split}
            h_1^*(x)~ = & \Big[\E\Big\{-\phi\big(|\langle x,X_1\rangle- \langle X_3,X_1\rangle|^2\big)-\phi\big(|\langle X_2,Y_1\rangle- \langle x,Y_1\rangle|^2\big)\\
            & \hspace{20pt}+\phi\big(|\langle x,X_1\rangle- \langle Y_1,X_1\rangle|^2\big)+\phi\big(|\langle x,Y_1\rangle- \langle Y_2, Y_1\rangle|^2\big)\Big\}\Big]+\frac{1}{2}T_\phi(F^x,G^x)+c.
        \end{split} 
     \end{equation*}
     Under $H_0$, $h_1^*$ is the zero function and under $H_1$, it is non-zero. We can conclude the same for $h_2(y)$ due to structural symmetry of $g$. Hence, $\sqrt{nm/(n+m)}\big(\hat\zeta_{n,m}-\zeta(F,G)\big)$ converges in distribution to $\sqrt{1-\lambda}\int h_1^*(x)d\mathbb{G}_F(x)+\sqrt{\lambda}\int h_2^*(y)d\mathbb{G}_G(y)$ as $\min\{n,m\}$ goes to infinity with $n/(n+m)\to\lambda\in[0,1]$, where $\mathbb{G}_F$ and $\mathbb{G}_G$ are two independent Brownian Bridge processes. The random variable $\sqrt{1-\lambda}\int h_1^*(x)d\mathbb{G}_F(x)+\sqrt{\lambda}\int h_2^*(y)d\mathbb{G}_G(y)$ is a normal random variable with zero mean and variance $(1-\lambda)\text{Var}\big(h_1^*(X)\big)+\lambda)\text{Var}\big(h_2^*(Y)\big)$.
     
    \item[(b)] Here also we will find the functions $h_1^*(u,v),h_2^*(u,v)$ and $h_3^*(u,v)$ upto an additive constant. Since, for any $g$, $\Tilde{(g+c)}(u,v) = g(u,v)+c-\E h(u,V)-c-\E h(U,v)-c+\E h(U,V)+c = \Tilde{g}$ for any constant $c$, as before, here also the additive constant will not effect the limiting distribution of the empirical stochastic integrals. Note that,
    \begin{equation*}
        \begin{split}
            h_1^*(u,v) ~= \Big\{&\E g\big(u,v,X_3;Y_1,Y_2,Y_3\big)+ \E g\big(u,X_2,v;Y_1,Y_2,Y_3\big)+ \E g\big(X_1,u,v;Y_1,Y_2,Y_3\big)\\
            & + \E g\big(v,u,X_3;Y_1,Y_2,Y_3\big)+\E g\big(v,X_2,u;Y_1,Y_2,Y_3\big)+\E g\big(X_1,u,v;Y_1,Y_2,Y_3\big)\Big\}\\
            =: ~~& \Big\{T_1(u,v)+T_2(u,v)\Big\},
        \end{split}
    \end{equation*}
    where $T_1(u,v) = \E g\big(u,v,X_3;Y_1,Y_2,Y_3\big)+ \E g\big(u,X_2,v;Y_1,Y_2,Y_3\big)+ \E g\big(X_1,u,v;Y_1,Y_2,Y_3\big)$ and $T_1$ and $T_2$ are related as in $T_2(u,v) = T_1(v,u)$. Hence, it is enough to find $T_1(u,v)$. Now,
    \begin{equation*}
    \begin{split}
        & \E\big(g(u,v,X_3;Y_1,Y_2,Y_3)\big) \\
        & =~ \frac{1}{2}\E\Big\{-\phi\big(|\langle v,u\rangle- \langle X_3,u\rangle|^2\big)-\phi\big(|\langle Y_2,u\rangle- \langle Y_3,u\rangle|^2\big)+ 2~\phi\big(|\langle v,u\rangle- \langle Y_1,u\rangle|^2\big)\\
        &~~~-\phi\big(|\langle v,Y_1\rangle- \langle X_3, Y_1\rangle|^2\big)-\phi\big(|\langle Y_2,Y_1\rangle- \langle Y_3, Y_1\rangle|^2\big)+2~\phi\big(|\langle u,Y_1\rangle- \langle Y_2, Y_1\rangle|^2\big)\Big\},\\
        & \E\big(g(u,X_2,v;Y_1,Y_2,Y_3)\big)\\
        & =~ \frac{1}{2}\E\Big\{-\phi\big(|\langle X_2,u\rangle- \langle v,u\rangle|^2\big)-\phi\big(|\langle Y_2,u\rangle- \langle Y_3, u\rangle|^2\big)+2~\phi\big(|\langle X_2,u\rangle- \langle Y_1,u\rangle|^2\big)\\
        & ~~~-\phi\big(|\langle X_2,Y_1\rangle- \langle v,Y_1\rangle|^2\big)-\phi\big(|\langle Y_2,Y_1\rangle- \langle Y_3, Y_1\rangle|^2\big)+2~\phi\big(|\langle u,Y_1\rangle- \langle Y_2, Y_1\rangle|^2\big)\Big\},\\
    \end{split}
    \end{equation*}
    and
    \begin{equation*}
        \begin{split}
            & \E\big(g(X_1,u,v;Y_1,Y_2,Y_3)\big)\\
            & =~ \frac{1}{2}\E\Big\{-\phi\big(|\langle u,X_1\rangle- \langle v,X_1\rangle|^2\big)-\phi\big(|\langle Y_2,X_1\rangle- \langle Y_3,X_1\rangle|^2\big)+2~\phi\big(|\langle u,X_1\rangle- \langle Y_1,X_1\rangle|^2\big)\\
            & ~~~-\phi\big(|\langle u,Y_1\rangle- \langle v, Y_1\rangle|^2\big)-\phi\big(|\langle Y_2,Y_1\rangle- \langle Y_3, Y_1\rangle|^2\big)+2~\phi\big(|\langle X_1,Y_1\rangle- \langle Y_2, Y_1\rangle|^2\big)\Big\}.
        \end{split}
    \end{equation*}
    Therefore, under $H_0$,
    \begin{equation*}
        \begin{split}
            T_1(u,v) ~= & \E\Big\{-\phi\big(|\langle v,Y_1\rangle- \langle X_3, Y_1\rangle|^2\big)\Big\}+\E\Big\{\phi\big(|\langle u,X_1\rangle- \langle Y_1, X_1\rangle|^2\big)\Big\}\\
            &+2~\E\Big\{\phi\big(|\langle u,Y_1\rangle- \langle Y_2, Y_1\rangle|^2\big)\Big\}-\E\Big\{\phi\big(|\langle u,Y_1\rangle- \langle v, Y_1\rangle|^2\big)\Big\}+b_1,
        \end{split}
    \end{equation*}
    for some constant $b_1$ depending on the kernel $k(.,.)$ and the distribution $F$ under $H_0$. Similarly we get,
    \begin{equation*}
        \begin{split}
            T_2(u,v) ~= & \E\Big\{-\phi\big(|\langle u,Y_1\rangle- \langle X_3, Y_1\rangle|^2\big)\Big\}+\E\Big\{\phi\big(|\langle v,X_1\rangle- \langle Y_1, X_1\rangle|^2\big)\Big\}\\
            &+2~\E\Big\{\phi\big(|\langle v,Y_1\rangle- \langle Y_2, Y_1\rangle|^2\big)\Big\}-\E\Big\{\phi\big(|\langle v,Y_1\rangle- \langle u, Y_1\rangle|^2\big)\Big\}+b_1,
        \end{split}
    \end{equation*}
    for some constant $b_1$ depending on the kernel $k(.,.)$ and the distribution $F$ under $H_0$. Hence we obtain, under $H_0$,
    \begin{equation*}
        \begin{split}
            h_1^*(u,v) = 2\left\{\E\Big\{-\phi\big(|\langle u,Y_1\rangle- \langle v, Y_1\rangle|^2\big)\Big\}+\E\Big\{\phi\big(|\langle u,Y_1\rangle- \langle Y_2, Y_1\rangle|^2\big)\Big\}+\E\Big\{\phi\big(|\langle v,Y_1\rangle- \langle Y_2, Y_1\rangle|^2\big)\Big\}\right\}+b_1^*.
        \end{split}
    \end{equation*}
    Clearly, this is symmetric and non-zero. By the structural symmetry of $g$ we also have $h_3^*(u,v)= h_1^*(u,v)$. Also note that,
    \begin{equation*}
        \begin{split}
            h_2^*(u,v) ~= \Big\{&\E g\big(u,X_2,X_3;v,Y_2,Y_3\big)+ \E g\big(u,X_2,X_3;Y_1,v,Y_3\big)+ \E g\big(u,X_2,X_3;Y_1,Y_2,v\big)\\
            & + \E g\big(X_1,u,X_3;v,Y_2,Y_3\big)+ \E g\big(X_1,u,X_3;Y_1,v,Y_3\big)+ \E g\big(X_1,u,X_3;Y_1,Y_2,v\big)\\
            & + \E g\big(X_1,X_2,u;v,Y_2,Y_3\big)+ \E g\big(X_1,X_2,u;Y_1,v,Y_3\big)+ \E g\big(X_1,X_2,u;Y_1,Y_2,v\big)\Big\}.
        \end{split}
    \end{equation*}

   Upon simplification, this yields that under $H_0$, 
    $$h_2^*(u,v) = \left\{\E\Big\{-\phi\big(|\langle u,Y_1\rangle- \langle Y_2, Y_1\rangle|^2\big)\Big\}-\E\Big\{\phi\big(|\langle v,Y_1\rangle- \langle Y_2, Y_1\rangle|^2\big)\Big\}+2~\E\Big\{\phi\big(|\langle u,Y_1\rangle- \langle v, Y_1\rangle|^2\big)\Big\}\right\}+b_3^*.$$

    Hence, under $H_0$, $h_1^*(u,v) = h_3^*(u,v) = 2~h(u,v)$ (say), then $h_2^*(u,v) = - 2 ~h(u,v)$ which is a non-zero function. Thus, the limiting distribution of $\hat\zeta_{n,m}^\phi$ is determined by the joint asymptotic distribution of $\big(\int h(u,v)d\hat{\mathbb{G}}_F(u)d\hat{\mathbb{G}}_F(v),\int h(u,v)d\hat{\mathbb{G}}_F(u)d\hat{\mathbb{G}}^\prime_F(v),\int h(u,v)d\hat{\mathbb{G}}^\prime_F(u)d\hat{\mathbb{G}}^\prime_F(v)\big)$ where $\hat{\mathbb{G}}_F = \sqrt{n}(\hat F_n-F)$ and $\hat{\mathbb{G}}^\prime_F = \sqrt{m}(\hat G_m-F)$. For joint convergence we need to look at the in distribution convergence of,
    \begin{equation*}
        \begin{split}
            t_1\int h(u,v)d\hat{\mathbb{G}}_F(u)d\hat{\mathbb{G}}_F(v)+t_2\int h(u,v)d\hat{\mathbb{G}}_F(u)d\hat{\mathbb{G}}^\prime_F(v)+t_3\int h(u,v)d\hat{\mathbb{G}}^\prime_F(u)d\hat{\mathbb{G}}^\prime_F(v)
        \end{split}
    \end{equation*}
    
    for some real numbers $t_1,t_2$ and $t_3$. 
    We first write $h$ as a series with respect to an orthonormal basis $1=f_0,f_1,f_2,\ldots$ of $L_2(F)$ as,
    \begin{equation*}
        \begin{split}
            h(u,v) = \sum_{(k_1,k_2)\in \mathbb{N}^2}\langle h,f_{k_1}\times f_{k_2}\rangle f_{k_1}\times f_{k_2}
        \end{split}
    \end{equation*}

    and truncate the series at $l$ to get,
    \begin{equation*}
        \begin{split}
            h_l(u,v) = \sum_{(k_1,k_2)\in [l]^2}\langle h,f_{k_1}\times f_{k_2}\rangle f_{k_1}\times f_{k_2}.
        \end{split}
    \end{equation*}

    Continuous mapping theorem gives us that the random variable $t_1\int h_l(u,v)d\hat{\mathbb{G}}_F(u)d\hat{\mathbb{G}}_F(v)$ $+t_2\int h_l(u,v)d\hat{\mathbb{G}}_F(u)d\hat{\mathbb{G}}^\prime_F(v)$ $+t_3\int h_l(u,v)d\hat{\mathbb{G}}^\prime_F(u)d\hat{\mathbb{G}}^\prime_F(v)$ converges in distribution to the random variable $t_1\int h_l(u,v)d\mathbb{G}_F(u)d\mathbb{G}_F(v)+ t_2 \int h_l(u,v) d\mathbb{G}_F(u) d\mathbb{G}^\prime_F(v)+t_3\int h_l(u,v)d\mathbb{G}^\prime_F(u)d\mathbb{G}^\prime_F(v)$. Also take any $l_2>l_1>N$ such that $\|h_{l_1}-h_{l_2}\|_{L_2(F^2)}<\epsilon$, then
    \begin{equation*}
        \begin{split}
            \E\Big|&t_1\int h_{l_1}(u,v)d{\mathbb{G}}_F(u)d{\mathbb{G}}_F(v)+t_2\int h_{l_1}(u,v)d{\mathbb{G}}_F(u)d{\mathbb{G}}^\prime_F(v)\\
            & +t_3\int h_{l_1}(u,v)d{\mathbb{G}}^\prime_F(u)d{\mathbb{G}}^\prime_F(v) -t_1\int h_{l_2}(u,v)d{\mathbb{G}}_F(u)d{\mathbb{G}}_F(v)\\
            & -t_2\int h_{l_2}(u,v)d{\mathbb{G}}_F(u)d{\mathbb{G}}^\prime_F(v)-t_3\int h_{l_2}(u,v)d{\mathbb{G}}^\prime_F(u)d{\mathbb{G}}^\prime_F(v)\Big|\\
            \leq ~~~& |t_1|~\E\Big|\int h_{l_1}(u,v)d{\mathbb{G}}_F(u)d{\mathbb{G}}_F(v)-\int h_{l_2}(u,v)d{\mathbb{G}}_F(u)d{\mathbb{G}}_F(v)\Big|\\
            & +|t_2|~\E\Big|\int h_{l_1}(u,v)d{\mathbb{G}}_F(u)d{\mathbb{G}}^\prime_F(v)-\int h_{l_2}(u,v)d{\mathbb{G}}_F(u)d{\mathbb{G}}^\prime_F(v)\Big|\\
            & +|t_3|~\E\Big|\int h_{l_1}(u,v)d{\mathbb{G}}^\prime_F(u)d{\mathbb{G}}^\prime_F(v)-\int h_{l_2}(u,v)d{\mathbb{G}}^\prime_F(u)d{\mathbb{G}}^\prime_F(v)\Big|.\\
        \end{split}
    \end{equation*}
    Now using Lemma A.\ref{limit-vstat-1} and Lemma A.\ref{limit-vstat-2} we can say that the above random variable converges in distribution to 
        \begin{equation*}
        \begin{split}
            t_1\int h(u,v)d&\mathbb{G}_F(u)d\mathbb{G}_F(v)+t_2\int h(u,v)d\mathbb{G}_F(u)d\mathbb{G}'_F(v)+t_3\int h(u,v)d\mathbb{G}'_F(u)d\mathbb{G}'_F(v).
        \end{split}
    \end{equation*}

Hence by continuous mapping theorem under $H_0$, as $\min\{n,m\}$ goes to infinity with $n/(n+m)\to\lambda\in[0,1]$, $nm/(n+m)\zeta_{n,m}^\phi$ converges in distribution to $(1-\lambda)\int h(u,v)d\mathbb{G}_F(u)d\mathbb{G}_F(v)-2\sqrt{\lambda(1-\lambda)}\int h(u,v)d\mathbb{G}_F(u)d\mathbb{G}'_F(v)+\lambda\int h(u,v)d\mathbb{G}'_F(u)d\mathbb{G}'_F(v),$  where $\mathbb{G}_F$ and $\mathbb{G}_F^\prime$ are two independent Brownian Bridge processes. 

Since $h(.,.)$ is symmetric, using the Fredholm theory of integral equations we can also write $h(u,v) = \sum_{i=1}^\infty \lambda_i \varphi_i(u)\varphi_i(v)$, where $\{\lambda_i\}$ and $\{\varphi_i\}$ are the eigenvalues and eigenfunctions of the integral equation $\int h(u,v)\gamma(v)dF(v) = \lambda\gamma(u)$, where the equality holds in the $L_2$ sense. Then the limiting distribution of $nm/(n+m)\hat\zeta_{n,m}^\phi$ can be written as $\sum_{i=1}^\infty \lambda_i\big(\sqrt{1-\lambda}\int \varphi_i(u)d\mathbb{G}_F(u)-\sqrt{\lambda}\int \varphi_i(u)d\mathbb{G}_F^\prime(u)\big)^2$, which is identically distributed as $\sum_{i=1}^\infty \lambda_i Z_i^2$ where $\{Z_i\}$ is a sequence of i.i.d. standard normal random variables. This completes the proof.

\end{enumerate}

\end{proof}

\begin{proof}[\bf Proof of Corollary \ref{consistency}]

This follows easily from Theorem \ref{largesampleres}.
\end{proof}

\begin{proof}[\bf Proof of Theorem \ref{local-limit-per}]
Let $\mathcal{U} = \{U_1=X_1,\ldots, U_n=X_n,U_{n+1}=Y_1,\ldots, U_{N}=Y_m\}$ ($N=n+m$) be the pooled data and $\pi$ be a random permutation of $\{1,2,\ldots,n\}$ independent of the observed data and define the two-sample permutation empirical measure as,
    $$\Tilde{P}_{n,N} = \frac{1}{n}\sum_{i=1}^n\delta_{U_{\pi(i)}}\hspace{20pt}\Tilde{Q}_{m,N} = \frac{1}{m}\sum_{i=n+1}^N\delta_{U_{\pi(i)}}.$$
    Define the pooled empirical measures as
    $$H_N = \frac{1}{N}\sum_{i=1}^N \delta_{U_{i}}.$$

    Then using Theorem 3.7.1 from \cite{wellner2013weak} we have that over a suitable class of functions $\mathcal{F}$, $\sqrt{n}(\Tilde{P}_{n,N}-H_N)$ converges in distribution to $\sqrt{1-\lambda}\mathbb{G}_H$ where $H=\lambda F+(1-\lambda) G$ and $\lim n/N = \lambda$. Since $\sqrt{m}(\Tilde{Q}_{m,N}-H_N) = -\sqrt{n/m}\sqrt{n}(\Tilde{P}_{n,N}-H_N)$, the distributional convergence of $\sqrt{m}(\Tilde{Q}_{m,N}-H_N)$ follows trivially.

    Now applying arguments as in Theorem \ref{largesampleres} (b) we get that for any fixed alternative $nm/(n+m)\hat\zeta_{n,m}^{\phi,\pi}$ converges in distribution to $(1-\lambda)^2\int h(u,v)d\mathbb{G}_H(u)d\mathbb{G}_H(v)+2\lambda(1-\lambda)\int h(u,v)d\mathbb{G}_H(u)d\mathbb{G}_H(v)+\lambda^2\int h(u,v)d\mathbb{G}_H(u)d\mathbb{G}_H(v),$ as $\min\{n,m\}$ diverges to infinity with $n/N$ converging to $\lambda\in [0,1]$, which is same as $\int h(u,v)d\mathbb{G}_H(u)d\mathbb{G}_H(v)$. The limiting distribution has the same distribution as $\sum_{k=1}^\infty\lambda_kZ_k^2$ where $\{\lambda_k\}$ is a square-integrable sequence of real numbers and $\{Z_k\}$ is an i.i.d. sequence of standard normal random variables.

\end{proof}

\begin{proof}[\bf Proof of Proposition \ref{MCpval}]

For proving this let us first define,
$$F(t) = \frac{1}{N!}\left\{\sum_{\pi\in \mathcal{S}_N}\mathbbm{1}[\hat\zeta_{\pi_i}\leq t]\right\}\hspace{20pt}and\hspace{20pt}F_B(t) = \frac{1}{B}\left\{\sum_{i=1}^B\mathbbm{1}[\hat\zeta_{\pi_i}\leq t]\right\}.$$

$F$ and $F_B$ are distribution functions conditioned on the observed pooled data $\mathcal{U}$. Then,
\begin{equation*}
    \begin{split}
        |p_{n,m}-p_{n,m,B}| & = |\frac{1}{N!}\left\{\sum_{\pi\in \mathcal{S}_N}\mathbbm{1}[\hat\zeta_{\pi_i}\geq \hat\zeta_{n,m}]\right\}-\frac{1}{B+1}\left\{\sum_{i=1}^B\mathbbm{1}[\hat\zeta_{\pi_i}\geq \hat\zeta_{n,m}]+1\right\}|\\
        & = |\frac{1}{N!}\left\{\sum_{\pi\in \mathcal{S}_N}\mathbbm{1}[\hat\zeta_{\pi_i}< \hat\zeta_{n,m}]\right\}-\frac{1}{B+1}\left\{\sum_{i=1}^B\mathbbm{1}[\hat\zeta_{\pi_i}< \hat\zeta_{n,m}]\right\}|\\
        &  = |F(\hat\zeta_{n,m}-)-\frac{B}{B+1}F_B(\hat\zeta_{n,m}-)|\\
        & \leq |F(\hat\zeta_{n,m}-)-F_B(\hat\zeta_{n,m}-)| + |\frac{F_B(\hat\zeta_{n,m}-)}{B+1}| \leq \sup_{t\in\R}|F(t)-F_B(t)|+\frac{1}{B+1}
    \end{split}
\end{equation*}

Conditioned on the pooled data $\mathcal{U}$, the Dvoretzky-Keifer-Wolfwitz inequality (\cite{massart1990tight}) gives us,
$\P\{\sup_{t\in\R}|F(t)-F_B(t)|>\epsilon\}\leq 2e^{-2B\epsilon^2}.$ Hence, conditioned on the pooled data $\mathcal{U}$, as $B$ grows to infinity the randomized p-value $p_{n,m,B}$ converges almost surely to $p_{n,m}$.
\end{proof}

\begin{proof}[\bf Proof of Theorem \ref{lanfd}]
    Under Assumption (A1) it is easy to see that $F^{(N)} = (1-\alpha/\sqrt{N})F+\alpha/\sqrt{N}L$ has the Radon-Nikodym derivative as $\Big(1+\frac{\alpha}{\sqrt{N}}\big(\ell(z)-1\big)\Big)$ with respect to $F$. Hence if $Z_1,Z_2,\ldots,Z_N$ are independent and identically generated observations, then the log-likelihood ratio is given by,
    $$L_N = \log\Big\{\prod_{i=1}^N\frac{dF^{(N)}}{dF}(Z_i)\Big\} = \sum_{i=1}^N \log\Big\{\frac{dF^{(N)}}{dF}(Z_i)\Big\} = \sum_{i=1}^N \log\Big(1+\frac{\alpha}{\sqrt{N}}\big(\ell(Z_i)-1\big)\Big).$$

    Using the fact,
    $$\log(1+y) = y - \frac{y^2}{2}+\frac{1}{2}y^2\beta(y)~~~~\mbox{where}~~\lim_{y\to0}\beta(y)=0,$$

    gives us 
    $$L_N = \sum_{i=1}^N \frac{\alpha}{\sqrt{N}}\big(\ell(Z_i)-1\big)-\sum_{i=1}^N \frac{\alpha^2}{2N}\big(\ell(Z_i)-1\big)^2+\sum_{i=1}^N \frac{\alpha^2}{2N}\big(\ell(Z_i)-1\big)^2\beta\Big(\frac{\alpha}{\sqrt{N}}\big(\ell(Z_i)-1\big)\Big).$$

    Under Assumption (A1),
    $$\sum_{i=1}^N \frac{\alpha^2}{N}\big(\ell(Z_i)-1\big)^2\stackrel{a.s.}{\to} \alpha^2 \E\Big(\big(\ell(Z_1)-1\big)^2\Big)$$
    as $N$ grows to infinity under $F$. Hence we only need to show that 
    $$\sum_{i=1}^N \frac{\alpha^2}{N}\big(\ell(Z_i)-1\big)^2\beta\Big(\frac{\alpha}{\sqrt{N}}\big(\ell(Z_i)-1\big)\Big)$$
    converges to zero in probability under $F$. Notice that,
    $$\sum_{i=1}^N \frac{\alpha^2}{N}\big(\ell(Z_i)-1\big)^2\beta\Big(\frac{\alpha}{\sqrt{N}}\big(\ell(Z_i)-1\big)\Big) \leq \max_{1\leq i\leq N}|\beta\Big(\frac{\alpha}{\sqrt{N}}\big(\ell(Z_i)-1\big)\Big)|\sum_{i=1}^N \frac{\alpha^2}{N}\big(\ell(Z_i)-1\big)^2.$$
    Due to Assumption (A1) it suffices to show that $\max_{1\leq i\leq N}|\beta\Big(\frac{\alpha}{\sqrt{N}}\big(\ell(Z_i)-1\big)\Big)|$ converges to zero in probability, which follows if $\max_{1\leq i\leq N}|\frac{\alpha}{\sqrt{N}}\big(\ell(Z_i)-1\big)|$ converges to zero in probability (as $\lim_{y\to0} \beta(y) = 0$). But $\max_{1\leq i\leq N}|\ell(Z_i)-1|$ is a tight random variable, thus the convergence holds trivially. Hence we have,
    $$\left|\log\Big\{\prod_{i=1}^N\frac{dF^{(N)}}{dF}(Z_i)\Big\}-\frac{\alpha}{\sqrt{N}}\sum_{i=1}^N \Big(\ell(Z_i)-1\Big)+\frac{\alpha^2}{2}\E\Big\{\ell(Z_1)-1\Big\}^2\right|\to0,$$
    in probability under $F$ as $N$ goes to infinity.
\end{proof}

\begin{lemmaA}
Under Assumption (A1) and $\delta_N = \alpha/\sqrt{N}$, the process $\{\sqrt{n}\int f(u)\big(\hat F_n-F\big)(u)\mid f\in\mathcal{F}\}$ converges in distribution to the process $\{\int \Tilde{f} d\big(\mathbb{G}_F+\alpha(L-F)\big)\mid \Tilde{f}(u)=f(u)-\E_Ff(Z), f\in \mathcal{F}\}$ in $\ell^\infty(\mathcal{F})$ where $\mathcal{F}$ is a Donsker class of measurable functions with $\sup_{f\in \mathcal{F}}|Pf|<\infty$ and $\mathbb{G}_F$ is the $F$-Brownian Bridge process on $\ell^\infty(\mathcal{F})$.
\label{emp-local}
\end{lemmaA}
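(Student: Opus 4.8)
The plan is to prove this by transferring the weak convergence of the empirical process from the null distribution $F$ to the contiguous sequence $F^{(n)} = (1-\alpha/\sqrt{n})F + (\alpha/\sqrt{n})L$, under which the observations $Z_1,\dots,Z_n$ generating $\hat F_n$ are now drawn (taking $N=n$ in Theorem \ref{lanfd}). The two ingredients are the $F$-Donsker property of $\mathcal{F}$ and the local asymptotic normality of Theorem \ref{lanfd}, which are combined through Le Cam's third lemma. Throughout I write $\mathbb{G}_n f := \sqrt{n}\int f\,d(\hat F_n - F)$ and note that $\mathbb{G}_n f = \mathbb{G}_n \tilde f$, since $\sqrt{n}(\hat F_n - F)(1)=0$ makes the recentering by $\E_F f$ harmless.

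First I would record the behaviour under $F$. If the $Z_i$ were i.i.d.\ $F$, then because $\mathcal{F}$ is $F$-Donsker, $\mathbb{G}_n \Rightarrow \mathbb{G}_F$ in $\ell^\infty(\mathcal{F})$. By Theorem \ref{lanfd}, the log-likelihood ratio obeys, under $F$,
\[
L_n = \frac{\alpha}{\sqrt{n}}\sum_{i=1}^n\bigl(\ell(Z_i)-1\bigr) - \frac{\alpha^2}{2}\E_F\bigl(\ell(Z_1)-1\bigr)^2 + o_P(1) = \alpha\,\mathbb{G}_n(\ell-1) - \frac{\alpha^2}{2}\E_F(\ell-1)^2 + o_P(1),
\]
using $\E_F(\ell-1)=0$; by Le Cam's first lemma this yields contiguity of $F^{(n)}$ with respect to $F$. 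Assumption (A1) guarantees $\ell-1\in L_2(F)$, so the singleton $\{\ell-1\}$ is Donsker and the finite union $\mathcal{F}\cup\{\ell-1\}$ remains $F$-Donsker. Consequently $(\mathbb{G}_n,\mathbb{G}_n(\ell-1))$ converges jointly in $\ell^\infty(\mathcal{F})\times\R$ to $(\mathbb{G}_F,\mathbb{G}_F(\ell-1))$, and adding the deterministic term gives $(\mathbb{G}_n, L_n)\Rightarrow(\mathbb{G}_F,\ \alpha\mathbb{G}_F(\ell-1)-\tfrac{\alpha^2}{2}\E_F(\ell-1)^2)$ under $F$.

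The next step reads off the drift. For each $f$ the limiting covariance is $\mathrm{Cov}(\mathbb{G}_F\tilde f,\ \alpha\mathbb{G}_F(\ell-1)) = \alpha\int\tilde f(\ell-1)\,dF = \alpha\int\tilde f\,d(L-F)$, where I used $\int\tilde f\,\ell\,dF=\int\tilde f\,dL$ and $\int\tilde f\,dF=0$. Le Cam's third lemma in its $\ell^\infty(\mathcal{F})$ form \citep{wellner2013weak} then shows that under $F^{(n)}$ the process $\mathbb{G}_n$ converges weakly to the law of $\mathbb{G}_F$ shifted by the deterministic mean function $f\mapsto\alpha\int\tilde f\,d(L-F)$, that is to $\int\tilde f\,d(\mathbb{G}_F+\alpha(L-F))$, which is exactly the asserted limit. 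As a consistency check, the same drift arises from the exact decomposition $\sqrt{n}(\hat F_n - F) = \sqrt{n}(\hat F_n - F^{(n)}) + \alpha(L-F)$, whose second summand is deterministic and whose first summand is the empirical process of $F^{(n)}$-data centred at its own mean.

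I expect the main obstacle to be upgrading finite-dimensional convergence to full weak convergence in $\ell^\infty(\mathcal{F})$ under the moving measure $F^{(n)}$. The finite-dimensional marginals are delivered directly by Le Cam's third lemma together with the joint central limit theorem above, but one must still establish asymptotic tightness (equicontinuity) of $\mathbb{G}_n$ under $F^{(n)}$. This I would obtain from the $F$-Donsker property, which furnishes asymptotic equicontinuity under $F$ as an $o_P(1)$ statement over shrinking brackets, and then invoke contiguity (Le Cam's first lemma): any array that is $o_P(1)$ under $F$ is also $o_P(1)$ under $F^{(n)}$, so tightness transfers. Combining uniform tightness with the identified marginals gives convergence in $\ell^\infty(\mathcal{F})$, completing the argument.
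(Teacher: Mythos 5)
Your proposal is correct and follows essentially the same route as the paper: joint weak convergence of the empirical process and the log-likelihood ratio under $F$ (via the Donsker property and Theorem \ref{lanfd}), identification of the drift $\alpha\int\tilde f\,d(L-F)$ through the limiting covariance, transfer to $F^{(N)}$ by Le Cam's third lemma, and tightness under the contiguous sequence handled exactly as the paper does by appeal to the contiguity argument underlying Theorem 3.10.12 of \cite{wellner2013weak}. Your treatment of the equicontinuity transfer is in fact slightly more explicit than the paper's, which simply cites that theorem.
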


\begin{proof}[\bf Proof of Lemma A.\ref{emp-local}]
    Here we only show the finite-dimensional distribution convergence of the empirical process. The tightness of the process under $F^{(N)}$ can be concluded using Theorem 3.10.12 from \cite{wellner2013weak}. Let $Z_1,Z_2,\ldots,Z_N$ be a sequence of independent and identically distributed random variables. Let $F^{(N)} = (1-\alpha/\sqrt{N})F+\alpha/\sqrt{N}L$ where $L$ satisfies Assumption (A1) and the empirical probability distribution be denoted by $\hat F_N$. Take any $f\in \mathcal{F}$ with $\E_F f^2(Z)<\infty$ and note that the joint limiting distribution of $\int f(u)\sqrt{N}d(\hat F_N-F)(u)$ and $\log \big\{\prod_{i=1}^N dF^{(N)}/dF(Z_i)\big\}$ is same as the joint limiting distribution of $\int f(u)\sqrt{N}(\hat F_N-F)(u)$ and $\frac{\alpha}{\sqrt{N}}\sum_{i=1}^N \big(\ell(Z_i)-1\big)-\frac{\alpha^2}{2}\E\big\{\ell(Z_1)-1\big\}^2$ (by Theorem \ref{lanfd} and Slutsky's theorem), which is a bivariate normal distribution with mean and variance-covariance matrix as follows,
    $$\mu = \begin{pmatrix}
    0\\
    -\frac{\alpha^2}{2}\E\big\{\ell(Z_1)-1\big\}^2
    \end{pmatrix}\hspace{0.5in}
    \Sigma = \begin{pmatrix}
    \int \Tilde{f}^2(u)dF(u) & \tau \\
    \tau & \alpha^2\E\big\{\ell(Z_1)-1\big\}^2
    \end{pmatrix},$$
    where $\tau = \alpha\big(\int \Tilde{f}(u)dL(u)\big)$ and $\Tilde{f}(u) = f(u)-\E_Ff(Z)$. According to Le Cam's third lemma, this implies that under $F^{(N)}$, $\int f(u)\sqrt{N}d(\hat F_N-F)(u)$ converges in distribution to a normal distribution with mean $\tau = \alpha\big(\int \Tilde{f}(u)dL(u)\big)$ and variance $\int \Tilde{f}^2(u)dF(u)$. Using Cramer-Wold device one can further show that, under $F^{(N)}$ the finite dimensional distributions of $\big\{\int f(u)\sqrt{N}(\hat F_N-F)(u)\mid f\in\mathcal{F}\big\}$ converges to the finite dimensional distributions of the process $\big\{\int f(u)d\big(\mathbb{G}_F+\alpha(L-F)\big)(u)\mid f\in \mathcal{F}\big\}$. Hence, under the contiguous alternative $F^{(N)}$, the empirical process converges in distribution to the process $\big\{\int f(u)d\big(\mathbb{G}_F+\alpha(L-F)\big)(u)\mid f\in \mathcal{F}\big\}$ where $\mathcal{F}$ is a $F$- Donsker class of functions.  
\end{proof}

\begin{proof}[\bf Proof of Theorem \ref{local-limit}]
    Applying Theorem \ref{emp-local} we get that under $(F^{(n)},G^{(m)})$ the empirical processes $\sqrt{n}(\hat F_n-F)$ and $\sqrt{m}(\hat G_m-F)$ converges in distribution to the processes $\mathbb{G}_F$ and $\mathbb{B}'_F = \mathbb{G}'_F+\delta(L-F)$ respectively, where $\mathbb{G}_F$ and $\mathbb{G}_F'$ are independent Brownian Bridge processes. Now note that if $1=f_0,f_1,f_2,\ldots$ are orthonormal basis of $L_2(F)$ then, it is easy to see $\int f_i(u)d\mathbb{G}_F(u)+\delta\int f_i(u) dL(u)$ has a normal distribution with mean $\delta\int f_i(u) dL(u)$ and variance unity. Also,
    $$|\int f_i(u) d L(u)|=|\int f_i(u)\ell(u) dF(u)|\leq \big(\int \ell^2(u)dF(u)\big)^{1/2}.$$
    Therefore, $\max_i \E\Big|\int f_i(u)d\mathbb{B}'_F(u)\Big| \leq \max_i\E\Big|\int f_i(u)d\mathbb{G}'_F(u)\Big|+|\delta|\big(\int \ell^2(u)dF(u)\big)^{1/2}<\infty.$ Using arguments similar to Theorem \ref{largesampleres} (b), gives us that under $(F^{(n)},G^{(m)})$ $nm/(n+m)\hat\zeta_{n,m}$ converges in distribution to $\sum_{i=1}^\infty \lambda_i\big(\sqrt{1-\lambda}\int \varphi_i(u)d\mathbb{G}_F(u)-\sqrt{\lambda}\int \varphi_i(u)d\mathbb{B}_F^\prime(u)\big)^2$ where $\lim n/(n+m) = \lambda$. Then it is easy to show that the limiting distribution is identically distributed with $\sum_{i=1}^\infty\lambda_i\big(Z_i-\sqrt{\lambda}\delta\int \varphi_i(u)d(L-F)(u)\big)^2$, where $\{\lambda_i\},\{Z_i\}$ and $\{\varphi_i\}$ are as in the proof of Theorem \ref{largesampleres} (b). 
\end{proof}

\begin{proof}[\bf Proof of Theorem \ref{local-limit-per-2}]

    Under the contiguous alternative $(F^{(n)}, G^{(m)})$ (as in Theorem \ref{local-limit}), for any $f$ with finite $\int f^2(u)dF(u)$ we need to find the joint limiting distribution of $(\frac{1}{n}\sum_{i=1}^n f(U_{\pi(i)})-\frac{1}{N}\sum_{i=1}^N f(U_{i}))$ and $\frac{\alpha}{N}\sum_{i=1}^N \big(\ell(U_i)-1\big)-\frac{\alpha^2}{2}\E\big\{\ell(U_1)-1\big\}^2$ assuming $U_i\stackrel{i.i.d.}{\sim} F$. Note that,
    \begin{equation*}
        \begin{split}
    & \begin{pmatrix}
    \big(\frac{1}{n}\sum_{i=1}^n f(U_{\pi(i)})-\frac{1}{N}\sum_{i=1}^N f(U_{i})\big)\\
    \frac{\alpha}{N}\sum_{i=1}^N \big(\ell(U_i)-1\big)
    \end{pmatrix} 
    \stackrel{d}{=} 
    \begin{pmatrix}
    \big(\frac{1}{n}\sum_{i=1}^n f(U_{i})-\frac{1}{N}\sum_{i=1}^N f(U_{i})\big)\\
    \frac{\alpha}{N}\sum_{i=1}^N \big(\ell(U_i)-1\big)
    \end{pmatrix}\\
    & = \begin{pmatrix}
        1 & -1 & 0 \\
        0 & 0 & 1
    \end{pmatrix}
    \begin{pmatrix}
        1 & 0 & 0 \\
        \frac{n}{N} & \frac{m}{N} & 0\\
        0 & 0 & 1
    \end{pmatrix}\begin{pmatrix}
        \frac{1}{n}\sum_{i=1}^n \big(f(U_{i}) -\E f(U_1) \big)\\
        \frac{1}{m}\sum_{i=n+1}^N \big(f(U_{i}) -\E f(U_1) \big)\\
        \frac{\alpha}{N}\sum_{i=1}^N \big(\ell(U_i)-1\big)
    \end{pmatrix}\\
    & = \begin{pmatrix}
        1 & -1 & 0 \\
        0 & 0 & 1
    \end{pmatrix}\begin{pmatrix}
        1 & 0 & 0 & 0\\
        \frac{n}{N} & \frac{m}{N} & 0 & 0\\
        0 & 0 & 1 & 1\\
    \end{pmatrix}\begin{pmatrix}
        \frac{1}{n}\sum_{i=1}^n \big(f(U_{i}) -\E f(U_1) \big)\\
        \frac{1}{m}\sum_{i=n+1}^N \big(f(U_{i}) -\E f(U_1) \big)\\
        \frac{\alpha}{N}\sum_{i=1}^n \big(\ell(U_i)-1\big)\\
        \frac{\alpha}{N}\sum_{i=n+1}^N \big(\ell(U_i)-1\big)
    \end{pmatrix}\\
    & \stackrel{d}{=} \begin{pmatrix}
        1 & -1 & 0 \\
        0 & 0 & 1
    \end{pmatrix}\begin{pmatrix}
        1 & 0 & 0 & 0\\
        \frac{n}{N} & \frac{m}{N} & 0 & 0\\
        0 & 0 & 1 & 1\\
    \end{pmatrix}\begin{pmatrix}
        \frac{1}{n}\sum_{i=1}^n \big(f(U_{i}) -\E f(U_1) \big)\\
        \frac{1}{m}\sum_{i=1}^m \big(f(U_{i}^\prime) -\E f(U_1) \big)\\
        \frac{\alpha}{N}\sum_{i=1}^n \big(\ell(U_i)-1\big)\\
        \frac{\alpha}{N}\sum_{i=1}^m \big(\ell(U_i^\prime)-1\big)
    \end{pmatrix}\\
    & =: \begin{pmatrix}
        1 & -1 & 0 \\
        0 & 0 & 1
    \end{pmatrix} \begin{pmatrix}
        1 & 0 & 0 & 0\\
        \frac{n}{N} & \frac{m}{N} & 0 & 0\\
        0 & 0 & 1 & 1\\
    \end{pmatrix}\begin{pmatrix}
        W_{n1}\\
        W_{n2}\\
        W_{n3}\\
        W_{n4}\\
    \end{pmatrix}\\
        \end{split}
    \end{equation*}
    where $U_1^\prime,U_2^\prime,\ldots,U_m^\prime$ is an i.i.d sample from $F$ independent of $U_1,U_2,\ldots,U_n$. Applying multivariate CLT and continuous mapping theorem we get that $\sqrt{n}\Big(\big(\frac{1}{n}\sum_{i=1}^n f(U_{\pi(i)})-\frac{1}{N}\sum_{i=1}^N f(U_{i})\big),$ $ \frac{\alpha}{N}\sum_{i=1}^N \big(\ell(U_i)-1\big)\Big)$ converges in distribution to $\big((1-\lambda)W_1-\sqrt{\lambda(1-\lambda)} W_2, \lambda W_3 + \sqrt{\lambda(1-\lambda)}W_4)\big)$, where 
    $(W_1,W_2,W_3,W_4)$ has a multivariate normal distribution with zero mean and variance-covariance matrix as follows,
    $$\begin{pmatrix}
        \int \Tilde{f}^2(u) dF(u) & 0 & \alpha\int \Tilde{f}(u)dL(u) & 0\\
        0 & \int \Tilde{f}^2(u) dF(u) & 0 & \alpha\int \Tilde{f}(u)dL(u)\\
        \alpha\int \Tilde{f}(u)dL(u) & 0 & \alpha^2 \int (\ell(u)-1)^2dF(u) & 0\\
        0 & \alpha\int \Tilde{f}(u)dL(u) & 0 & \alpha^2 \int (\ell(u)-1)^2dF(u)\\
    \end{pmatrix},$$
    where $\Tilde{f}(u)=f(u)-\E_F f(U)$. Hence, the random vector $\sqrt{n}\Big(\big(\frac{1}{n}\sum_{i=1}^n f(U_{\pi(i)})-\frac{1}{N}\sum_{i=1}^N f(U_{i})\big),$ $ \frac{\alpha}{N}\sum_{i=1}^N \big(\ell(U_i)-1\big)\Big)$ converges in distribution to a normal random variable with mean zero and variance-covariance matrix 
    $$\begin{pmatrix}
        (1-\lambda)\int \Tilde{f}^2(u)dF(u) & 0 \\
        0 & \lambda \alpha^2 \int \big(\ell(u)-1\big)^2 dF(u)\\
    \end{pmatrix}.$$ 
    
    where $\Tilde{f}(u) = f(u)-\E_Hf(U)$. Hence, using Le Cam's third lemma and Cramer-Wold device we can say that the finite-dimensional distribution of the process $\sqrt{n}(\Tilde{P}_{n,N}-H_N)$ converges in distribution to the finite-dimensional distribution of the process $\sqrt{1-\lambda}\mathbb{G}_F$ where $\Tilde{P}_{n,N}$ is the permutation empirical measure as defined in the proof of Theorem \ref{local-limit-per}. The tightness of the process follows from Theorem 3.10.12 of \cite{wellner2013weak}. Now applying arguments as in Theorem \ref{largesampleres} (b) we get that under $(F^{(n)}, G^{(m)})$, $nm/(n+m)\hat\zeta_{n,m}^{\phi,\pi}$ converges in distribution to $\sum_{k=1}^\infty\lambda_kZ_k^2$ for some square integrable sequence $\{\lambda_i\}$ and a sequence of independent standard normal random variable $\{Z_i\}$. This completes the proof.
\end{proof}
\end{appendix}

\end{document}